\newtheorem{theorem}{Theorem}[section]
\newtheorem{proposition}[theorem]{Proposition}
\newtheorem{remark}[theorem]{Remark}
\newtheorem{lemma}[theorem]{Lemma}
\newtheorem{definition}[theorem]{Definition}
\newcommand\1{{\mathds 1}}
\def\C{{\mathbb C}}
\def\bbI{{\mathbb I}}
\def\N{{\mathbb N}}
\def\R{{\mathbb R}}
\def\SS{{\mathbb S}}
\def\TT{{\mathbb T}}
\def\Z{{\mathbb Z}}
\def\bc{{\mathbf c}}
\def\bs{{\mathbf s}}
\def\bu{{\mathbf u}}
\def\bone{{\mathbf 1}}
\def\bsigma{{\boldsymbol \sigma}}
\def\rd{{\mathrm{d}}}
\def\re{{\mathrm{e}}}
\def\ri{{\mathrm{i}}}
\def\cB{{\mathcal B}}
\def\cD{{\mathcal D}}
\def\cI{{\mathcal I}}
\def\cL{{\mathcal L}}
\def\cM{{\mathcal M}}
\def\cS{{\mathcal S}}
\def\cZ{{\mathcal Z}}
\newcommand{\sC}{\mathscr{C}}
\newcommand{\loc}{{\rm loc}}
\newcommand{\Tr}{\rm Tr \,}
\newcommand{\ess}{{\rm ess}}
\newcommand{\per}{{\rm per}}
\newcommand{\Ch}{{\rm Ch}}
\newcommand{\Span}{{\rm Span}}
\renewcommand{\Re}{{\rm Re}}
\newcommand{\bra}{\langle}
\newcommand{\ket}{\rangle}
\newcommand{\U}{{\rm U}}
\newcommand{\Ran}{{\rm Ran}}
\newcommand{\up}{\uparrow}
\newcommand{\down}{\downarrow}
\author{David Gontier}
\address{CEREMADE, University of Paris-Dauphine, PSL University, 75016 Paris, France}
\email{gontier@ceremade.dauphine.fr}
\title[Edge states in ODEs for dislocations]{Edge states in Ordinary Differential Equations for dislocations}
\date{\today}
\begin{document}
    
    \begin{abstract}
        In this article, we study Schrödinger operators on the real line, when the external potential represents a dislocation in a periodic medium. We study how the spectrum varies with the dislocation parameter. We introduce several integer-valued indices, including Chern number for bulk indices, and various spectral flows for edge indices. We prove that all these indices coincide, providing a proof a bulk-edge correspondence in this case. The study is also made for dislocations in Dirac models on the real line. We prove that 0 is always an eigenvalue of such operators.
    \end{abstract}

\maketitle


\section{Introduction}

In material science, bulk-edge correspondence enables to link integer-valued indices computed in the bulk (material on a full space), with indices computed on an edge (material on a half-space). The first proof that the two indices coincide appears in the work of Hatsugai~\cite{hatsugai1993chern}. There is now a variety of proofs on different contexts, using complex theory and/or $K$-theory (see {\em e.g.}~\cite{prodan2016bulk}). In the present article, we provide an elementary proof of bulk-edge correspondence in a simple continuous one dimensional setting in the context of dislocations. The study is done for Schrödinger and Dirac models.

\subsection*{Main results in the Schrödinger case}
In the first part of this article, we focus on three families of self-adjoint Schrödinger Hamiltonian: 
\begin{itemize}
    \item the bulk Hamiltonian
    \[
    H(t) := - \partial_{xx}^2 + V(x - t), \quad \text{acting on $L^2(\R)$, with domain $H^2(\R)$};
    \]
    \item the domain wall edge Hamiltonian
    \[
        H^\sharp_\chi(t) = H(0) \chi + H(t) (1 - \chi)  \quad
        \text{acting on $L^2(\R)$, with domain $H^2(\R)$};
    \]
    \item the Dirichlet edge Hamiltonian
    \[
        H^\sharp_D(t) := - \partial_{xx}^2 + V(x - t), \quad \text{acting on $L^2(\R^+)$, with domain $H^2_0(\R^+)$}.
    \]
\end{itemize}
Here, $V(x)$ is a real-valued $1$-periodic potential, and $\chi(x)$ is a bounded switch function satisfying $\chi(x) = 1$ for $x < -L$ and $\chi(x) = 0$ for $x > L$, with $L> 0$ large enough. 

\medskip

The operator $H^\sharp_\chi(t)$ describes a dislocation between a fixed left Hamiltonian $H(0)$ and a translated version of it on the right $H(t)$. This model was studied in~\cite{korotyaev2000lattice, korotyaev2005schrodinger} in the case $\chi(x) = \1(x < 0)$ using complex analysis, and more recently in~\cite{fefferman2017topologically, drouot2018defect, drouot2018bulk} in a perturbative regime. Here, we present a topological approach, which leads to similar conclusions to these articles, in a slightly more general setting.

The main goal of this article is to introduce several indices, and prove that they are all equal. We sum up here the different indices that we introduce. Since $V$ is $1$-periodic, the operators $H(t)$, $H^\sharp_\chi(t)$ and $H^\sharp_D(t)$ are all $1$-periodic in $t$. In addition, it is a classical result that their essential spectra coincide. Actually,
\[
    \sigma \left( H (t)\right) = \sigma_{\ess} \left( H (t)\right)   =  \sigma_{\ess} \left( H^\sharp_\chi (t)\right) =  \sigma_{\ess} \left( H^\sharp_D (t)\right) = \sigma \left( H (0)\right) = \bigcup_{n \ge 1} [E_n^-, E_n^+],
\]
where $E_1^- < E_1^+ \le E_2^- < E_2^+ \le E_3^- < E_3^+ \le \cdots$ are the band edges. For $n \ge 1$, the $n$-th essential gap is the open interval $g_n := (E_n^+, E_{n+1}^-)$, and the $0$-th gap is $g_0 := (-\infty, E_1^-)$. We say that the $n$-th gap is open if $E_n^+ < E_{n+1}^-$, and is empty otherwise. We assume that the $n$-th gap is open, and we consider $E \in g_n$ in this gap.

\medskip

\noindent \underline{\bf Bulk index}. In Section~\ref{ssec:bulk_index}, we treat the (bulk) equations $-u'' + V(x - t) u = E u$ as a $1$-periodic family of ordinary differential equations (ODEs) depending on $t$. For each $t \in \TT^1$, the vectorial space of solutions $\cL(t)$ is of dimension $2$. There is a natural splitting between the solutions: the ones that decay exponentially at $+ \infty$ (in $\cL^+(t)$), and the one that decays exponentially at $-\infty$ (in $\cL^-(t)$). The map $t \mapsto \cL^+(t)$ is $1$-periodic, and we associate to it a Maslov index $\cB_n \in \Z$.

\medskip

\noindent \underline{\bf Chern number}. In Section~\ref{ssec:Chern}, we focus on the operator $H(t)$, and we consider the projector on the $n$ lowest bands $P_n(t) := \1 (H(t) \le E)$. Since $P_n(t)$ commutes with translations, we can Bloch transform it, and obtain a family of rank-$n$ projectors $P_n(t,k)$ acting on $L^2([0,1])$, which is periodic in both $t$ and $k$. For such periodic family of projectors we associate a Chern number $\Ch (P_n) \in \Z$.

\medskip

\noindent \underline{\bf Edge index}. In Section~\ref{ssec:edgeIndex}, we see the (edge) equation $H^\sharp_\chi(t) u = Eu$ as a $1$-periodic family of ODEs. We introduce the vectorial spaces $\cL^{\sharp, \pm}_\chi(t)$ of solutions that decay at $\pm \infty$. These spaces may cross, and if $u \in \cL^{\sharp, +}_\chi(t) \cap \cL^{\sharp, -}_\chi(t)$, then $u$ is an eigenvector for $H_\chi^\sharp(t)$, that is an edge state. We associate to such bi-family of vectorial spaces an edge index $\cI^\sharp_{\chi,n} \in \Z$.

\medskip

\noindent \underline{\bf Domain wall spectral flow}. In Section~\ref{ssec:spectralFlowDW}, we focus on the edge operator $H_\chi^\sharp(t)$. Although its essential spectrum is independent of $t$, some eigenvalues may appear in the essential gaps. The spectral flow $\cS_{\chi, n}^\sharp \in \Z$ is the net flow of eigenvalues going {\em downwards} through the gap.

\medskip

\noindent \underline{\bf Dirichlet spectral flow}. Finally, in Section~\ref{ssec:spectralFlowDirichlet}, we consider the operator $H^\sharp_D(t)$. Again, its essential spectrum is independent of $t$, and some eigenvalues may appear in its essential gaps. We associate a spectral flow $\cS^\sharp_{D, n}$ to this family as well.

\begin{remark}
    As in~\cite{drouot2018bulk}, we chose the convention to count the flow of eigenvalues going {\em downwards} for the spectral flow. This allows a nicer statement of the following theorem.
\end{remark}

The main result of this article can be summarised as follows.

\begin{theorem}[Bulk-edge correspondence] \label{th:bulk-edge}
    If the $n$-th gap is open, then 
    \[
    \boxed{ \cB_n = \Ch(P_n) = \cI_{\chi,n}^\sharp = \cS_{\chi, n}^\sharp = \cS_{D, n}^\sharp = n. }
    \]
    In particular, the indices are independent of $\chi$ and of the choice of $E$ in the gap. In addition, all eigenvalues of $H^\sharp_\chi$ and $H^\sharp_D$ are simple, and the corresponding eigenstates are exponentially localised.
\end{theorem}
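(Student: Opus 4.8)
The plan is to prove the boxed equalities as a single chain and then to anchor the chain by computing one term explicitly. The natural order is to work from the two edge spectral flows inward through the edge and bulk geometric indices, identifying the common value $n$ only at the very end. I begin with the identification of the edge index with the spectral flow. The key observation is that $E$ is an eigenvalue of $H_\chi^\sharp(t)$ exactly when $\cL^{\sharp,+}_\chi(t) \cap \cL^{\sharp,-}_\chi(t) \neq \{0\}$, so the crossings defining the geometric index $\cI_{\chi,n}^\sharp$ are in bijection with the values of $t$ at which an eigenvalue sits precisely at $E$. To upgrade this to an equality of \emph{signed} counts I would compute the velocity $\partial_t$ of the crossing eigenvalue by a Hellmann--Feynman / first-order perturbation argument, and show its sign is dictated by the same Wronskian of the two decaying solutions that orients the Lagrangian intersection; arranging conventions so that a downward crossing counts positively (as in the Remark) then yields $\cI_{\chi,n}^\sharp = \cS_{\chi,n}^\sharp$.

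Next comes the reduction of the edge problems to the bulk loop $t \mapsto \cL^+(t)$, which is the heart of the correspondence. Since $\chi$ equals $1$ to the left of $-L$ and $0$ to the right of $+L$, a solution of $H_\chi^\sharp(t) u = E u$ decaying at $-\infty$ agrees on $\{x<-L\}$ with the decaying solution of the fixed left bulk $H(0)$, hence determines a point of $\cL^-(0)$, while a solution decaying at $+\infty$ determines a point of $\cL^+(t)$. Because $E$ lies in the open gap, bulk solutions decay exponentially at a uniform rate, so the transfer across the switch region $[-L,L]$ is an exponentially small perturbation of the bulk identification for $L$ large. Thus edge states occur exactly at intersections of the moving line $\cL^+(t)$ with the fixed reference $\cL^-(0)$; similarly, eigenvalues of the Dirichlet operator $H^\sharp_D(t)$ occur at intersections of $\cL^+(t)$ with the fixed Dirichlet line $\{u : u(0)=0\}$. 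In the projectivised (real, since $E$ is real in the gap) solution space $\mathbb{RP}^1 \cong S^1$, each edge index is the signed number of times the loop $t \mapsto \cL^+(t)$ passes through a fixed point; this is the winding number of the loop and is independent of the chosen reference, so all the edge quantities equal the Maslov index $\cB_n$. Homotopy invariance of this winding number simultaneously gives independence of $\chi$ and of the choice of $E$ within the gap.

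For the last equality I would perform a dimensional reduction. Using the translation covariance $H(t) = T_t H(0) T_{-t}$, the Bloch fibres $P_n(t,k)$ are explicitly related to those of $H(0)$, and the double integral defining $\Ch(P_n)$ over $\TT^2_{(t,k)}$ collapses, after integration in $k$, to the winding in $t$ of the gap-decaying eigendata, namely $\cB_n$. To evaluate the common value it is cleanest to compute $\cB_n$ directly: parametrising $\cL^+(t)$ by the Cauchy data $(u_+(-t), u_+'(-t))$ of a fixed decaying solution $u_+$ of $H(0)$, the Maslov index is the winding number in $\mathbb{RP}^1$ of this pair over one period, which by Sturm--Liouville rotation-number theory at an energy in the $n$-th gap equals $n$. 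Equivalently, $\Ch(P_n)=n$ expresses that filling the $n$ lowest bands pumps $n$ units of charge per period (a Thouless pump), and I would cross-check the two computations for consistency of orientation.

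The auxiliary claims follow along the way: an edge state corresponds to a one-dimensional intersection inside a two-dimensional solution space, so the eigenspace is at most one-dimensional and all eigenvalues are simple, and exponential localisation is immediate from the uniform decay rate of gap solutions at $\pm\infty$. I expect the genuine difficulty to lie in the Maslov-equals-Chern reduction: controlling the orientation through the $k$-integration and verifying that no boundary term is lost when the $t$-circle is closed up is where the overall sign and integer value of the chain are decided, and it is the step most prone to an off-by-a-sign or off-by-an-integer error. The large-$L$ localisation of the edge problem is the second delicate point, requiring uniform exponential estimates on the transfer across the switch region, but that is technical rather than conceptual.
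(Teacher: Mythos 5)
Your overall architecture — reading every index as a signed intersection count of the loop $t\mapsto\cL^+_t(E)$ in the projectivised solution space $\mathbb{RP}^1\cong\SS^1$, anchoring the chain by Sturm oscillation theory (the decaying gap solution has exactly $n$ zeros per period, i.e.\ it is the $n$-th Dirichlet eigenfunction), and converting geometric crossings into spectral-flow signs via Hellmann--Feynman plus an integration by parts — is exactly the paper's mechanism for $\cB_n$, $\cI^\sharp_{\chi,n}$, $\cS^\sharp_{\chi,n}$ and $\cS^\sharp_{D,n}$, as is your one-dimensional-intersection argument for simplicity. One step is misstated, though harmlessly: you invoke an ``exponentially small perturbation of the bulk identification for $L$ large'' to transfer across the switch region. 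No such asymptotics are needed (and the transfer matrix over $[-L,L]$ is not close to anything bulk; $L$ is a fixed finite number). The paper's point is that the winding of $t\mapsto\theta^{\sharp,\pm}_{\chi,t}(E,x)$ is independent of the evaluation point $x$, so $\cM^{\sharp,+}_\chi$ may be computed at any $x>L$, where it coincides \emph{exactly} with the bulk $\cM^+=\cB_n$, and $\cM^{\sharp,-}_\chi$ at any $x<-L$, where $\theta^{\sharp,-}$ is $t$-independent and hence has winding $0$; the identity $\cI^\sharp_{\chi,n}=\cM^{\sharp,+}_\chi-\cM^{\sharp,-}_\chi=\cB_n$ is exact, with no transfer estimate.

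The genuine gap is the Chern number. You propose to collapse the double integral $\iint_{\TT^2}\Tr(P_n\,\rd P_n\wedge\rd P_n)$ by integrating in $k$ down to ``the winding in $t$ of the gap-decaying eigendata, namely $\cB_n$'', and you yourself flag this as the step most likely to fail — rightly so: the Chern number of the Bloch bundle involves the full rank-$n$ projector on the filled bands and has no direct relation to the exponentially decaying gap solutions, so the asserted collapse is neither proved nor obviously true in the form stated. The paper does not attempt any Maslov-equals-Chern reduction. Instead it computes $\Ch(P_n)$ independently and directly: $\Ch(P_n)$ equals the winding of $\det U(k)$ where $U(k)=\Psi(1,k)^*\Psi(0,k)$ is the obstruction matrix of a frame on the cut torus; translation covariance $P_n(t,k)=\tau_tP_n(0,k)\tau_t^*$ lets one take $\Psi(t,k,x)=\Psi(0,k,x-t)$, and the Bloch quasi-periodicity $\Psi(0,k,x-1)=\re^{-2\ri\pi k}\Psi(0,k,x)$ forces $U(k)=\re^{2\ri\pi k}\,\bbI_n$, whose determinant winds exactly $n$ times. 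This two-line computation is what your proposal is missing; without it (or a genuine proof of your proposed $k$-collapse), the equality $\Ch(P_n)=n$ is not established.
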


This result is more or less already known: the proof of Hatsugai~\cite{hatsugai1993chern} in the discrete case can be used in our continuous setting to prove $\Ch(P_n) = \cS^\sharp_{D, n}$. The equality $\cB_n = \cI^\sharp_{D, n}$ was proved in a discrete setting in~\cite{avila2013topological}. Combining the two results gives $\Ch(P_n) = \cB_n$, a fact noticed in~\cite{avila2013topological} (we also refer to the new articles~\cite{bal2017topological, bal2018continuous} for a study in a continuous two-dimensional setting). 
Recently, Drouot~\cite{drouot2018bulk} proved $\Ch(P_n) = \cS^{\sharp}_{\chi, n}$ in the continuous setting, under the extra condition that the $n$-th gap does not close under a particular deformation. Finally, the equality $\cS_{\chi, n}^\sharp = n$ was proved in~\cite{korotyaev2000lattice} for the special case $\chi = \1(x < 0)$ (see also the short proof in~\cite{hempel2011variational}). 

\medskip

In this article, we give new and elementary proofs that all these indices equal $n$.

\begin{remark} [Junction case]
    Most of the techniques introduced here can be used in more complex settings, for instance to study the junction between two periodic media. In this case, we study
    \[
    H_\chi^\sharp(t) = H_1(t) \chi + H_2(t) (1 -\chi), \quad \text{with} \quad H_i(t) := - \partial_{xx}^2 + V_{i,t}(x),
    \]
    where $V_{i,t}$ are periodic potentials in $x$ (with possibly different periods), and $t \mapsto V_{i,t}$ are $1$-periodic. We do not comment more on this fact, as it makes the study more tedious.
\end{remark}

\subsection*{Main results in the Dirac case}

In a second part of the article, we show how the same techniques can be applied to study dislocations for Dirac operators. We focus on two families of self-adjoint Dirac operators:
\begin{itemize}
    \item the bulk Dirac operator
    \[
    \cD(t) := ( - \ri \partial_x) \bsigma_3 + \re^{ - \ri \pi t \bsigma_3} \left[V(x) \bsigma_1 \right]  \re^{ \ri \pi t \bsigma_3} \quad \text{acting on $L^2(\R, \C^2)$, with domain $H^1(\R, \C^2)$};
    \]
    \item the domain wall Dirac operator
    \[
    \cD_\chi^\sharp(t) := \cD(0) \chi + \cD(t) (1 - \chi) \quad \text{acting on $L^2(\R, \C^2)$, with domain $H^1(\R, \C^2)$}.
    \]
\end{itemize}
Here, we introduced $\bsigma_1, \bsigma_2$ and $\bsigma_3$ the usual $2 \times 2$ Pauli matrices, and the functions $V$ and $\chi$ are chosen as before, although our results hold in much more general cases, see Remark~\ref{rem:generalPotentials} below. We do not consider a Dirichlet version in the Dirac case, as the corresponding operator is not self-adjoint. The operator $\cD^\sharp_\chi(t)$ describes a dislocation between a fixed Dirac operator $\cD(0)$ on the left and a spin-rotated version of it $\cD(t)$ on the right. 

The operators $\cD(t)$ and $\cD^\sharp_\chi(t)$ are $1$-periodic in $t$. In addition, their essential spectra coincide. Actually, we have
\[
    \sigma ( \cD(t)) = \sigma_{\ess} \left( \cD(t) \right) = \sigma_{\ess} \left( \cD^\sharp_\chi(t) \right) = \sigma(\cD(0)).
\]
We consider $g \subset \R$ an open gap of $\cD(0)$ (we assume that there is at least one), and $E \in g$. 

\medskip

\noindent \underline{\bf Bulk index}. In Section~\ref{ssec:Dirac_bulkIndex}, we see the bulk equations $\cD(t) u = Eu$ as a $1$-periodic family of ODEs. The vectorial space of solutions $\cL(t)$ is again of dimension $2$ (over the field $\C$). We can consider $\cL^\pm(t)$ the sub-vectorial spaces of solutions that are exponentially decaying at $\pm \infty$. We associate a Maslov index $\cB \in \Z$ to the family $t \mapsto {\cL^+}(t)$.

\medskip

\noindent \underline{\bf Edge index}. In Section~\ref{ssec:Dirac_edgeIndex}, we see the edge equations $\cD^\sharp(t)u = Eu$ as a $1$-periodic family of ODEs. We have again a splitting ${\cL^{\sharp,+}_\chi}(t)$ and ${\cL^{\sharp,-}_\chi}(t)$, and these spaces may cross. If this happens, any $u$ in the intersection is an {\em edge state}. We associate to this bi-family an edge index ${\cI^\sharp_{\chi}} \in \Z$.

\medskip

\noindent \underline{\bf Spectral flow}. In Section~\ref{ssec:Dirac_spectralFlow} we consider the spectral flow of eigenvalues of $\cD^\sharp_\chi(t)$ going downwards in the gap $g$. We denote it by ${\cS^\sharp_\chi}$.

\begin{theorem}[Bulk-edge correspondence for Dirac] For all open gaps $g \subset \R$ of $\cD(0)$, we have
    \[
        \boxed{ \cB = \cI^\sharp_\chi = \cS^\sharp_\chi = 1. }
    \]
    In particular, the indices are independent of $\chi$ and of the choice of $E$ in the gap.  In addition, all eigenvalues of $\cD^\sharp_\chi$ are simple, and the corresponding eigenstates are exponentially localised.
\end{theorem}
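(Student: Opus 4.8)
The plan is to follow the three-step strategy of the Schrödinger case, but the Dirac setting affords a decisive simplification: the spin rotation defining $\cD(t)$ is a \emph{global conjugation}. Since $\bsigma_3$ commutes with $-\ri\partial_x$,
\[
    \cD(t) = U_t\,\cD(0)\,U_t^\ast, \qquad U_t := \re^{-\ri\pi t\bsigma_3} = \mathrm{diag}\!\big(\re^{-\ri\pi t},\,\re^{\ri\pi t}\big), \qquad U_0 = \bI,\quad U_1 = -\bI .
\]
Two structural facts drive the argument. First, the Dirac current $\langle u,\bsigma_3 v\rangle$ is conserved along solutions of $\cD(t)u = Eu$, since $\bsigma_3 A + A^\ast\bsigma_3 = 0$ for the coefficient matrix $A = \ri\bsigma_3(E - M)$ with $M$ the Hermitian mass term; hence any exponentially decaying solution is isotropic, and $\cL^\pm(t)$ lie in the circle of isotropic lines $\{\C\binom{\alpha}{\beta} : |\alpha| = |\beta|\}\cong S^1$. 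The relevant Maslov index is simply the winding number of a loop along this circle, and since an isotropic line has both components nonzero, no degeneracy can occur. Second, $U_t$ is unitary and commutes with $\bsigma_3$, so it preserves the current and acts on this circle by the relative phase $\re^{2\ri\pi t}$, that is, by exactly one full turn as $t$ runs over $[0,1]$.

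I would first establish $\cS^\sharp_\chi = \cI^\sharp_\chi$. A nonzero intersection $\cL^{\sharp,+}_\chi(t)\cap\cL^{\sharp,-}_\chi(t)$ means that the two isotropic lines coincide on the circle, which happens exactly when $E$ is an eigenvalue of $\cD^\sharp_\chi(t)$, i.e. when an edge state sits at energy $E$. Each coincidence is a transverse crossing of two loops in $S^1$, hence generic in the one-parameter family, and contributes $\pm1$ to $\cI^\sharp_\chi$. A first-order (Hadamard) computation identifies this sign with that of $\partial_t$ of the crossing eigenvalue branch; once the orientation of the circle is matched with the downward convention for the spectral flow, the two signed counts agree crossing by crossing, giving $\cS^\sharp_\chi = \cI^\sharp_\chi$.

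Next I would prove $\cI^\sharp_\chi = \cB = 1$. The bulk index is the winding number of the loop $t\mapsto\cL^+(t) = U_t\cL^+(0)$ in the circle of isotropic lines; as $U_t$ rotates this circle exactly once over $[0,1]$, we get $\cB = 1$, manifestly independent of the gap and of $E$, since $U_t$ rotates the whole circle uniformly. For the edge index, I reduce (by homotopy invariance in the switch $\chi$, as in the Schrödinger case) to $\chi = \1(x<0)$. Then $\cI^\sharp_\chi$ is the signed number of coincidences of the two loops $t\mapsto\cL^{\sharp,\pm}_\chi(t)$ in $S^1$, which equals the difference of their winding numbers. At the matching point $x=0$, the $-\infty$-decaying datum of the fixed operator $\cD(0)$ is a constant isotropic line (winding $0$), whereas, by the conjugation identity, the $+\infty$-decaying datum of $\cD(t)$ equals $U_t$ applied to a fixed isotropic line (winding $1$). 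Hence $\cI^\sharp_\chi = 1 - 0 = 1 = \cB$.

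The step I expect to be the genuine obstacle is the consistent bookkeeping of orientations, together with the reduction underlying $\cI^\sharp_\chi = \cB$: one must check that deforming $\chi$ and propagating across the finite transition region leave the homotopy-invariant crossing count unchanged, and that a single orientation of the circle of isotropic lines yields $+1$ rather than $-1$ in $\cB$, $\cI^\sharp_\chi$ and $\cS^\sharp_\chi$ at once, so that the common winding is correctly attributed the value $+1$. Granting this, the remaining assertions follow immediately: in the gap the isotropic lines $\cL^{\sharp,\pm}_\chi(t)$ are one-dimensional, so an edge state spans their intersection and is simple; and since off the essential spectrum the characteristic exponents of the ODE have nonzero real part, every eigenstate decays exponentially at both ends and is therefore exponentially localised.
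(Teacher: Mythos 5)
Your proposal is correct and follows the same overall strategy as the paper: parametrise the decaying lines by $\theta = u^\down/u^\up \in \SS^1$, read off $\cB = 1$ from the explicit rotation by $\re^{2\ri\pi t}$ that $U_t$ induces on this circle, obtain $\cI^\sharp_\chi = \cM^{\sharp,+}_\chi - \cM^{\sharp,-}_\chi = 1 - 0$ from the two asymptotic regions, and identify signed crossings of $\Omega^\sharp_\chi$ with $-\mathrm{sgn}\,E_k'$ via a first-order computation. Two points differ. First, where the paper proves $|u^\up| = |u^\down|$ (Lemma~\ref{lem:up=down}) using the antiunitary symmetry $\bsigma_1 K$ --- which exploits that $V$ is real --- you derive it from conservation of the current $|u^\up|^2 - |u^\down|^2$ along solutions of the first-order system; this is a genuinely different and arguably more robust argument, since it only uses Hermiticity of the mass term and applies verbatim to the edge operator and to the more general potentials of Remark~\ref{rem:generalPotentials}. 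Second, your reduction of the edge index to the sharp cutoff $\chi = \1(x<0)$ by homotopy in $\chi$ is an unnecessary detour: because the winding of each factor $\theta^{\sharp,\pm}_{\chi,t}$ is independent of $x$, the paper simply evaluates $\theta^{\sharp,+}$ at some $x > L$ (where it coincides with the bulk $\theta^+_t$, winding $1$) and $\theta^{\sharp,-}$ at some $x < -L$ (where it is $t$-independent, winding $0$), for an arbitrary $L^\infty$ switch. The orientation bookkeeping you flag as the main obstacle is exactly what the paper settles by the Hellmann--Feynman/integration-by-parts identity $\nu_{z=1}\left[\Omega^\sharp_\chi(\cdot,E^*,x_0),t^*\right] = -\mathrm{sgn}\,E_k'(t^*)$ for $x_0 < -L$, i.e.\ the ``Hadamard computation'' you anticipate; carrying it out is routine but is the one step your sketch leaves undone.
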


The most interesting case is the $t = \frac12$ one, since we have
\[
\cD(0) = ( - \ri \partial_x) \bsigma_3 + V(x) \bsigma_1 
\quad \text{and} \quad
\cD(\tfrac12) = ( - \ri \partial_x) \bsigma_3 - V(x) \bsigma_1.
\]
In this case, $\cD^\sharp_\chi(\tfrac12)$ describes a smooth transition between $V$ and $-V$. From this theorem together with the symmetry
$\sigma \left( \cD^\sharp_\chi (t)   \right) = - \sigma \left( \cD^\sharp_\chi (1 - t)   \right)$ (see Section~\ref{ssec:Dirac_t=12}),
we obtain the following result
\begin{theorem}
At $t = \tfrac12$, the spectrum of $\cD^\sharp_\chi(\tfrac12)$ is symmetric with respect to the origin. If in addition $0$ is not in the essential spectrum of $\cD(0)$, then $0$ is an eigenvalue of the domain wall Dirac operator $\cD^\sharp_\chi(\tfrac12)$. 
\end{theorem}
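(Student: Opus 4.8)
The first assertion is immediate from the symmetry relation $\sigma(\cD^\sharp_\chi(t)) = -\sigma(\cD^\sharp_\chi(1-t))$ recalled just above the statement: evaluating at $t=\tfrac12$ gives $1-t=\tfrac12$, hence $\sigma(\cD^\sharp_\chi(\tfrac12)) = -\sigma(\cD^\sharp_\chi(\tfrac12))$, which is exactly the assertion that the spectrum is invariant under $\lambda \mapsto -\lambda$. So all the work lies in the second assertion, for which the plan is a parity argument on the spectral flow.

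First I would locate $0$ inside an open gap. Since $\sigma_{\ess}(\cD(0))$ is closed and $0 \notin \sigma_{\ess}(\cD(0))$, the point $0$ lies in some open gap $g$ of $\cD(0)$; I fix $E = 0 \in g$. Applying the bulk--edge correspondence for Dirac established above to this gap yields $\cS^\sharp_\chi = 1$: the net number of eigenvalues of $\cD^\sharp_\chi(t)$ flowing \emph{downward} through the level $0$ over one period equals $1$, and in particular is \emph{odd}. Next I would fix the two reference times attached to the involution $t \mapsto 1-t$. At $t=0$ the operator degenerates to the bulk one, $\cD^\sharp_\chi(0) = \cD(0)\chi + \cD(0)(1-\chi) = \cD(0)$, whose spectrum is purely essential; since $0 \notin \sigma_{\ess}(\cD(0))$, the value $0$ is not an eigenvalue at $t=0$ (nor at $t=1$ by periodicity), so $0$ and $1$ are regular times.

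The heart of the argument is the effect of the symmetry on the eigenvalue branches over the two half-periods. Writing the branches of $\cD^\sharp_\chi$ in the gap as analytic curves $\lambda(t)$ (analyticity coming from the entire dependence of $\cD(t)$ on $t$ through $\re^{\pm \ri \pi t \bsigma_3}$), the relation $\sigma(\cD^\sharp_\chi(t)) = -\sigma(\cD^\sharp_\chi(1-t))$ says that $\mu(u) := -\lambda(1-u)$ is again an eigenvalue branch, now for $u \in [\tfrac12,1]$. A short differentiation, combining the reflection $u \mapsto 1-u$ with the negation, gives $\mu'(1-s_0) = \lambda'(s_0)$. Hence a downward crossing of the level $0$ at $s_0 \in (0,\tfrac12)$ is carried to a downward crossing at $1-s_0 \in (\tfrac12,1)$ \emph{of the same sign}, so the spectral flows over the two open half-periods coincide, $\mathrm{sf}_{(1/2,1)} = \mathrm{sf}_{(0,1/2)}$.

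Finally I would close by contradiction. Suppose $0$ were not an eigenvalue of $\cD^\sharp_\chi(\tfrac12)$. Then $0$, $\tfrac12$, $1$ are all regular times, the spectral flow is additive across the regular point $\tfrac12$, and $1 = \cS^\sharp_\chi = \mathrm{sf}_{(0,1/2)} + \mathrm{sf}_{(1/2,1)} = 2\,\mathrm{sf}_{(0,1/2)}$, which is even --- contradicting $\cS^\sharp_\chi = 1$. Therefore $0 \in \sigma(\cD^\sharp_\chi(\tfrac12))$, and since $0$ lies in the gap $g$, away from the essential spectrum, it is a genuine isolated eigenvalue of finite multiplicity. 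The step I expect to be the main obstacle is the bookkeeping of the spectral flow: one must make the branches regular enough (analytic, attaining the reference level only at isolated times) to justify additivity and to assign a sign to each crossing, and one must verify that the symmetry sends downward crossings to downward crossings rather than reversing them. This is precisely where the explicit identity $\mu'(1-s_0)=\lambda'(s_0)$ does the decisive work.
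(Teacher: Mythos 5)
Your proof is correct and follows essentially the same route as the paper, which deduces the result directly from the combination of $\cS^\sharp_\chi=1$ in the gap containing $0$ with the point symmetry of the graph $(t,\sigma(\cD^\sharp_\chi(t)))$ about $(\tfrac12,0)$; your computation $\mu'(1-s_0)=\lambda'(s_0)$ and the resulting parity contradiction $1=2\,\mathrm{sf}_{(0,1/2)}$ are exactly the details the paper leaves implicit in its one-line proof. The regularity bookkeeping you flag at the end is a real (if minor) technicality, but it is handled at the same level here as in the paper's treatment of the spectral flow via regular values, so nothing essential is missing.
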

This was already proved in~\cite{fefferman2017topologically, drouot2018bulk, drouot2018defect}.
The main contribution of the present work is to embed the operator $\cD^\sharp_\chi(\tfrac12)$ in the continuous family of operators $\cD^\sharp_\chi(t)$. The {\em topologically protected state} mentioned is the previous works is seen here as a manifestation of a spectral flow.

\medskip

This article is organised as follows. In Section~\ref{sec:firstFacts}, we gather our notations and recall basic facts about ODEs and Hamiltonian operators on a line. We prove our results concerning the Schrödinger case in Section~\ref{sec:HamiltonianCase}, and the ones concerning the Dirac case in Section~\ref{sec:DiracCase}. Several numerical illustration are provided in these sections. For the sake of clarity, we postpone most of our proofs concerning regularity in Section~\ref{sec:proofs}. Some extra independent proofs are put in the Appendix for completeness.

\section{First facts and notation}
\label{sec:firstFacts}

\subsection{The winding number}
We start with a brief section about winding numbers, as we relate most of our indices to these objects. Although all results are well-known, we set here some notation and recall some proofs, for we use similar ideas in the sequel. We denote by $\TT^1$ the torus $[0, 1]$, and by $\SS^1 := \{ z \in \C, \, | z | = 1\}$ the unit complex circle. It is possible to identify $\TT^1$ with $\SS^1$, but we avoid doing so to emphasise that $t \in \TT^1$ is real-valued.

\medskip

If $u$ is a continuous function from $\TT^1$ to $\SS^1$, we can associate a {\em winding number} $W[u] \in \Z$. It counts the number of times $u(t)$ turns around $0 \in \C$ as $t$ goes from $0$ to $1$.
\begin{definition}
    Let $u$ be a continuous map from $\TT^1$ to $\SS^1$, and let $\alpha : [0, 1] \to \R$ be a continuous lifting of $u$, that is $u(t) = \re^{ \ri \alpha(t) }$ for all $t \in [0, 1]$. The winding number of $u$ is $W[u] := (2 \pi)^{-1} \left( \alpha (1) -  \alpha (0)  \right)$.
\end{definition}
By periodicity, we must have $\alpha(1) = \alpha(0) + 2 k \pi$ with $k \in \Z$, in which case $W[u]= k$. This proves that the winding number is indeed an integer. If $\alpha$ and $\widetilde{\alpha}$ are two continuous liftings, then $1 = u(t) u(t)^{-1} = \re^{ \ri (\alpha - \widetilde{\alpha})(t)}$ and the difference $\alpha - \widetilde{\alpha}$ is constant by continuity. This implies that the winding number is independent of the lifting. 

If $u(s,t) : [0, 1] \times \TT^1 \to \SS^1$ is continuous, and if $\alpha(s,t)$ is a continuous lifting of $u(s,t)$, then $W[u(s, \cdot)] := ( 2 \pi)^{-1} \left( \alpha(s, 1) - \alpha(s, 0) \right)$ is continuous and integer valued, hence is constant. In particular, $W[u(1, \cdot)]  = W[u(0, \cdot)]$. We deduce that the winding number of $u$ only depends on the homotopy class of $u$.

 If $u$ and $v$ are continuous from $\TT^1$ to $\SS^1$, then so are $u^{-1}$ and $uv$. We directly have from the definition $W[u^{-1}] = - W[u]$, and $W[uv] = W[u] + W[v]$. In other words, the maps $W$ is a group homomorphism from $\left[ C^0(\TT^1, \SS^1), \times \right]$ to $[ \Z, +]$.
 
For our purpose, we need the following characterisation, which is valid for continuously differentiable functions\footnote{If $u : \TT^1 \to \SS^1$ is only continuous, we can apply a convolution kernel to it to obtain a smooth function $\tilde{u}$. If the convolution is sharp enough, $\tilde{u}$ and $u$ are homotope, and the following definition can be applied to $\tilde{u}$ to obtain the winding of $u$.}. 
\begin{lemma}
    If $u : \TT^1 \to \SS^1$ is continuously differentiable, then
    \[
        W[u] = \dfrac{1}{2 \ri \pi} \int_0^1 \dfrac{u'(t)}{u(t)} \rd t.
    \]
\end{lemma}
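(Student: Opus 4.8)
The plan is to derive the integral formula directly from the definition of the winding number via a continuously differentiable lifting. Since $u : \TT^1 \to \SS^1$ is continuously differentiable, I would first invoke the existence of a $C^1$ lifting $\alpha : [0,1] \to \R$ with $u(t) = \re^{\ri \alpha(t)}$. (Smoothness of the lifting follows because locally one can write $\alpha(t) = \alpha(t_0) + \arg(u(t)/u(t_0))$ for $t$ near $t_0$, and the principal argument is smooth away from $-1$; differentiating the defining relation then gives that $\alpha$ is $C^1$ wherever $u$ is.) By definition, $W[u] = (2\pi)^{-1}(\alpha(1) - \alpha(0))$.

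The key computation is then to differentiate the identity $u(t) = \re^{\ri \alpha(t)}$, which yields $u'(t) = \ri \alpha'(t) \re^{\ri \alpha(t)} = \ri \alpha'(t) u(t)$. Since $u(t) \in \SS^1$ never vanishes, I can divide to obtain
\[
    \frac{u'(t)}{u(t)} = \ri \alpha'(t).
\]
Integrating this over $[0,1]$ and using the fundamental theorem of calculus gives
\[
    \int_0^1 \frac{u'(t)}{u(t)} \, \rd t = \ri \int_0^1 \alpha'(t) \, \rd t = \ri \bigl( \alpha(1) - \alpha(0) \bigr).
\]
Dividing by $2 \ri \pi$ and comparing with the definition of $W[u]$ then yields the claimed formula $W[u] = (2\ri\pi)^{-1} \int_0^1 u'(t)/u(t) \, \rd t$.

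The only genuine subtlety, and the step I would treat most carefully, is establishing that a $C^1$ lifting exists and that it has the regularity needed to apply the fundamental theorem of calculus. Once that is in hand, the rest is a one-line differentiation. I would argue the lifting's regularity by a standard patching argument: cover $[0,1]$ by finitely many intervals on which $u$ stays in an open arc of $\SS^1$ avoiding some antipodal point, use a local branch of the argument (a smooth inverse of $\theta \mapsto \re^{\ri\theta}$ on that arc) to define $\alpha$ locally as a $C^1$ function, and glue these local liftings by adjusting additive constants of $2\pi\Z$ so that they agree on overlaps. Continuity forces the matching, and since each local piece is $C^1$, the global lifting is $C^1$ as well. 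This makes $\alpha'$ continuous, so the integral is well-defined and the fundamental theorem applies without issue.
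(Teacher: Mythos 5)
Your proof is correct, and the overall logic (produce a $C^1$ lifting $\alpha$, differentiate $u = \re^{\ri\alpha(t)}$ to get $u'/u = \ri\alpha'$, integrate) is sound. The one place where you genuinely diverge from the paper is in how the $C^1$ lifting is obtained. You establish its existence by a covering-and-patching argument: local branches of the argument on arcs of $\SS^1$, glued by adjusting constants in $2\pi\Z$. The paper instead \emph{defines} the lifting in one stroke by the formula $\alpha(t) := \alpha_0 - \ri\int_0^t \frac{u'}{u}(s)\,\rd s$, where $\alpha_0$ satisfies $u(0)=\re^{\ri\alpha_0}$; one then checks that $-\ri u'/u$ is real-valued (because $u'(t)$ lies in the tangent line $\ri u(t)\R$), that $\alpha$ is therefore a real-valued $C^1$ function, and that $u\re^{-\ri\alpha}$ is constant, so $\alpha$ is indeed a lifting. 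With that construction the identity $W[u]=(2\ri\pi)^{-1}\int_0^1 u'/u$ is immediate from the definition of $W$, with no patching needed. Your route costs a little more bookkeeping (and you should, strictly speaking, note that $\alpha'$ is real-valued, though in your direction this comes for free from $\alpha$ being a real lifting); the paper's route buys the existence of the $C^1$ lifting and the integral formula in a single step. Both arguments are complete and correct.
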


\begin{proof}
    The tangent line of $\SS^1$ at $u(t)$ is $\ri u(t) \R$, so $ - \ri u'(t)/u(t)$ is real valued.
    Let $\alpha_0 \in \R$ be such that $u(0)= \re^{ \ri \alpha_0}$, and define $\alpha(t) := \alpha_0 - \ri  \int_0^t \frac{u'}{u}(s) \rd s$.
    Then, $\alpha$ is a well-defined function which is continuously differentiable on $\R$. Moreover, it holds $u = \re^{\ri \alpha}$, so $\alpha$ is a continuously differentiable lifting of $u$. For this lifting, we have
    \[
        W[u] = \dfrac{1}{2 \pi} \left( \alpha(1) - \alpha(0) \right)= \dfrac{1}{2 \pi} \int_0^1 \alpha'(t) \rd t = \dfrac{1}{2 \ri \pi} \int_0^1 \frac{u'(t)}{u(t)} \rd t.
    \]
\end{proof}
Another characterisation, also valid for continuously differentiable functions, is given by the next Lemma. We recall that a {\em regular point} of $u$ is a point $z \in \SS^1$ such that, for all $t \in u^{-1}(\{z\})$, we have $u'(t) \neq 0$. The Sard's theorem states that if $u$ is continuously differentiable, then the set of nonregular points has measure $0$ in $\SS^1$. If $z$ is a regular point of $u$, and if $t \in u^{-1}(\{z\})$, we set 
\begin{equation} \label{eq:def:nuz}
    \nu_z[u, t] := {\rm sgn} \left( - \ri \frac{u'}{u}(t) \right) \quad \in \{-1, 1\}.
\end{equation}
If $\alpha(\cdot)$ is a lifting of $u$, then $\nu_z[u,t] = {\rm sgn} \left( \alpha'(t) \right)$. In other words, $\nu_z[u, t] = +1$ if $u(t)$ is locally turning positively, and $\nu_z[u,t] = -1$ if $u(t)$ is locally turning negatively.

\begin{lemma} \label{lem:winding_as_crossing}
    Let $u : \TT^1 \to \SS^1$ be continuously differentiable. If $z$ is a regular point of $u$, then $u^{-1}( \{z \})$ is finite, and
    \[
        W[u] = \sum_{t \in u^{-1}(\{z\})} \nu_z[u,t].
    \]
\end{lemma}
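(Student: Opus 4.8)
The plan is to prove the formula $W[u] = \sum_{t \in u^{-1}(\{z\})} \nu_z[u,t]$ for a continuously differentiable $u : \TT^1 \to \SS^1$ and a regular point $z$, by reducing everything to the behaviour of a continuously differentiable lifting $\alpha$. First I would lift $z$ to some $\theta \in \R$ with $z = \re^{\ri\theta}$, and fix the continuously differentiable lifting $\alpha$ constructed in the previous lemma, so that $u = \re^{\ri\alpha}$. Then $u(t) = z$ if and only if $\alpha(t) \in \theta + 2\pi\Z$, and the regularity hypothesis says precisely that $\alpha'(t) \neq 0$ at every such $t$, with $\nu_z[u,t] = \mathrm{sgn}(\alpha'(t))$.

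\emph{Finiteness of the preimage.} I would first argue that $u^{-1}(\{z\})$ is finite. Suppose not: since $\TT^1$ is compact, an infinite preimage would have an accumulation point $t_*$, and by continuity $u(t_*) = z$, so $t_*$ is itself a preimage. But $\alpha'(t_*) \neq 0$ by regularity, so $\alpha$ is strictly monotone in a neighbourhood of $t_*$, hence locally injective there, so $\alpha$ cannot hit the discrete set $\theta + 2\pi\Z$ infinitely often near $t_*$. This contradiction gives finiteness. (One must handle the periodic identification carefully, treating $\alpha$ on the half-open representative $[0,1)$ so that $t=0$ and $t=1$ are not double-counted.)

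\emph{Counting with signs.} Now I would enumerate the preimages as $0 \le t_1 < t_2 < \cdots < t_m < 1$ and relate the signed count to the total variation of $\alpha$ across the fibre $\theta + 2\pi\Z$. The cleanest route is to invoke the intermediate-value/degree idea directly: as $t$ runs over $[0,1]$, the value $\alpha(t)$ travels from $\alpha(0)$ to $\alpha(1) = \alpha(0) + 2\pi W[u]$. Each time $\alpha$ crosses a level in $\theta + 2\pi\Z$ it contributes $+1$ if crossing upward ($\alpha' > 0$) and $-1$ if downward ($\alpha' < 0$), and since $z$ is regular no crossing is tangential. The net number of signed crossings of the levels $\theta + 2\pi j$, $j \in \Z$, by a function going from $\alpha(0)$ up to $\alpha(0) + 2\pi W[u]$ equals exactly the number of full lattice spacings traversed, namely $W[u]$. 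Matching each crossing to its sign $\nu_z[u,t_i] = \mathrm{sgn}(\alpha'(t_i))$ yields $\sum_i \nu_z[u,t_i] = W[u]$.

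The main obstacle, and the point deserving the most care, is making the ``signed crossing count equals $W[u]$'' step rigorous rather than merely pictorial. A clean way is to partition $[0,1]$ by the points $t_i$ and on each subinterval note that $\alpha$ maps into a single fundamental strip of width $2\pi$ relative to the lattice $\theta + 2\pi\Z$; summing the telescoping differences $\alpha(t_{i+1}^-) - \alpha(t_i^+)$ and tracking how the fundamental strip index jumps by $\nu_z[u,t_i]$ at each crossing recovers $\alpha(1)-\alpha(0) = 2\pi\sum_i \nu_z[u,t_i]$, whence $W[u] = \sum_i \nu_z[u,t_i]$. The only genuinely delicate bookkeeping is the boundary identification at $t=0 \sim t=1$, which is resolved by working with the periodic extension of $\alpha$ and counting crossings on a half-open period.
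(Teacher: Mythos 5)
Your proof is correct and follows essentially the same route as the paper's: finiteness via the accumulation-point/isolated-preimages argument, then a lifting $\alpha$ together with a telescoping count of how the integer level of $\alpha$ jumps by ${\rm sgn}(\alpha'(t_i))$ at each crossing, your ``fundamental strip index'' bookkeeping being exactly the paper's identity $k_{m+1}-k_m = \frac12\left(\nu_1[u,t_m]+\nu_1[u,t_{m+1}]\right)$ in different clothing. The only cosmetic difference is that the paper normalises $t_0=0$ and appends $t_M=1$ rather than working with the periodic extension on a half-open period.
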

Although the proof is elementary, we provide it, as we use similar arguments in different contexts in the sequel.
\begin{proof}
     Let us first prove that $u^{-1} ( \{ z \})$ is finite. First, since for all $t \in u^{-1}( \{ z\})$, we have $u'(t) \neq 0$, the points in $u^{-1}( \{ z\} )$ are isolated. Assume by contradiction that $u^{-1} ( \{ z \})$ is infinite in the compact $\TT^1$. Then there is an accumulation point $t^* \in \TT^1$. By continuity of $u$, we must have $u(t^*) = z$, hence $t^* \in u^{-1} (\{ z\})$ as well. But this contradicts the fact that $t^*$ must be isolated. So $u^{-1} (\{ z\})$ is finite.
    
    Let $0 \le t_0 < t_1 < \cdots < t_{M-1} < 1$ be the pre-images of $z \in \SS^1$. Up to global rotations, we may assume without loss of generality that $z = 1 \in \SS^1$ and $t_0 = 0$, and we set $t_M = 1$. Let $\alpha$ be a continuously differentiable lifting of $u$. Since $u(t_m)= 1$, we must have $\alpha(t_m) = 2 \pi k_m$ with $k_m \in \Z$. We claim that 
    \begin{equation} \label{eq:k_versus_nu}
        k_{m+1} - k_m = \frac12 \left( \nu_1[u, t_{m}] + \nu_1[u, t_{m+1}]  \right).
    \end{equation}
    This would give the result, as
    \[
        W[u] = k_M - k_0 = \sum_{m=0}^{M-1} \left( k_{m+1} - k_m \right) = \frac12 \sum_{m=0}^{M-1} \left( \nu_1[u, t_{m}] + \nu_1[u, t_{m+1}] \right)  = \sum_{m=0}^{M-1} \nu_1[u, t_{m}].
    \]
    Let us prove~\eqref{eq:k_versus_nu} in the case $\nu_1[u, t_m] = 1$ and $\nu_1[u, t_{m+1}] = 1$ (the other cases are similar). By continuity of $\alpha'$, and since $\alpha'(t_m) > 0$ and $\alpha'(t_{m+1} ) > 0$, there is $0 < \varepsilon < \frac12 | t_{m+1} - t_m |$ so that 
    \[
        \alpha(t_m + \varepsilon) > \alpha(t_m) = 2 \pi k_m \quad \text{and} \quad
        \alpha(t_{m+1} - \varepsilon) < \alpha(t_{m+1}) = 2 \pi k_{m+1}.
    \] 
    The first inequality, together with the intermediate value theorem and the fact that $\alpha(t) \notin \Z$ for $t \in (t_m, t_{m+1})$ implies that $\alpha([t_m, t_{m+1}]) \subset 2 \pi [ k_m, k_m + 1]$. Similarly, the second inequality implies that $\alpha([t_m, t_{m+1}]) \subset 2  \pi [ k_{m+1} -1, k_{m+1}]$. By identification, this gives $k_{m+1} = k_m + 1$, and~\eqref{eq:k_versus_nu} is satisfied. The proof follows.
\end{proof}

If $u$ is not regular, or if we drop the signs of the crossings, we have a weak form of Lemma~\ref{lem:winding_as_crossing}.
\begin{lemma} \label{lem:winding_as_crossing_weak}
    Let $u$ be a continuous map from $\TT^1$ to $\SS^1$, and set $N := | W[u] | \in \N$. For all $z \in \SS^1$, there are at least $N$ points $0 \le t_0 < t_1 < \cdots < t_N < 1$ so that $u(t_k) = z$.
\end{lemma}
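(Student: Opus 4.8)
The plan is to avoid the signed crossing formula of Lemma~\ref{lem:winding_as_crossing} altogether and argue directly from a continuous lifting of $u$ together with the intermediate value theorem. The formula of Lemma~\ref{lem:winding_as_crossing} is unavailable in the present situation for two reasons: here $u$ is assumed only continuous (not continuously differentiable), and $z$ is an \emph{arbitrary} point of $\SS^1$, not necessarily a regular value. Both difficulties disappear for the lifting argument, which sees only the one-dimensional continuous function $\alpha$ and a counting of an arithmetic progression; this is precisely why it applies where the earlier lemma does not.

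Concretely, I would pick a continuous lifting $\alpha : [0,1] \to \R$ with $u(t) = \re^{\ri \alpha(t)}$, so that by the definition of the winding number $\alpha(1) - \alpha(0) = 2\pi W[u]$, whence $|\alpha(1) - \alpha(0)| = 2\pi N$. Writing $z = \re^{\ri \theta}$, the condition $u(t) = z$ is equivalent to $\alpha(t) \in \theta + 2\pi \Z$. The key observation is that the coset $\theta + 2\pi \Z$ meets any half-open interval of length $2\pi N$ in exactly $N$ points. I would therefore let $I$ be the interval with endpoints $\alpha(0)$ and $\alpha(1)$, half-open so as to \emph{exclude} the endpoint $\alpha(1)$ (that is, $I = [\alpha(0),\alpha(1))$ when $\alpha(0) \le \alpha(1)$, and $I = (\alpha(1),\alpha(0)]$ otherwise); since $|I| = 2\pi N$, this yields exactly $N$ values $v_1 < \cdots < v_N$ of $\theta + 2\pi\Z$ inside $I$. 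For each $v_j$, which lies between $\alpha(0)$ and $\alpha(1)$, the intermediate value theorem produces some $t_j \in [0,1]$ with $\alpha(t_j) = v_j$; because $v_j \neq \alpha(1)$ we get $t_j \neq 1$, hence $t_j \in [0,1)$, and because the $v_j$ are distinct and $\alpha(t_j)$ determines $v_j$, the $t_j$ are pairwise distinct. Relabelling them in increasing order gives $N$ points $t_0 < t_1 < \cdots < t_{N-1}$ in $[0,1)$ with $u(t_k) = \re^{\ri v_{k}} = z$, which is the claim.

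The only genuinely delicate point, and hence the step I expect to require the most care, is the endpoint bookkeeping: one must choose the half-open interval so that (i) it contains exactly $N$ lattice points regardless of the sign of $W[u]$, and (ii) excluding the value $\alpha(1)$ forces each produced preimage to satisfy $t_j \neq 1$, so that the $t_j$ are genuinely distinct points of the torus $\TT^1$ rather than the identified pair $0 \sim 1$. Everything else is a routine application of the intermediate value theorem. I would also note in passing that the argument in fact yields exactly $N$ such points, which is consistent with (and slightly sharper than) the stated ``at least $N$'' conclusion, and that no differentiability of $u$ or regularity of $z$ is used anywhere.
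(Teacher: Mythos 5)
Your proof is correct and follows essentially the same route as the paper: lift $u$ to a continuous $\alpha : [0,1] \to \R$ and use the intermediate value theorem to produce $N$ preimages of $z$; you merely replace the paper's normalisation ($z = 1$, $u(0) = 1$, $W[u] > 0$) by direct endpoint bookkeeping on the half-open interval between $\alpha(0)$ and $\alpha(1)$. One small caveat: your construction \emph{produces} exactly $N$ points, but this does not imply that $u^{-1}(\{z\})$ has exactly $N$ elements (the lifting may revisit values of $\theta + 2\pi\Z$), so the closing remark that the argument is ``slightly sharper'' than the stated conclusion should be dropped.
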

\begin{proof}
    Up to global rotations and symmetries, we may assume without loss of generality that $z = 1 \in \SS^1$, $u(0) = 1$, and $W[u] > 0$. Let $\alpha$ be a continuous lifting of $u$ with $\alpha(0) = 0$, so that $\alpha(1)= 2 \pi N$. With the intermediate value theorem, we deduce that for all $1 \le k \le N-1$, there is $t_k$ such that $\alpha(t_k) = 2 \pi k$. The result follows.
\end{proof}

\subsection{Ordinary Differential Equation}

\subsubsection{Notation for ODE}
\label{ssec:ODE}
In this section, we recall some classical facts about ODEs. We are mainly interested in the second order real-valued Hill's equation
\begin{equation*} 
    - u'' + V(x) u = E u \quad \text{on} \quad \R,
\end{equation*}
where $V \in L^1_\loc(\R)$ is some given real-valued locally integrable potential, and $E \in \R$. Later, $V$ will be either the periodic potential $V$, or the edge potential $V\chi  + (1 - \chi)V(\cdot - t)$. This setting allows to consider both cases at once, and handles the discontinuous case $\chi = \1( x < 0)$. Without loss of generality, we may absorb the energy $E$ into the potential, and study
\begin{equation} \label{eq:basicHilleqt2}
    - u'' + V(x) u = 0.
\end{equation}
We introduce the fundamental solutions $c_V$ and $s_V$, which are the solutions to the Cauchy problem~\eqref{eq:basicHilleqt2}, with the boundary conditions
\begin{equation} \label{eq:def:fundamental_solutions}
    c_V(0) = s_V'(0) = 1, \quad \text{and} \quad c_V'(0) = s_V(0) = 0.
\end{equation}
We first recall some basic facts. The proof is postponed until Section~\ref{ssec:proof_basicFact_ODE} for clarity.
\begin{lemma} \label{lem:BasicFact_ODE}
    For all $V \in L^1_{\rm loc}(\R)$, the functions $c_V$ and $s_V$ are well-defined, linearly independent and continuously differentiable on $\R$. The set of solutions to~\eqref{eq:basicHilleqt2} is the $2$-dimensional vectorial space
    \[
        \cL_V := \Span \{ c_V, s_V\}.
    \]
    If $u \in \cL_V$ is a solution to~\eqref{eq:basicHilleqt2}, then $u$ is continuously differentiable on $\R$, and
    \[
        \forall x \in \R, \quad u(x) = u(0) c_V(x) + u'(0) s_V(x).
    \]
    If in addition $u$ is non-null, then the zeros of $u$ are isolated and simple. For all $x \in \R$, $u(x)$ and $u'(x)$ cannot vanish at the same time.
\end{lemma}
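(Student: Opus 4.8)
The plan is to treat the scalar second-order equation~\eqref{eq:basicHilleqt2} as a first-order linear system and to invoke the Carathéodory theory of ODEs with merely integrable coefficients. Setting $U := (u, u')^\top$, a function $u$ solves~\eqref{eq:basicHilleqt2} (in the natural sense that $u \in C^1$, $u'$ is absolutely continuous on compacts, and $-u'' + Vu = 0$ almost everywhere) if and only if $U$ is an absolutely continuous solution of the integral equation
\[
U(x) = U(0) + \int_0^x A(y)\, U(y)\, \rd y,
\qquad
A(y) := \begin{pmatrix} 0 & 1 \\ V(y) & 0 \end{pmatrix}.
\]
Since $V \in L^1_\loc(\R)$, the matrix field obeys $\| A(y)\| \le 1 + |V(y)| \in L^1_\loc(\R)$, which is exactly the integrability needed below. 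This is the one place where the low regularity of $V$ matters, and it is why the classical Picard--Lindelöf theorem must be replaced by its Carathéodory variant; I expect this to be the only genuine subtlety of the whole statement.

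First I would establish existence and uniqueness on every compact interval $[-R, R]$. For existence, define the Picard iterates $U_0 \equiv U(0)$ and $U_{n+1}(x) := U(0) + \int_0^x A\,U_n$. A direct induction gives the bound $\| U_n(x) - U_{n-1}(x)\| \le \tfrac{1}{n!}\big( \int_0^{|x|} (1 + |V|) \big)^n \| U(0)\|$, so the telescoping series $\sum_n (U_n - U_{n-1})$ converges uniformly on $[-R, R]$ to a continuous $U$ solving the integral equation. For uniqueness, two solutions with the same initial datum satisfy $\| U - \widetilde U\|(x) \le \int_0^{|x|} (1+|V|)\, \| U - \widetilde U\|$, and the integral (Grönwall) inequality forces $U = \widetilde U$. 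Applying this with $U(0) = (1,0)^\top$ and $U(0) = (0,1)^\top$ produces $c_V$ and $s_V$ with the prescribed data~\eqref{eq:def:fundamental_solutions}. Because $Vu \in L^1_\loc$, the second component $u' = u'(0) + \int_0^\cdot Vu$ is absolutely continuous, hence continuous; combined with $u = u(0) + \int_0^\cdot u'$ this shows every solution is $C^1$, giving the continuous differentiability claim.

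Next I would record the Wronskian $W := c_V s_V' - c_V' s_V$. It is absolutely continuous on compacts (each summand being a product of a $C^1$ factor and an absolutely continuous one), and almost everywhere $W' = c_V s_V'' - c_V'' s_V = c_V (V s_V) - (V c_V) s_V = 0$, so $W$ is constant, equal to the value $W(0) = 1$ read off from~\eqref{eq:def:fundamental_solutions}. As $W \equiv 1 \neq 0$, the pair $c_V, s_V$ is linearly independent. For an arbitrary solution $u$, the function $u(0) c_V + u'(0) s_V$ is a solution sharing the initial datum $(u(0), u'(0))$, so by uniqueness it equals $u$; this yields the representation formula and shows that the solution set is exactly $\cL_V = \Span\{c_V, s_V\}$, a two-dimensional space.

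Finally, the statements on zeros follow from uniqueness applied at an arbitrary base point. If $u \in \cL_V$ is non-null and $u(x_0) = u'(x_0) = 0$ for some $x_0$, then $u$ and the zero solution share the initial datum $(0,0)$ at $x_0$, forcing $u \equiv 0$, a contradiction; hence $u(x)$ and $u'(x)$ never vanish simultaneously. In particular any zero $x_0$ of $u$ satisfies $u'(x_0) \neq 0$, so it is simple, and since $u'$ is continuous and nonvanishing at $x_0$ the function $u$ is strictly monotone near $x_0$, making $x_0$ isolated. Apart from the Carathéodory existence/uniqueness and the absolute-continuity bookkeeping in the Wronskian computation, every assertion is a formal consequence of the linear structure of the solution space.
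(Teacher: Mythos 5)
Your proof is correct and follows essentially the same route as the paper: rewrite the scalar equation as a first-order system, obtain global existence and uniqueness from the local integrability of $V$ via Picard iteration and Grönwall, and deduce the last assertions from uniqueness of the Cauchy problem. The only differences are that you spell out the Carathéodory existence, the Wronskian computation and the zero analysis that the paper dismisses as ``straightforward'' (and your sign for the lower-left entry of $A$, namely $+V$, is the correct one for $u''=Vu$; the paper's $-V$ is a harmless typo that does not affect the Grönwall estimate).
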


Let $u \in \cL_V$ be a non-null solution of~\eqref{eq:basicHilleqt2}. Since $u$ and $u'$ cannot vanish at the same time, the complex-valued function
\begin{equation} \label{eq:def:theta}
    x \mapsto \theta[u, x] :=  \frac{u'(x) - \ri u(x)}{u'(x) + \ri u(x)} \quad \in \C,
\end{equation}
is well-defined, continuous, and has values in the unit circle $\SS^1 := \{ z \in \C, \ | z | = 1 \}$. We have $u(x) = 0$ iff $\theta[u,x] = 1$, while $u'(x) = 0$ iff $\theta[u,x] = -1$ (see Figure~\ref{fig:theta}).

\begin{figure}[H]
    \centering
    \begin{subfigure}{0.45\textwidth}
        \includegraphics[width=\textwidth]{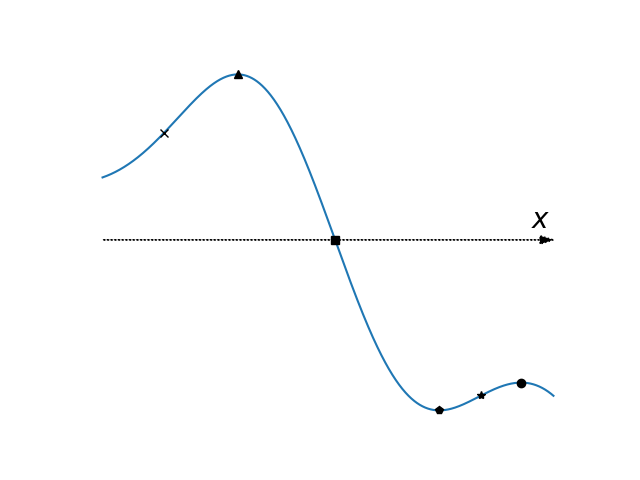}
        \caption{A solution $u(x)$.}
    \end{subfigure}
    \begin{subfigure}{0.45\textwidth}
        \includegraphics[width=\textwidth]{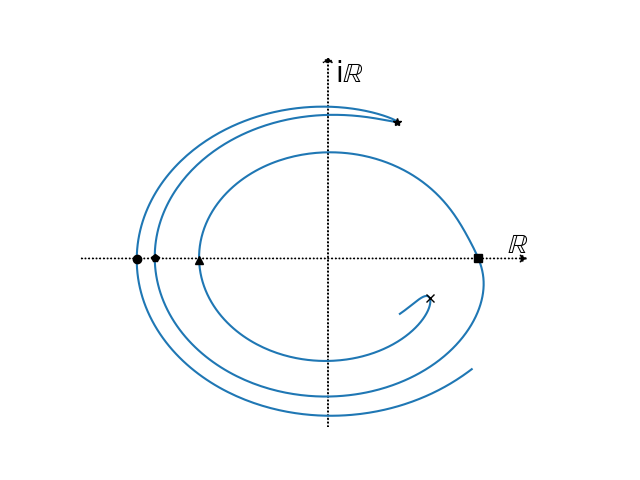}
        \caption{The corresponding $\theta[u,.]$. The radius is increasing with $x$ for clarity.}
    \end{subfigure}
    \caption{Sketch of a solution $u$ and the corresponding $x \mapsto \theta[u, x]$. We put some markers to track $x$ in the second picture.}
    \label{fig:theta}
\end{figure}

On the other hand, the zeros of $u$ are simple and isolated, so we can label them. We denote its set of zeros by
\[
    \cZ[u] :=\left( x_n \right)_{n \in \Z}, \quad \text{with} \quad \cdots < x_n < x_{n+1} < \cdots.
\]
If $u$ has a finite number of zeros, we put some of the $x_n$ to $\pm \infty$. Since $u$ does not vanish in the intervals $(x_n, x_{n+1})$, it has a constant sign on this interval, and this sign alternates between $(x_n, x_{n+1})$ and $(x_{n+1}, x_{n+2})$. The following remark is key to our analysis.

\begin{remark} \label{rem:theta_Vectu}
    For all $\lambda \in \R^*$, the function $\lambda u$ is another solution to~\eqref{eq:basicHilleqt2}, and we have $\theta[\lambda u, \cdot] = \theta[u, \cdot]$ and $\cZ[\lambda u, \cdot] = \cZ[u, \cdot]$. In other words, $\theta[u, \cdot]$ and $\cZ[u]$ only depends on $\Span \{ u \} \subset \cL_V$. \\
    If $u$ and $v$ are two non-null solutions to~\eqref{eq:basicHilleqt2}, then $u$ and $v$ are linearly dependent iff $\theta[u, x] = \theta[v, x]$ for all $x \in \R$, iff $\theta[u, x] = \theta[v, x]$ for some $x \in \R$, iff $\cZ[u] = \cZ[v]$.
\end{remark}


\subsubsection{The Maslov index for a periodic family of ODEs.}
\label{ssec:MaslovIndex}

We now consider a periodic family of potentials $\TT^1 \ni t \mapsto V_t \in L^1_\loc$. We say that this family is differentiable\footnote{We recall that $L^1_\loc(\R)$ is not a Banach space, but a complete metric space.} in $L^1_\loc(\R)$ if, for all $t \in \TT^1$, the function $(\partial_t V_t)$ is in $L^1_\loc(\R)$. This means that for all $K$ compact of $\R$,
\[
    t \mapsto \int_K V_t \quad \text{is differentiable, and} 
    \int_K \left( \partial_t V_t\right) := \partial_t \left( \int_K V_t \right).
\]
We set $\cL_{t} := \cL_{V_t}$, $c_t := c_{V_t}$ and $s_t := s_{V_t}$ for clarity. The proof of the next Lemma is postponed until Section~\ref{ssec:proof:ct_and_st_are_periodic}
\begin{lemma} \label{lem:ct_and_st_are_periodic}
    If $\left( V_t\right)_{t \in \TT^1}$ is a differentiable periodic family in $L^1_\loc(\R)$, then the functions $(t, x) \mapsto c_t(x)$ and $(t,x) \mapsto s_t(x) $ are continuously differentiable on $\TT^1 \times \R$.
\end{lemma}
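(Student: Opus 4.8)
The plan is to recast the scalar Cauchy problem as a first-order linear system, study the associated transfer matrix, and control its dependence on $(t,x)$ by Grönwall and Duhamel arguments adapted to the $L^1_\loc$ setting. First I would set $U = (u, u')^{\mathrm T}$ and rewrite~\eqref{eq:basicHilleqt2} as $U' = A_t U$ with
\[
A_t(x) = \begin{pmatrix} 0 & 1 \\ V_t(x) & 0 \end{pmatrix},
\]
and introduce the transfer matrix $M_t(x)$, the unique Carathéodory solution of the Volterra equation $M_t(x) = I + \int_0^x A_t(y) M_t(y) \, \rd y$. By Lemma~\ref{lem:BasicFact_ODE} the two columns of $M_t$ are exactly $(c_t, c_t')^{\mathrm T}$ and $(s_t, s_t')^{\mathrm T}$, so proving the Lemma amounts to showing that $(t,x) \mapsto M_t(x)$ is continuously differentiable on $\TT^1 \times \R$. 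Only first derivatives are at stake: $\partial_x c_t = c_t'$ and $\partial_x s_t = s_t'$ are themselves entries of $M_t$, so their joint continuity will follow from that of $M_t$, and no second $x$-derivative (which would involve the merely $L^1_\loc$ function $V_t$) is required.

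For the joint continuity of $M_t$, I would first note that on any $[-R,R]$ the Neumann series $M_t = \sum_{n \ge 0} M_t^{(n)}$, $M_t^{(n+1)}(x) = \int_0^x A_t M_t^{(n)}$, converges with $\| M_t^{(n)}(x) \| \le \tfrac{1}{n!}\big(\int_{-R}^R (1+|V_t|)\big)^n$, uniformly in $(t,x)$, so $M_t(x)$ exists and is continuous in $x$. Subtracting the Volterra equations for two parameters and applying Grönwall gives the Lipschitz bound
\[
\sup_{|x|\le R} \| M_t(x) - M_s(x) \| \le C(R)\, \| V_t - V_s \|_{L^1([-R,R])},
\]
so $M_t$ is continuous in $t$ uniformly on compacts. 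Since $\{V_s\}_{s \in \TT^1}$ is a compact, hence uniformly integrable, family, the $x$-continuity is locally uniform in $s$, and the two facts combine to give joint continuity of $M_t$ on $\TT^1 \times \R$, hence of $c_t, s_t, c_t', s_t'$.

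For the $t$-derivative, I would introduce the candidate $N_t(x)$ given by the variation-of-constants (Duhamel) formula
\[
N_t(x) = M_t(x) \int_0^x M_t(y)^{-1} \big(\partial_t A_t(y)\big) M_t(y) \, \rd y,
\]
where $\partial_t A_t$ has the single nonzero entry $\partial_t V_t$ in its lower-left corner, and where $M_t^{-1}$ is well defined and jointly continuous because $\det M_t \equiv 1$ by Abel's identity. This $N_t$ is the solution of the variational equation $N_t' = A_t N_t + (\partial_t A_t) M_t$ with $N_t(0)=0$. To prove $\partial_t M_t = N_t$, I would represent the difference quotient by the same Duhamel formula relative to $A_{t+h}$,
\[
\frac{M_{t+h}(x) - M_t(x)}{h} = M_{t+h}(x) \int_0^x M_{t+h}(y)^{-1} \frac{A_{t+h}(y) - A_t(y)}{h} M_t(y)\, \rd y ,
\]
and let $h \to 0$: by the Lipschitz bound $M_{t+h}, M_{t+h}^{-1} \to M_t, M_t^{-1}$ uniformly on $[-R,R]$, while $h^{-1}(A_{t+h}-A_t) \to \partial_t A_t$ in $L^1([-R,R])$, so the product of a uniformly convergent bounded factor with an $L^1$-convergent factor converges in $L^1$ and the integral converges to the one defining $N_t$. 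The same formula shows that $N_t$ is jointly continuous in $(t,x)$, its entries being integrals of the $L^1_\loc$ density $\partial_t V_t$ against jointly continuous kernels. This gives $M_t \in C^1(\R \times \R)$, and since $V_t$ and hence $M_t$ are $1$-periodic in $t$, the map descends to a $C^1$ map on $\TT^1 \times \R$.

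The main obstacle is the passage to the limit in the difference quotient under only $L^1_\loc$ regularity of $t \mapsto V_t$: everything hinges on reading the differentiability hypothesis as the convergence $h^{-1}(V_{t+h}-V_t) \to \partial_t V_t$ in $L^1_\loc$, and on the Grönwall/Lipschitz estimates that allow a merely $L^1$-convergent factor to be multiplied by uniformly convergent matrix factors. The absence of any pointwise $x$-continuity of $V_t$ is precisely what forces the consistent use of the integral (Volterra and Duhamel) formulation rather than classical $C^2$ ODE theory.
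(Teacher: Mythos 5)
Your proof is correct and follows essentially the same route as the paper: the heart of both arguments is the variation-of-constants (Duhamel) representation of $\partial_t c_t$ and $\partial_t s_t$ against the fundamental system, with the $L^1_\loc$ hypothesis entering only through the source term $\partial_t V_t$ (your matrix formula $M_t(x)\int_0^x M_t(y)^{-1}(\partial_t A_t)(y)M_t(y)\,\rd y$ reduces entrywise to the paper's scalar formula $\int_0^x [c_t(y)s_t(x)-c_t(x)s_t(y)](\partial_t V_t)(y)c_t(y)\,\rd y$). The only differences are presentational and a gain in rigor on your side: you work with the $2\times 2$ transfer matrix and justify the limit of the difference quotient via the exact Duhamel identity, whereas the paper differentiates the ODE formally and writes down the resulting formula; note that both proofs implicitly read the hypothesis as $h^{-1}(V_{t+h}-V_t)\to\partial_t V_t$ in $L^1_\loc$ and, for continuity of the $t$-derivative, tacitly use $L^1_\loc$-continuity of $t\mapsto\partial_t V_t$.
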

Let $\left( \cL^+_t \right)_{t \in \TT^1}$ be a periodic family of vectorial spaces such that, for all $t \in \TT^1$, $\cL^+_t$ is a $1$-dimensional subspace of $\cL_t$. 
\begin{definition}[Continuity of vectorial spaces] \label{def:contL}
    We say that the map $t \mapsto \cL^+_t$ is continuously differentiable on $\TT^1$ if there are continuously differentiable functions $\lambda_c(\cdot)$ and $\lambda_s(\cdot)$ defined on $\R$ such that
    \begin{equation*}
    \cL^+_t = \Span \left\{  u_t(\cdot)  \right\}, \quad \text{where} \quad
    u_t := \lambda_c(t) c_t +   \lambda_s(t) s_t.
    \end{equation*}
\end{definition}
Although $\cL^+_t$ is periodic in $t$, we do not require $\lambda_c$ and $\lambda_s$ to be periodic {\em a priori}. According to Remark~\ref{rem:theta_Vectu}, since $\cL^+_t$ is $1$-dimensional, we can define 
\[
    \theta \left[ \cL^+_t, x \right] := \theta\left[ u_t, x\right], 
    \quad \text{and} \quad
    \cZ \left[  \cL^+_t \right] := \cZ[u_t ].
\]
We write $\theta_t^+$ and $\cZ_t^+$ for clarity. The function $(t, x) \mapsto \theta_t^+[x]$ is continuously differentiable from $\TT^1 \times \R$ to $\SS^1$. For all $x \in \R$, the map $t \mapsto \theta_t^+ [x]$ is periodic with value in $\SS^1$, hence has a winding number, and by continuity in $x$, this winding number is independent of the choice of $x$. 

\begin{definition}[Maslov index] \label{def:MaslovIndex}
    The Maslov index $\cM^+$ of $(\cL_t^+)_{t \in \TT^1}$ is the common winding number of $t \mapsto \theta_t^+[x] $.
\end{definition}

\begin{remark}
    As the next Lemma shows, this index can be interpreted as the number of intersections between the vectorial space $\cL^+_t$ and the Dirichlet vectorial space $u(x) = 0$, in the spirit of Maslov's work, hence the name.
\end{remark}

On the other hand, if $x_n \in \cZ_{t^*}$, we have $u_{t^*}(x_n) = 0$ and $u_{t^*}'(x_n) \neq 0$. So by the implicit function theorem, there is a continuously differentiable function $x_n(t)$ defined on a neighbourhood of $t^*$ with $x_n(t^*) = x_n$ such that $u_t(x_n(t)) = 0$ locally around $t^*$. We deduce that there are continuously differentiable functions $x_n(t)$ such that
\begin{equation} \label{eq:parametrisation_of_Z}
    \cZ_t^+ := \left\{\cdots < x_n(t) < x_{n+1}(t) < \cdots \right\}.
\end{equation}
By periodicity of $t \mapsto \cZ_t^+$, we have $\cZ_1^+ = \cZ_0^+$. We infer that there is $m \in \Z$ so that $x_n(1)= x_{n + m}(0)$ for all $n \in \Z$. The next result shows that the Maslov index can be seen as a flow of zeros.

\begin{lemma}[Characterisation of the Maslov index] \label{lem:MaslovIndex}
    The Maslov index is the integer $\cM^+$ so that $x_n(1) = x_{n+ \cM^+}(0)$.
\end{lemma}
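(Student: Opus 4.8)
The plan is to evaluate the common winding number $\cM^+ = W[\theta_t^+[x]]$ by counting crossings at a conveniently chosen height $x = a$, and to match each crossing with a zero $x_n(t)$ of $u_t$ passing through the level $a$. Since the winding number does not depend on $x$, I would first fix a \emph{generic} level $a \in \R$: by Sard's theorem applied to the countably many continuously differentiable functions $t \mapsto x_n(t)$ of~\eqref{eq:parametrisation_of_Z}, almost every $a$ avoids all critical values of all the $x_n$, as well as the countable set $\{x_n(0), x_n(1)\}$. For such $a$, every solution of $x_n(t) = a$ is transversal ($x_n'(t) \ne 0$), no crossing occurs at the endpoints $t \in \{0,1\}$, and $1 \in \SS^1$ is a regular point of $t \mapsto \theta_t^+[a]$. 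Lemma~\ref{lem:winding_as_crossing} then gives the finite sum
\[
    \cM^+ = \sum_{t \,:\, u_t(a) = 0} \nu_1\big[\theta_\cdot^+[a], t\big],
\]
where the condition $u_t(a) = 0$ means precisely that one of the curves $x_n(\cdot)$ passes through the level $a$ at time $t$.

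The key computation is to express each $\nu_1$ in terms of the slope $x_n'(t)$. Writing $\theta_t^+[x] = (u_t' - \ri u_t)/(u_t' + \ri u_t)$ (with $' = \partial_x$) and differentiating its logarithm in $t$ at a point where $u_t(a) = 0$, hence $u_t'(a) \ne 0$, every term carrying a factor $u_t$ drops out and one finds, at that crossing,
\[
    -\ri \, \frac{\partial_t \theta_t^+[a]}{\theta_t^+[a]} = \frac{-2\,\partial_t u_t(a)}{\partial_x u_t(a)}.
\]
On the other hand, differentiating the defining relation $u_t(x_n(t)) = 0$ yields $x_n'(t) = -\partial_t u_t/\partial_x u_t$ evaluated at $x_n(t) = a$, so the right-hand side above is exactly $2\,x_n'(t)$. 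By the definition~\eqref{eq:def:nuz} of $\nu_1$ this gives $\nu_1[\theta_\cdot^+[a], t] = {\rm sgn}\big(x_n'(t)\big)$: a crossing contributes $+1$ when the corresponding zero moves upward through $a$, and $-1$ when it moves downward.

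It then remains a bookkeeping step. For each fixed $n$, transversality forces the successive crossings of the single continuous curve $x_n(\cdot)$ with the level $a$ to alternate in direction, so the net signed number of such crossings over $t \in [0,1]$ equals $\1[x_n(1) > a] - \1[x_n(0) > a]$. Summing over $n$ — only finitely many terms are nonzero, since the ordering $x_n(t) < x_{n+1}(t)$ together with the intermediate value theorem shows only finitely many curves ever meet $a$ — and inserting the periodicity relation $x_n(1) = x_{n+m}(0)$, I obtain
\[
    \cM^+ = \sum_{n \in \Z} \Big( \1[x_n(1) > a] - \1[x_n(0) > a] \Big) = \sum_{n \in \Z} \big( g(n+m) - g(n) \big), \qquad g(k) := \1[x_k(0) > a].
\]
Because $k \mapsto x_k(0)$ is strictly increasing with $x_k(0) \to \pm\infty$, the function $g$ is a Heaviside step, and the last (finitely supported) sum telescopes to exactly $m$, giving $\cM^+ = m$.

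The step I expect to be the main obstacle is precisely this last accounting. The naive temptation is to reindex and cancel the two \emph{infinite} sums $\sum_n \1[x_n(1) > a]$ and $\sum_n \1[x_n(0) > a]$ directly; each is $+\infty$, so this $\infty - \infty$ manipulation spuriously produces $\cM^+ = 0$. The correct treatment keeps the terms grouped as the finitely supported difference $g(n+m) - g(n)$ before summing. The supporting technical points — finiteness of the crossing set (from the regular-point hypothesis and Lemma~\ref{lem:winding_as_crossing}) and the fact that only finitely many zero-curves cross a generic level — are routine once the ordering and non-collision of the $x_n(\cdot)$ (a consequence of the simplicity of zeros in Lemma~\ref{lem:BasicFact_ODE}) are invoked.
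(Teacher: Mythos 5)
Your proof is correct and follows essentially the same route as the paper's: pick a level $a$ (the paper's $x^*$) that is Sard-generic for the curves $x_n(\cdot)$, apply Lemma~\ref{lem:winding_as_crossing} to $t \mapsto \theta_t^+[a]$, and identify each signed crossing $\nu_1$ with ${\rm sgn}\left(x_n'(t)\right)$ via the same two derivative computations. The only difference is that you spell out the final counting step --- keeping the difference $\1[x_n(1)>a]-\1[x_n(0)>a]$ finitely supported before telescoping to $m$ --- which the paper compresses into ``arguments similar to the proof of Lemma~\ref{lem:winding_as_crossing}''; this is a correct and faithful elaboration rather than a new approach.
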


\begin{proof}
    We want to use Lemma~\ref{lem:winding_as_crossing} with $z = 1 \in \SS^1$. To do so, we first need to prove that there is $x^* \in \R$ so that $z = 1 \in \SS^1$ is a regular point of $t \mapsto \theta_t^+[x^*]$.

    For $n \in \Z$, we denote by $X_n \subset \R$ the set of nonregular points for $x_n(\cdot)$. By Sard's theorem, $X_n$ has vanishing measure, hence so is $X := \cup_{n \in \N} X_n$. In particular, the set $\R \setminus X$ is non empty, and all $x \in \R \setminus X$ is a regular point for all the maps $x_n(\cdot)$. 
    
    Let $x^* \in \R \setminus X$ be such a point, and let $t^* \in \TT^1$ be such that $\theta_{t^*}^+[x^*] = 1$. From the definition of $\theta$ in~\eqref{eq:def:theta}, this implies that $u_{t^*}(x^*) = 0$, hence there is $n \in \Z$ so that $x_n(t^*)= x^*$. For all $t$ we have $u(t, x_n(t)) = 0$. Differentiating at $t = t^*$ gives
     \[
        x_n'(t^*) = - \dfrac{\partial_t u(t^*, x^*)}{\partial_x u(t^*, x^*)} ,
    \]
    and this number is not null, as $x^*$ is a regular point of $x_n(\cdot)$. On the other hand, differentiating $\theta_t^+[x^*]$ with respect to $t$ gives (we use that $u(t^*, x^*) = 0$)
    \[
        \partial_t \theta_t^+[x^*] = - 2 \ri \dfrac{\partial_t u(t^*, x^*)}{\partial_x u(t^*, x^*)} = 2 \ri x_n'(t^*) \quad \neq 0.
    \]
    We first deduce that $z = 1 \in \SS^1$ is indeed a regular point of $t \mapsto \theta_t^+[x^*]$. In addition, we see that $x_n(t)$ is locally moving forwards ($x_n'(t^*) > 0$) if and only if $\theta_t^+[x^*]$ is locally turning positively ($\nu_1[\theta_t, t^*] = 1$). The proof then follows from Lemma~\ref{lem:winding_as_crossing} and using arguments similar to its proof (see also Figure~\ref{fig:theta}).
\end{proof}


\section{Bulk-Edge for dislocations in the Schrödinger case}
\label{sec:HamiltonianCase}

In this section, we prove our results in the Schrödinger case. We consider $V$ a $1$-periodic function in $L^1_\loc(\R)$ (we write in the sequel $V \in L^1_\per(\R)$), and we define the operator
\[
    H_0 := - \partial_{xx}^2 + V, \quad \text{acting on $L^2(\R)$, with domain $H^2(\R)$,}
\]
Some references for the properties of such Hamiltonians and related ODEs are~\cite{reed1978analysis, poschel1987}. We provide the usual proofs of the following results in Appendix~\ref{appendix:TransfertMatrix} for completeness.
\begin{lemma}~\label{lem:Spectrum_Hill}
    The operator $H_0$ is self-adjoint, and its spectrum is purely essential: $\sigma(H_0) = \sigma_{\ess}(H_0)$. In addition, there is a sequence
    \[
        E_1^- < E_1^+ \le E_2^- < E_2^+ \le E_3^- < E_3^+ \le \dots
    \]
    such that $\sigma(H_0) = \bigcup_{n \in \N^*} [E_n^-, E_n^+]$.
\end{lemma}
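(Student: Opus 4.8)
The plan is to run the classical Floquet--Bloch analysis of Hill's equation through the monodromy matrix, reusing the fundamental solutions $c_V,s_V$ of Section~\ref{ssec:ODE}; this is precisely the content deferred to Appendix~\ref{appendix:TransfertMatrix}. For self-adjointness I would first note that, since $V$ is $1$-periodic and in $L^1_\loc(\R)$, one has $\sup_{x\in\R}\int_x^{x+1}|V| = \int_0^1 |V| < \infty$. Together with the uniform one-dimensional interpolation inequality $\|u\|_{L^\infty([n,n+1])}^2 \le \varepsilon\,\|u'\|_{L^2([n,n+1])}^2 + C_\varepsilon\,\|u\|_{L^2([n,n+1])}^2$, summing over $n \in \Z$ shows that $V$ is infinitesimally form-bounded relative to $-\partial_{xx}^2$. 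The KLMN theorem then yields a self-adjoint operator, bounded below, with form domain $H^1(\R)$; I treat the identification of the operator domain with $H^2(\R)$ as the routine part.

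Next I would set up the monodromy matrix. For $E \in \C$, let $c(\cdot,E)$, $s(\cdot,E)$ be the solutions of $-u''+Vu=Eu$ normalised as in~\eqref{eq:def:fundamental_solutions}, and put
\[
M(E) := \begin{pmatrix} c(1,E) & s(1,E) \\ c'(1,E) & s'(1,E) \end{pmatrix}.
\]
Constancy of the Wronskian gives $\det M(E) = 1$, and analytic dependence of solutions on $E$ makes the \emph{discriminant} $\Delta(E) := \Tr M(E) = c(1,E) + s'(1,E)$ entire in $E$ and real on $\R$. By $1$-periodicity of $V$, a nontrivial solution obeys $u(\cdot+1) = \rho\,u(\cdot)$ exactly when $\rho$ solves $\rho^2 - \Delta(E)\rho + 1 = 0$, so the two Floquet multipliers satisfy $\rho_+\rho_- = 1$.

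The heart of the argument is to identify the spectrum with the sublevel set $\{|\Delta|\le 2\}$. If $|\Delta(E)| < 2$, then $\rho_\pm = \re^{\pm\ri\vartheta}$ with $\vartheta\in\R$, every solution is bounded, and truncating a bounded Bloch solution by widening smooth cutoffs yields a singular (Weyl) sequence for $H_0 - E$, so $E \in \sigma_{\ess}(H_0)$. If $|\Delta(E)| > 2$, then $\rho_\pm$ are real with $|\rho_+| > 1 > |\rho_-|$, giving an exponential dichotomy: one solution decays at $+\infty$ and an independent one decays at $-\infty$, and from these two I would build the Green's kernel of a bounded inverse $(H_0-E)^{-1}$, so $E \notin \sigma(H_0)$. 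The endpoints $|\Delta(E)| = 2$ lie in $\sigma(H_0)$ since the spectrum is closed. This yields
\[
\sigma(H_0) = \{ E \in \R : |\Delta(E)| \le 2 \}.
\]
Moreover, every spectral value is reached by a Weyl sequence while neither a bounded nor an exponentially growing solution is ever in $L^2(\R)$; hence $H_0$ has no eigenvalues and $\sigma(H_0) = \sigma_{\ess}(H_0)$.

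Finally I would extract the band structure by analysing $\Delta$. It is entire, real on $\R$, and since solutions grow like $\cosh(\sqrt{-E}\,x)$ as $E\to-\infty$ one has $\Delta(E)\to+\infty$ there. The levels $\Delta = 2$ and $\Delta = -2$ are the eigenvalues of the periodic ($u(\cdot+1)=u$) and antiperiodic ($u(\cdot+1)=-u$) problems on one period; Sturm--Liouville oscillation theory makes these real, bounded below, accumulating only at $+\infty$, and interlacing as
\[
\lambda_0 < \mu_0 \le \mu_1 < \lambda_1 \le \lambda_2 < \mu_2 \le \mu_3 < \lambda_3 \le \cdots,
\]
where $(\lambda_j)$ are periodic and $(\mu_j)$ antiperiodic eigenvalues, with $\Delta$ strictly monotone between consecutive ones. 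Consequently $\{|\Delta|\le 2\}$ is the disjoint union of the nondegenerate closed bands $[E_n^-,E_n^+]$, the complementary open gaps being $\{|\Delta|>2\}$, which gives the announced ordering $E_1^-<E_1^+\le E_2^-<E_2^+\le\cdots$ (a gap is empty exactly when two consecutive periodic or antiperiodic eigenvalues coincide). The hard part is exactly this last step: the Sturm oscillation analysis that pins down the interlacing together with the strict monotonicity of $\Delta$ between eigenvalues, since this is what guarantees the sublevel set is \emph{precisely} a locally finite union of nondegenerate closed intervals in the stated order, rather than a more complicated closed set. Everything preceding it is mechanical.
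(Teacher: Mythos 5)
Your proposal is correct, and it reaches the same classical conclusion as the paper's Appendix~\ref{appendix:TransfertMatrix}, but by a genuinely different route for the central step. Both arguments hinge on the transfer/monodromy matrix, $\det T_E=1$, and the real-analytic discriminant $\Delta(E)=\Tr T_E$; the difference is how $\sigma(H_0)=\Delta^{-1}([-2,2])$ is established. The paper goes through the Bloch--Floquet direct integral: it writes $\sigma(H_0)=\bigcup_k\sigma(H(k))$ with compact-resolvent fibers $H(k)$ on $L^2([0,1])$, identifies $E\in\sigma(H(k))$ with $\re^{2\ri\pi k}$ being a Floquet multiplier, i.e.\ $\Delta(E)=2\cos(2\pi k)$, and then the band ordering and the absence of isolated points (hence $\sigma=\sigma_{\ess}$) come from the cited monotonicity of the band functions $k\mapsto\varepsilon_{n,k}$. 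You instead argue directly on the line: Weyl sequences built from truncated bounded Floquet solutions when $|\Delta|<2$, an exponential dichotomy plus a Green's-kernel resolvent when $|\Delta|>2$, and the observation that no nontrivial solution is ever in $L^2(\R)$ to rule out eigenvalues. Your route avoids setting up the fiber decomposition but requires the singular-sequence and resolvent constructions by hand; the paper's route gets the "purely essential" claim for free once the spectrum is known to be a locally finite union of nondegenerate intervals. For the final band ordering both arguments lean on the same classical Sturm--Liouville input --- the interlacing of periodic and antiperiodic eigenvalues and the strict monotonicity of $\Delta$ on each band --- which you correctly flag as the genuinely nontrivial step and which the paper likewise delegates to the literature (Reed--Simon XIII.89, Pankrashkin). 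One caveat common to both treatments: for $V$ merely in $L^1_{\per}$, the KLMN construction naturally yields the form domain $H^1(\R)$, and the identification of the operator domain with $H^2(\R)$ is not entirely routine (it requires $Vu\in L^2$ for $u$ in the domain); since the paper asserts self-adjointness on $H^2(\R)$ without proof, this is not a defect of your argument relative to the paper's, but it deserves a remark rather than the label "routine".
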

The interval $[E_n^-, E_{n}^+]$ is called the $n$-th band of $H$, and $g_n := (E_n^+, E_{n+1}^-)$ is the $n$-th gap (with $g_0 := (-\infty, E_1^-)$). If $E_n^+ = E_{n+1}^-$, the gap is empty, and it is open otherwise.

\medskip

Closely related to the operator $H_0$ is the ODE $-u'' + Vu = Eu$. We need the following result.
\begin{lemma} \label{lem:LEpm}
    For all $E \notin \sigma(H_0)$, the vectorial space $\cL(E)$ of solutions to $-u'' + Vu = Eu$ has a splitting $\cL(E) = \cL^+(E) \oplus  \cL^-(E)$, where $\cL^{\pm}(E)$ is the vectorial space of solutions that are square integrable at $\pm \infty$. In addition:
    \begin{itemize}
        \item The spaces $\cL^\pm(E)$ are of dimension $1$, and $\cL^+(E) \cap \cL^-(E) = \emptyset$;
        \item The solutions in $\cL^\pm(E)$ are exponentially decaying at $\pm \infty$;
        \item The maps $E \mapsto \cL^\pm(E)$ are differentiable, in the sense of Definition~\ref{def:contL}.
    \end{itemize}
\end{lemma}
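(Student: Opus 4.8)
The plan is to reduce the whole statement to the spectral analysis of the monodromy (period) matrix of the Hill equation. Writing $-u''+(V-E)u=0$ as a first-order system for $(u,u')$ and using the fundamental solutions $c_{V-E},s_{V-E}$ of Lemma~\ref{lem:BasicFact_ODE}, the propagation over one period is encoded in
\[
M(E):=\begin{pmatrix} c_{V-E}(1) & s_{V-E}(1)\\ c_{V-E}'(1) & s_{V-E}'(1)\end{pmatrix},
\qquad\text{so that}\qquad \big(u(1),u'(1)\big)^{T}=M(E)\,\big(u(0),u'(0)\big)^{T}
\]
for every solution $u$. The Wronskian $c_{V-E}s_{V-E}'-c_{V-E}'s_{V-E}$ is constant and equals $1$ at $x=0$, so $\det M(E)=1$ and $M(E)\in \mathrm{SL}(2,\R)$; its eigenvalues are the roots of $\lambda^{2}-\Delta(E)\lambda+1=0$ with discriminant $\Delta(E):=\mathrm{tr}\,M(E)=c_{V-E}(1)+s_{V-E}'(1)$. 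I take as established (it is the Floquet characterisation proved together with Lemma~\ref{lem:Spectrum_Hill} in Appendix~\ref{appendix:TransfertMatrix}) that $E\in\sigma(H_0)\iff|\Delta(E)|\le 2$. Hence for $E\notin\sigma(H_0)$ we have $|\Delta(E)|>2$, and $M(E)$ has two \emph{distinct real} eigenvalues $\rho,\rho^{-1}$ with $|\rho|>1>|\rho^{-1}|$.

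Next I build the Floquet solutions. If $(\alpha,\beta)$ is an eigenvector of $M(E)$ for the eigenvalue $\lambda$, the solution $\psi:=\alpha c_{V-E}+\beta s_{V-E}$ satisfies $\big(\psi(1),\psi'(1)\big)=\lambda\big(\psi(0),\psi'(0)\big)$; since $V$ is $1$-periodic, $\psi(\cdot+1)$ solves the same equation with the same Cauchy data as $\lambda\psi$, so by uniqueness $\psi(x+1)=\lambda\psi(x)$ for all $x$. Iterating gives $|\psi(x+n)|=|\lambda|^{n}|\psi(x)|$, whence $\psi$ decays exponentially at $+\infty$ when $|\lambda|<1$ and at $-\infty$ when $|\lambda|>1$, and grows exponentially at the opposite end. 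I thus set $\cL^+(E):=\Span\{\psi_+\}$ with $\psi_+$ attached to $\rho^{-1}$, and $\cL^-(E):=\Span\{\psi_-\}$ attached to $\rho$. These are one-dimensional and, belonging to distinct eigenvalues, linearly independent, so $\cL(E)=\cL^+(E)\oplus\cL^-(E)$ and $\cL^+(E)\cap\cL^-(E)=\{0\}$. That $\cL^{\pm}(E)$ is \emph{exactly} the space of solutions square-integrable at $\pm\infty$ (not merely a subspace) follows because any solution is $a\psi_++b\psi_-$: if $b\neq0$ it inherits the exponential growth of $\psi_-$ at $+\infty$ and cannot lie in $L^2$ there, so the space of solutions $L^2$ at $+\infty$ is exactly one-dimensional, and symmetrically at $-\infty$. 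The exponential decay is precisely the iterated estimate above.

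It remains to prove the differentiability of $E\mapsto\cL^{\pm}(E)$ in the sense of Definition~\ref{def:contL}. The functions $c_{V-E},s_{V-E}$ and their $x$-derivatives at $x=1$ depend analytically (hence differentiably) on $E$, by exactly the regularity of Lemma~\ref{lem:ct_and_st_are_periodic} with the parameter now entering through the potential $V-E$; so $M(E)$ and $\Delta(E)$ are differentiable. On a gap $\Delta(E)^2>4$, hence $\lambda_\pm(E)=\tfrac12\big(\Delta(E)\pm\sqrt{\Delta(E)^2-4}\big)$ are differentiable and simple, and the rank-one spectral projection $P_+(E):=\big(M(E)-\rho\, I\big)/(\rho^{-1}-\rho)$ onto the $\rho^{-1}$-eigenline is differentiable. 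Locally in $E$, picking a fixed vector $w$ outside the (continuously varying) $\rho$-eigenline, $(\lambda_c(E),\lambda_s(E)):=P_+(E)w$ is a differentiable nonvanishing eigenvector, giving $\cL^+(E)=\Span\{\lambda_c(E)c_{V-E}+\lambda_s(E)s_{V-E}\}$ as required; the same construction with $\rho\leftrightarrow\rho^{-1}$ handles $\cL^-(E)$.

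I expect the only delicate point to be the \emph{global} differentiable choice of $(\lambda_c,\lambda_s)$ demanded by Definition~\ref{def:contL}, since a single fixed $w$ may eventually fall on the $\rho$-eigenline as $E$ sweeps the gap. This is resolved by a standard patching argument: a gap is a contractible interval, over which the smoothly varying eigenline is a trivial line bundle and therefore admits a global nonvanishing differentiable section; concretely, local representatives differ by a smooth nonvanishing scalar and can be rescaled to agree on overlaps. Everything else is routine linear algebra over $\mathrm{SL}(2,\R)$ once the Floquet characterisation $|\Delta(E)|>2$ is in hand.
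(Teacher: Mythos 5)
Your proof is correct and follows essentially the same route as the paper, which establishes Lemma~\ref{lem:LEpm} via the transfer (monodromy) matrix, its unimodularity, and the discriminant dichotomy $|\Delta(E)|>2$ off the spectrum, all worked out in Appendix~\ref{appendix:TransfertMatrix} (Lemmas~\ref{lem:transfert_matrix} and~\ref{lem:spectrumWithTE}). If anything, your construction of a globally differentiable eigenvector family via the rank-one spectral projections of $M(E)$ is spelled out more explicitly than in the paper, which only records the regularity of $E\mapsto T_E$.
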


\subsection{The bulk index for translated Hill's operators}
\label{ssec:bulk_index}

We now define our bulk index for the $n$-th gap. We denote by $W^{1,1}_\per(\R)$ the Sobolev space of periodic distributions $V$ such that $V$ and $V'$ are in $L^1_\per(\R)$. The following Lemma is straightforward.
\begin{lemma}\label{lem:VinW1per}
    The map $t \mapsto V(\cdot - t)$ is differentiable in $L^1_\loc(\R)$ iff $V \in W^{1,1}_\per(\R)$. 
\end{lemma}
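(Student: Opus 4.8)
The plan is to identify the candidate for $\partial_t V_t$ as $-V'(\cdot - t)$, where $V'$ denotes the distributional derivative, and to show that the existence of this object in $L^1_\loc(\R)$ is exactly equivalent to the differentiability demanded by the definition. I would prove the two implications separately: the forward one ($V \in W^{1,1}_\per \Rightarrow$ differentiable) by a direct difference-quotient estimate, and the reverse one by an approximation argument recovering the weak derivative from the integral characterisation.

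For the direction $V \in W^{1,1}_\per \Rightarrow$ differentiable, fix $t \in \TT^1$ and a compact $K$. Since $V \in W^{1,1}_\per$ has an absolutely continuous representative with a.e. derivative $V' \in L^1_\per$, the fundamental theorem of calculus gives, for $h \neq 0$,
\[
\frac{V(x - t - h) - V(x-t)}{h} + V'(x-t) = \frac{1}{h}\int_0^h \left[ V'(x-t) - V'(x-t-s) \right] \rd s.
\]
Taking the $L^1(K)$ norm and applying Minkowski's integral inequality bounds the right-hand side by $\sup_{0 \le s \le h}\| V'(\cdot - t) - V'(\cdot - t - s)\|_{L^1(K)}$, which tends to $0$ as $h \to 0$ by continuity of translations in $L^1_\loc(\R)$ applied to $V'$. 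Hence the difference quotient of $V_t$ converges in $L^1(K)$ to $-V'(\cdot-t)$; integrating over $K$ shows $t \mapsto \int_K V_t$ is differentiable with derivative $\int_K (-V'(\cdot - t))$, so the family is differentiable in $L^1_\loc$ with $\partial_t V_t = -V'(\cdot - t)$.

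For the converse, let $W_t := \partial_t V_t \in L^1_\loc$ be the derivative provided by the hypothesis; translating the difference quotients gives $W_t = W_0(\cdot - t)$, so $\|W_t\|_{L^1(K)}$ is locally bounded in $t$. The integral characterisation applied to $K=[a,b]$ says precisely that $t \mapsto \int_a^b V_t$ is differentiable with derivative $\int_a^b W_t$; by linearity this extends to step functions, and the local bounds on $\|V_t\|_{L^1(K)}$ and $\|W_t\|_{L^1(K)}$ let me pass to the uniform-in-$t$ limit to obtain, for every $\varphi \in C_c(\R)$,
\[
\frac{\rd}{\rd t}\int_\R V(x-t)\varphi(x)\,\rd x = \int_\R W_t(x)\varphi(x)\,\rd x.
\]
Specialising to $\varphi \in C_c^\infty(\R)$ and using $\int V(x-t)\varphi(x)\,\rd x = \int V(y)\varphi(y+t)\,\rd y$, the left-hand side equals $\int V(y)\varphi'(y+t)\,\rd y$ by dominated convergence; evaluating at $t=0$ yields $\int_\R V \varphi' = \int_\R W_0 \varphi$. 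This is exactly the statement that the distributional derivative of $V$ is $-W_0 \in L^1_\loc$, and periodicity of $V$ then gives $V' = -W_0 \in L^1_\per$, i.e. $V \in W^{1,1}_\per$.

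The main obstacle lies in the converse: the definition only tests differentiability against indicators $\1_K$, whereas identifying the weak derivative requires testing against smooth $\varphi$. Bridging this gap is the content of the approximation step, where the uniform-in-$t$ bound on $\|W_t\|_{L^1(K)}$ (coming from the translation relation $W_t = W_0(\cdot-t)$) is what justifies interchanging the $t$-derivative with the limit of step functions approximating $\varphi$. Everything else, including the a.e.-versus-everywhere bookkeeping in the forward estimate, is routine.
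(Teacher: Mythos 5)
Your proof is correct. Note that the paper offers no argument at all for this lemma --- it is dismissed as ``straightforward'' --- so there is nothing to compare against; your write-up simply supplies the details the author leaves implicit. Both directions check out: the forward implication is the standard difference-quotient estimate via the absolutely continuous representative and continuity of translations in $L^1$, and in the converse you correctly identify the only non-trivial point, namely upgrading differentiability tested against indicators $\1_K$ to differentiability tested against $C_c^\infty$ functions; the translation identity $\partial_t V_t = (\partial_t V_t)|_{t=0}(\cdot - t)$ gives the locally uniform bound needed to pass from step functions to continuous test functions, after which $\int_\R V \varphi' = \int_\R W_0 \varphi$ identifies $-W_0$ as the distributional derivative, and periodicity of $V$ forces $V' \in L^1_\per$.
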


We fix $V \in W^{1,1}_\per(\R)$ and $H_0 := - \partial_{xx}^2+ V$. For $t \in \TT^1$, we set $V_t(x) := V(x-t)$, and
\[
    H(t) := - \partial_{xx}^2 + V_t, \quad \text{acting on $L^2(\R)$ with domain $H^2(\R)$}.
\]
Since $V$ is $1$-periodic, we have $V_{t+1} = V_t$ and $H(t+1) = H(t)$, so $H(\cdot)$ is a $1$-periodic family of operators.

\medskip

Let $\tau_t f(x) := f(x-t)$ be the translation operator acting on $L^2(\R)$. The operator $\tau_t$ is unitary, with $\tau_t^{-1} = \tau_t^* = \tau_{-t}$, and we have $H(t) = \tau_t H_0 \tau_{t}^*$. In particular, the spectrum $\sigma(H(t))$ is independent of $t$, and equals the one of $H_0$ given by Lemma~\ref{lem:Spectrum_Hill}. 

\medskip

Let $E$ be in the $n$-th gap, that we suppose open. The vectorial space $\cL_t(E)$ of solutions of $-u'' + V_t u = E u$ satisfies $\cL_t(E) = \tau_t \cL(E)$, and, by Lemma~\ref{lem:LEpm}, it has a natural splitting  $\cL_t(E) = \cL_t^+(E) \oplus \cL_t^-(E)$ with $\cL_t^\pm(E) = \tau_t \cL^\pm(E)$. The functions in $\cL^\pm(E)$ are the solutions which are exponentially decaying at $\pm \infty$.

The maps $t \mapsto \cL_t(E)$ is periodic in $t$. Also, both $\cL_t^-(E)$ and $\cL_t^+(E)$ are $1$-dimensional.
\begin{lemma} \label{lem:continuityOfL}
    For all $E \in g_n$, the maps $t \mapsto \cL^{\pm}_t(E)$ are differentiable on $\TT^1 \times g_n$, in the sense of Definition~\ref{def:contL}.
\end{lemma}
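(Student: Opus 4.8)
The plan is to reduce everything to the two results already in hand: the translation identity $\cL^\pm_t(E) = \tau_t \cL^\pm(E)$ recorded just before the statement, and the $E$-differentiability of the fixed family $E \mapsto \cL^\pm(E)$ furnished by Lemma~\ref{lem:LEpm}. First I would fix a spanning solution $u^+(\cdot,E)$ of $\cL^+(E)$ depending $C^1$ on $E$; concretely, Lemma~\ref{lem:LEpm} gives $u^+(x,E) = \mu_c(E)c_{V-E}(x) + \mu_s(E)s_{V-E}(x)$ with $\mu_c,\mu_s \in C^1(g_n)$, and the fundamental solutions are jointly regular in $(E,x)$ (the proof of Lemma~\ref{lem:ct_and_st_are_periodic}, applied to the differentiable family $E \mapsto V - E$, works verbatim without using periodicity). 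Since $\tau_t u^+(\cdot,E) = u^+(\cdot - t, E)$ still decays at $+\infty$ and solves $-w'' + (V_t - E)w = 0$, it spans $\cL^+_t(E)$. By Lemma~\ref{lem:BasicFact_ODE} applied to the potential $V_t - E$, this translated solution is recovered from its Cauchy data at $0$, so that
\[
\cL^+_t(E) = \Span\left\{ \lambda_c(t,E)\, c_t + \lambda_s(t,E)\, s_t \right\}, \quad \lambda_c(t,E) := u^+(-t,E), \ \ \lambda_s(t,E) := \partial_x u^+(-t,E),
\]
where $c_t := c_{V_t - E}$ and $s_t := s_{V_t - E}$ are the fundamental solutions of the equation under study. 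Because a non-null solution and its derivative never vanish simultaneously (Lemma~\ref{lem:BasicFact_ODE}), the pair $(\lambda_c,\lambda_s)$ never vanishes, so this genuinely describes a $1$-dimensional space.

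It then remains to verify that $\lambda_c$ and $\lambda_s$ are continuously differentiable on $\TT^1 \times g_n$, which is Definition~\ref{def:contL} read with the two parameters $(t,E)$. The coefficient $\lambda_c(t,E) = u^+(-t,E)$ is the composition of the $C^1$ map $(t,E)\mapsto(-t,E)$ with $u^+$, hence $C^1$, its $t$-derivative being the continuous function $-\partial_x u^+(-t,E)$. The coefficient $\lambda_s(t,E) = \partial_x u^+(-t,E)$ is where I expect the only real obstacle to lie, and the only place where $V \in W^{1,1}_\per$ is used beyond Lemma~\ref{lem:VinW1per}. Its $t$-derivative equals $-\partial_x^2 u^+(-t,E) = -(V(-t)-E)\,u^+(-t,E)$ by the ODE, so continuity of $\partial_t \lambda_s$ requires $V$ itself to be \emph{continuous}. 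This is exactly what the hypothesis provides, since $W^{1,1}_\per(\R) \hookrightarrow C^0_\per(\R)$: consequently every solution is of class $C^2$ in $x$ and $\partial_x^2 u^+$ is jointly continuous in $(t,E)$. Continuity of $\partial_E \lambda_s = \partial_E\partial_x u^+(-t,E)$ follows from the joint $(E,x)$-regularity of the fundamental solutions noted above, obtained by differentiating their defining Cauchy problem in $E$ to get an inhomogeneous linear ODE with continuous data, whence $\partial_E\partial_x c_{V-E}$ and $\partial_E\partial_x s_{V-E}$ are continuous.

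This shows $\lambda_c,\lambda_s \in C^1(\TT^1 \times g_n)$ and completes the case of $\cL^+$. The argument for $\cL^-_t(E)$ is identical, starting from a spanning solution $u^-(\cdot,E)$ of $\cL^-(E)$ that decays at $-\infty$. In summary, the proof is a direct transcription of the translation structure $\cL^\pm_t(E) = \tau_t\cL^\pm(E)$, with the whole difficulty concentrated in keeping the derivative-coefficient $\lambda_s$ of class $C^1$ after translation; the continuity of $V$ guaranteed by $V \in W^{1,1}_\per$ is precisely what makes this step go through.
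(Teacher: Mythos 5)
Your proof is correct, and it rests on the same single idea as the paper's: the translation covariance $\cL^\pm_t(E)=\tau_t\cL^\pm(E)$ reduces everything to the $E$-regularity supplied by Lemma~\ref{lem:LEpm}. The execution, however, differs in a way worth noting. The paper's proof simply declares that $c_0(\cdot-t)$ and $s_0(\cdot-t)$ \emph{are} the fundamental solutions of the translated equation, so that the coefficients $\lambda_c,\lambda_s$ of Definition~\ref{def:contL} are constant and there is nothing to check. Strictly speaking this is an abuse: the translates solve the translated equation but do not satisfy the normalisation $c_t(0)=s_t'(0)=1$, $c_t'(0)=s_t(0)=0$ at the origin, so they form a (perfectly good, smoothly varying) basis of $\cL_{V_t}$ that is not the one Definition~\ref{def:contL} literally refers to. You instead express the translated spanning solution in the genuine fundamental basis, which forces $\lambda_c(t,E)=u^+(-t,E)$ and $\lambda_s(t,E)=\partial_x u^+(-t,E)$, and you then have to verify that these Cauchy data are $C^1$; this is where $\partial_t\lambda_s=-(V(-t)-E)u^+(-t,E)$ appears and where the continuity of $V$, i.e.\ the embedding $W^{1,1}_\per(\R)\hookrightarrow C^0_\per(\R)$, is genuinely used. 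So your argument is the more scrupulous reading of Definition~\ref{def:contL} and correctly locates the role of the hypothesis $V\in W^{1,1}_\per$, at the price of a longer verification; the paper's one-line version is equivalent if one accepts that Definition~\ref{def:contL} only asks for \emph{some} continuously differentiable spanning family, not one written in the normalised basis. Both proofs are sound.
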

See Section~\ref{ssec:proof:lem_continuityOfL} for the proof. According to Section~\ref{ssec:MaslovIndex}, we can attach a Maslov index to the families $\left(\cL_t^\pm(E) \right)_{t \in \TT^1}$. By continuity in $E$, these indices are independent of $E \in g_n$. We denote by $\cM^\pm$ these indices.

\begin{lemma} \label{lem:M+=M-}
    $\cM^+ = \cM^-$.
\end{lemma}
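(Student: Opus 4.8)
The plan is to reduce both Maslov indices to a count of zeros per period of the fixed decaying solutions, and to prove that these two counts agree by a Sturm interlacing argument.

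First I would exploit the translation structure. Since $\cL_t^\pm(E) = \tau_t \cL^\pm(E)$, fix nonzero real solutions $u^\pm \in \cL^\pm(E)$, exponentially decaying at $\pm\infty$ by Lemma~\ref{lem:LEpm}, so that $\cL_t^\pm(E) = \Span\{u^\pm(\cdot - t)\}$. Because $V$ is $1$-periodic, $u^\pm(\cdot+1)$ is again a solution decaying at $\pm\infty$; as $\cL^\pm(E)$ is one-dimensional (Lemma~\ref{lem:LEpm}), there is $\rho^\pm \in \R^*$ with $u^\pm(\cdot+1) = \rho^\pm u^\pm$. By Remark~\ref{rem:theta_Vectu} this gives $\theta[u^\pm,\cdot+1] = \theta[u^\pm,\cdot]$, so $\theta[u^\pm,\cdot]$ is $1$-periodic in $x$ and the zero set $\cZ[u^\pm]$ is invariant under the shift $x\mapsto x+1$. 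I denote by $p^\pm \ge 0$ the (finite) number of zeros in any half-open interval of length $1$.

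Second, I would identify $\cM^\pm$ with $p^\pm$. Writing the zeros of $u^\pm$ as $\cdots < z_{n-1}^\pm < z_n^\pm < \cdots$, the zeros of $u_t^\pm = u^\pm(\cdot - t)$ are $x_n(t) = z_n^\pm + t$, which are exactly the continuously differentiable functions parametrising $\cZ_t^\pm$ as in~\eqref{eq:parametrisation_of_Z}. The shift-invariance of $\cZ[u^\pm]$ gives $z_n^\pm + 1 = z_{n+p^\pm}^\pm$, hence $x_n(1) = x_{n+p^\pm}(0)$, and Lemma~\ref{lem:MaslovIndex} yields $\cM^\pm = p^\pm$.

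Third, I would show $p^+ = p^-$ by interlacing. The solutions $u^+$ and $u^-$ are linearly independent since $\cL^+(E)\cap\cL^-(E) = \emptyset$, so their Wronskian $W = u^+ (u^-)' - (u^+)' u^-$ is a nonzero constant. At any simple zero $a$ of $u^+$ one has $W = -(u^+)'(a)\,u^-(a)$, forcing $u^-(a)\neq 0$; and since $(u^+)'$ takes opposite signs at two consecutive zeros of $u^+$, the values of $u^-$ there have opposite signs, so $u^-$ has an odd (hence, by the symmetric argument applied with the roles of $u^+$ and $u^-$ exchanged, exactly one) zero strictly between them. This strict interlacing of the zeros forces each period to contain the same number of zeros of the two solutions, i.e. $p^+ = p^-$. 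Combining the three steps gives $\cM^+ = p^+ = p^- = \cM^-$.

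The main obstacle, in my view, is not the interlacing (which is classical Sturm theory) but making the first step airtight: one must check that the exponentially decaying solutions furnished by Lemma~\ref{lem:LEpm} may be taken real, and that the one-dimensionality of $\cL^\pm(E)$ genuinely forces the Floquet relation $u^\pm(\cdot+1)=\rho^\pm u^\pm$ with a \emph{real} multiplier, since it is precisely this quasi-periodicity that makes ``number of zeros per period'' well defined and equal to the Maslov index.
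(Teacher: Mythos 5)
Your proof is correct, but it takes a genuinely different route from the paper. The paper's argument is purely topological and very short: it forms the ratio $\Omega_t(E,x) = \theta[\cL^+_t(E),x]\,\bigl(\theta[\cL^-_t(E),x]\bigr)^{-1}$, observes that $\Omega_t$ never equals $1\in\SS^1$ because $\cL^+_t(E)\cap\cL^-_t(E)=\emptyset$ (Lemma~\ref{lem:LEpm}), concludes from the converse of Lemma~\ref{lem:winding_as_crossing_weak} that $W[\Omega]=0$, and then uses the homomorphism property of the winding number to get $\cM^+-\cM^-=0$. Your argument instead identifies each of $\cM^\pm$ with the number of zeros per period of a fixed Floquet solution and then invokes Sturm interlacing via the constancy of the Wronskian; all three of your steps are sound, and the point you flag as delicate (real decaying solutions with a real Floquet multiplier) is indeed available --- the paper works with real solutions throughout and records $u(x+1)=\lambda u(x)$ with $\lambda\in(-1,1)$ in Appendix~\ref{appendix:TransfertMatrix}. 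What each approach buys: yours proves more, since the identification $\cM^\pm=p^\pm$ is exactly the computation the paper postpones to the proof of Proposition~\ref{prop:bulk=n} (and only carries out there for $\cM^+$), so your lemma comes bundled with half of the proof that $\cB_n=n$; the paper's argument is shorter, needs no Floquet or Sturm theory, and --- importantly --- transfers verbatim to the Dirac setting of Section~\ref{ssec:Dirac_bulkIndex}, where a zero-counting argument is unavailable because the components of the relevant solutions never vanish (Lemma~\ref{lem:up=down}).
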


\begin{proof}
    We define $\Omega_t(E, x) := \theta[\cL^+_t(E),x] \left( \theta[\cL^-_t(E),x] \right)^{-1} \in \SS^1$. Since $\cL_t^+(E) \cap \cL_t^-(E) = \emptyset$ for all $t \in \TT^1$, we have $\theta^+_t(x) \neq \theta^-_t(x)$ for all $t \in \TT^1$. In particular, $\Omega_t(E, x)$ is never equal to $1 \in \SS^1$. By the reciprocal of Lemma~\ref{lem:winding_as_crossing_weak}, we deduce that its winding is null. Together with the fact that the winding number is a group homomorphism, we obtain
    \[
        0 = W [t \mapsto \Omega_t(E, x)] = W[ t \mapsto \theta[\cL^+_t(E),x]] - W[t \mapsto  \theta[\cL^-_t(E),x]] = \cM^+ - \cM^-.
    \]
\end{proof}

\begin{definition}[Bulk index] \label{def:bulkIndex}
    We define the {\em bulk index} $\cB_n$ as the common Maslov index $\cB_n = \cM^+ = \cM^-$.
\end{definition}

\begin{proposition} \label{prop:bulk=n}
    If the $n$-th gap is open, then $\cB_n = n$.
\end{proposition}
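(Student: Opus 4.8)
The plan is to reduce the Maslov index $\cM^+$ to a count of zeros per period of a single Floquet solution, and then to evaluate that count by oscillation theory. First I would exploit the translation structure $\cL_t^+(E) = \tau_t \cL^+(E)$ established above. Fix a nonzero $u^+$ spanning $\cL^+(E)$; then $u_t^+ := \tau_t u^+$, i.e. $u_t^+(x) = u^+(x-t)$, spans $\cL_t^+(E)$, so by Remark~\ref{rem:theta_Vectu} the zero set satisfies $\cZ_t^+ = \cZ[u^+] + t$. Hence the continuously differentiable parametrisation of zeros from~\eqref{eq:parametrisation_of_Z} is simply linear, $x_k(t) = z_k + t$, where $\cZ[u^+] = (z_k)_{k \in \Z}$. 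Since $u^+$ decays at $+\infty$ it is a Floquet solution, $u^+(x+1) = \rho\, u^+(x)$ with $\rho \in (-1,1)\setminus\{0\}$; as $\rho \neq 0$, the set $\cZ[u^+]$ is invariant under $x \mapsto x+1$, so it has a well-defined number $Z^+ := \#\left(\cZ[u^+] \cap [0,1)\right)$ of zeros per period, and $z_{k+Z^+} = z_k + 1$ for all $k$. Therefore $x_k(1) = z_k + 1 = x_{k+Z^+}(0)$, and Lemma~\ref{lem:MaslovIndex} gives $\cM^+ = Z^+$. The same reasoning applied to $u^-$ gives $\cM^- = Z^-$, consistent with Lemma~\ref{lem:M+=M-}.

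It then remains to prove $Z^+ = n$, which is where the gap index enters. I would argue by continuity in $E$ together with an evaluation at a band edge. On the open gap $g_n$ the solution $u^+(E)$ depends continuously on $E$ by Lemma~\ref{lem:LEpm}, its zeros are simple, and the count $Z^+$ in a fundamental domain cannot jump: by $1$-periodicity of $\cZ[u^+]$, whenever a zero exits $[0,1)$ through $x=1$ its shifted copy enters through $x=0$, so $Z^+$ is a continuous integer-valued function of $E \in g_n$, hence constant. To identify its value I would let $E$ tend to the lower band edge $E_{n+1}^-$: there $\rho \to \pm 1$ and $u^+(E)$ converges to the periodic (for $n$ even) or antiperiodic (for $n$ odd) band-edge eigenfunction. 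By the classical oscillation theorem for Hill's equation, the eigenfunction at the lower edge of the $(n+1)$-st band has exactly $n$ zeros per period, which yields $Z^+ = n$ and hence $\cB_n = \cM^+ = n$.

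The hard part is this last step: pinning down the \emph{exact} number of zeros rather than just its parity. The parity is cheap — on $g_n$ the sign of the multiplier $\rho$ equals the sign of the discriminant $\Delta(E)$ (the trace of the monodromy matrix), which exceeds $2$ in even gaps and lies below $-2$ in odd gaps; since $u^+(1) = \rho\, u^+(0)$, this fixes the parity of $Z^+$. But the precise value $Z^+ = n$ requires the full oscillation/interlacing statement relating band edges to the zero counts of periodic and antiperiodic eigenfunctions.

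If one prefers a self-contained route avoiding the band-edge limit, I would instead track zeros as $E$ decreases: for $E \to -\infty$ one is below the spectrum, $u^+$ has no zeros, so $Z^+ = 0$, and I would show $Z^+$ increases by exactly one across each band. The accounting is carried by the spatial phase $\theta[u,\cdot]$, whose derivative one computes directly from the equation $u'' = Vu$ as
\[
    -\ri\, \frac{\theta'[u,x]}{\theta[u,x]} = \frac{2\left(V u^2 - (u')^2\right)}{u^2 + (u')^2},
\]
which equals $-2 < 0$ at every zero of $u$; thus each zero is a crossing of $1 \in \SS^1$ with a definite (negative) sign, so that zeros of $u^+$ on a period are counted by a genuine winding of $\theta[u^+,\cdot]$, and this winding is exactly what jumps by one per band. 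Either way the output is $Z^+ = n$, completing $\cB_n = n$.
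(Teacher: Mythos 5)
Your first half coincides with the paper's: both reduce $\cM^+$ via Lemma~\ref{lem:MaslovIndex} to the number of zeros per period of a single Floquet solution $u^+\in\cL^+(E)$, using $x_k(t)=z_k+t$ and the multiplier $\rho\in(-1,1)\setminus\{0\}$. Where you diverge is in evaluating that count. The paper translates $u^+$ so that one of its zeros sits at $x=0$; then $u^+|_{[0,1]}$ is a Dirichlet eigenfunction of $-\partial_{xx}^2+V$ on $[0,1]$ with eigenvalue $E$, and it invokes the facts (proved self-containedly in Appendix~\ref{sec:appendix:Spectra} by deforming $V\rightsquigarrow sV$ down to the free Laplacian, where the eigenfunctions are $\sin(n\pi x)$) that the $n$-th gap contains exactly one Dirichlet eigenvalue $\delta_n$ and that its eigenfunction vanishes exactly $n$ times per period. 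You instead hold the zero count constant in $E$ across the open gap and evaluate it in the limit $E\to E_{n+1}^-$, where $u^+(E)$ converges (projectively) to the periodic or antiperiodic band-edge eigenfunction, and then import Haupt's oscillation theorem for the exact count $n$. This is a legitimate route — note that the convergence at the band edge is unproblematic precisely because $g_n$ is open, so $T_{E_{n+1}^-}$ is a nontrivial Jordan block with a one-dimensional eigenspace — but, as you acknowledge, the entire arithmetic content is then outsourced to the classical oscillation theorem, which is of the same depth as what needs proving; the paper's Dirichlet-segment detour is essentially a way of \emph{proving} that input by an elementary homotopy rather than citing it. Your alternative sketch (tracking $Z^+$ from $E=-\infty$ and asserting it jumps by one per band) has the same issue: the phase computation $-\ri\,\theta'/\theta=-2$ at zeros is correct and useful, but the claim that the spatial winding increases by exactly one across each band is again the oscillation theorem in disguise and would need its own argument. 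In short: correct modulo a classical citation, with a genuinely different (band-edge) mechanism for the final count, which is slicker to state but less self-contained than the paper's Dirichlet argument.
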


\begin{proof}
    Let $u$ be a solution in $\cL^+(t = 0,E)$. We set $u_t(x) := u(x-t)$, so that $u_t \in \cL_t^+(E)$ for all $t \in \R$, and $\cZ(t) := \cZ_{u_t}$. If $\left\{ \cdots < x_n < x_{n+1} < \cdots \right\}$ is a parametrisation of $\cZ(0)$, then 
    \[
        \cZ(t) = \left\{ \cdots < x_n + t < x_{n+1} + t < \cdots \right\}.
    \]
    Up to global translation, we may assume $x_0 = 0$. From Lemma~\ref{lem:MaslovIndex}, $\cB_n$ is the integer so that $x_{\cB_n} = x_0 +1 = 1$. Hence $\cB_n$ is also the number of zeros of $u$ in the interval\footnote{We prove in Appendix~\ref{appendix:TransfertMatrix} that there is $\lambda \in (-1, 1)$ such that $u(x + 1) = \lambda u(x)$. So if $x \in \cZ(t)$, then $x +1 \in \cZ(t)$ as well, and $\cB_n$ is the number of zeros of $u$ in any interval of the form $[x, x+1)$.} $[0, 1)$.
    
    \medskip
    
    Since $u$ vanishes at $x = 0$ and $x = 1$, $u |_{[0, 1]}$ is an eigenvector (corresponding to the eigenvalue $E$) of the Dirichlet operator
    \[
        H^D := - \partial_{xx}^2 + V \quad \text{acting on $L^2([0,1])$ with domain $H^2_0([0, 1])$}.
    \]
    It is a well-known fact (we provide a proof in Appendix~\ref{sec:appendix:Spectra} for completeness) that the spectrum of $H^D$ is discrete, composed of simple eigenvalues $\delta_1 < \delta_2 < \cdots$, that $\delta_n$ is the only eigenvalue of $H^D$ in the $n$-th gap of $H_0$, and that its corresponding eigenvector vanishes $n$ times in the interval $[x_n, x_n + 1)$. We deduce that $E = \delta_n$, and $u$ is the corresponding eigenvector. The proof follows.
\end{proof}

\subsection{The Chern number}
\label{ssec:Chern}

There is another natural bulk index that we can define, which corresponds to a Chern number. 

For all $t \in \TT^1$, the operator $H(t)$ commutes with $\Z$-translations, hence can be Bloch decomposed (see {\em e.g.}~\cite[Chapter XIII.16]{reed1978analysis} for instance). For $k \in \R$, we denote by $H(t, k)$ the Bloch fibers
\[
H(t, k) := - \partial_{xx}^2 + V_t \quad \text{acting on $L^2([0,1])$, with domain $H^2_k$},
\]
where we introduced (the normalisation here differs from the usual one)
\begin{equation} \label{eq:def:H2k}
H^2_k := \left\{ u \in H^2_\loc(\R), \ u(x +1) = \re^{ 2 \ri \pi k} u(x)  \right\}.
\end{equation}
The spaces $H^2_k$ are $1$-periodic in $k$, so the operators $H(t, k)$ are periodic in both $t$ and $k$. In the sequel, we write $\TT^2 := \TT^1 \times \TT^1$ for the $2$-torus where $(t,k)$ lives. If the $n$-th gap is open, and if $E$ lies into this gap, we can define the spectral projector
\[
    P_n(t,k) := \1 \left( H(t,k) \le E \right) \quad \text{acting on} \quad L^2([0,1]).
\]
This is the rank-$n$ projector on the eigenvectors corresponding to the $n$ lowest eigenvalues of $H(t,k)$. In particular, $\Ran \, P_n(t, k) \subset H^2_k$. The proof of the next Lemma is postponed until Section~\ref{ssec:proof:regularityPn}. We denote by $\cB(L^2([0,1]))$ the Banach space of bounded operators acting on $L^2([0,1])$.
\begin{lemma} \label{lem:regularityPn}
    Assume $W \in W^{1,1}_\per(\R)$. Then $(t, k) \to P_n(t, k)$ is continuously differentiable from $\TT^2$ to $\cB(L^2([0,1]))$, that is $\| \partial_{t} P_n \|_{\cB(L^2([0,1]))} +  \| \partial_{k} P_n \|_{\cB(L^2([0,1]))} < \infty$ for all $(t,k) \in \TT^2$.
\end{lemma}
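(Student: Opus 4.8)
The plan is to realise $P_n(t,k)$ as a Riesz contour projection and differentiate under the integral sign, the only genuine difficulty being that $\partial_t V_t = -V'(\cdot - t)$ lies in $L^1_\per(\R)$ but not in $L^\infty$, so the perturbation is unbounded and must be tamed through the smoothing properties of the resolvent. First I would remove the $k$-dependence of the domain. Let $U_k$ be the unitary map $(U_k u)(x) := \re^{-2\ri\pi k x} u(x)$, which sends $H^2_k$ onto the fixed periodic space $H^2_0$. A direct computation gives $\widetilde H(t,k) := U_k H(t,k) U_k^{-1} = (-\ri\partial_x + 2\pi k)^2 + V_t$, acting on $L^2([0,1])$ with the $k$-independent domain $H^2_0$, and $k \mapsto U_k$ is analytic in $\cB(L^2([0,1]))$. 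Since $P_n(t,k) = U_k^{-1}\, \1(\widetilde H(t,k) \le E)\, U_k$, it suffices to prove continuous differentiability of $\widetilde P_n(t,k) := \1(\widetilde H(t,k)\le E)$. On the fixed domain, $\partial_k \widetilde H = 4\pi(-\ri\partial_x + 2\pi k)$ is first order, hence $\widetilde H$-bounded with relative bound $0$, while $\partial_t \widetilde H = -V'(\cdot - t)$ is multiplication by an $L^1$ function.

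Next I would fix a uniform spectral gap. Because $E$ lies in the open gap $g_n$ of $H_0$ and $\bigcup_{(t,k)\in\TT^2}\sigma(\widetilde H(t,k)) = \sigma(H_0)$, the energy $E$ is never in $\sigma(\widetilde H(t,k))$; the map $(t,k)\mapsto\mathrm{dist}(E,\sigma(\widetilde H(t,k)))$ is continuous and positive on the compact $\TT^2$, hence bounded below by some $\delta>0$. I may therefore fix once and for all a contour $\mathcal C\subset\C$ enclosing exactly the $n$ lowest bands and staying at distance $\ge\delta/2$ from every $\sigma(\widetilde H(t,k))$, and write
\[
\widetilde P_n(t,k) = \frac{1}{2\ri\pi}\oint_{\mathcal C}\left(z-\widetilde H(t,k)\right)^{-1}\rd z .
\]

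It then remains to show that $(t,k)\mapsto (z-\widetilde H(t,k))^{-1}$ is continuously differentiable in $\cB(L^2([0,1]))$, uniformly for $z\in\mathcal C$, and to differentiate under the integral. The formal derivatives are
\[
\partial_\bullet (z-\widetilde H)^{-1} = (z-\widetilde H)^{-1}(\partial_\bullet \widetilde H)(z-\widetilde H)^{-1}, \qquad \bullet\in\{t,k\},
\]
and the essential point is that the right-hand side is bounded on $L^2([0,1])$ even for $\partial_t\widetilde H = -V'(\cdot-t)\in L^1$. Indeed, by the Sobolev embedding $H^2([0,1])\hookrightarrow L^\infty([0,1])$ the resolvent maps $L^2\to L^\infty$ boundedly, and by duality (using self-adjointness, so that the adjoint is the resolvent at $\bar z$, with $\mathrm{dist}(\bar z,\sigma)=\mathrm{dist}(z,\sigma)$ since the spectrum is real) it maps $L^1\to L^2$ boundedly; thus
\[
(z-\widetilde H)^{-1}:\ L^2 \longrightarrow L^\infty \xrightarrow{\ V'(\cdot-t)\ } L^1 \longrightarrow L^2
\]
is bounded, with norm controlled by $\|V'\|_{L^1}$ and the resolvent bounds, uniformly for $z\in\mathcal C$. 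Convergence of the difference quotients to these expressions follows from the second resolvent identity together with the same three-factor smoothing estimate, and continuity of the derivative follows from the strong continuity of $t\mapsto V'(\cdot-t)$ in $L^1_\per(\R)$ and the norm-continuity of the resolvent. Integrating over the compact contour $\mathcal C$ gives the continuous differentiability of $\widetilde P_n$, and conjugating back by the analytic family $U_k$ yields the claim for $P_n$.

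The main obstacle is exactly the last step: since $\partial_t V_t$ is only $L^1$, the naive product $(\partial_t\widetilde H)(z-\widetilde H)^{-1}$ need not be bounded, and one must place the perturbation \emph{between} two resolvents and exploit the $L^2\to L^\infty$ and $L^1\to L^2$ smoothing on each side. This is precisely where the hypothesis $V\in W^{1,1}_\per(\R)$ (so that $V'\in L^1_\per$) enters.
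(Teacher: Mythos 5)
Your proposal is correct and follows essentially the same route as the paper: both realise $P_n$ as a Riesz contour integral and control the formal derivative $R\,(\partial_t V_t)\,R$ by sandwiching the merely-$L^1$ perturbation $-V'(\cdot-t)$ between two resolvents, using $H^2_k\hookrightarrow L^\infty([0,1])$ on one side and its dual $L^1\to L^2$ bound on the other (the paper phrases this as the bilinear estimate $|\bra R^*u,(\partial_t V_t)Rv\ket|\le \|\partial_t V_t\|_{L^1}\|R^*u\|_{L^\infty}\|Rv\|_{L^\infty}$). Your explicit handling of the $k$-derivative via the gauge transformation $U_k$ and of the difference quotients only fills in steps the paper dismisses as ``similar and standard.''
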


For such family of operators, we can define a {\rm Chern number}\footnote{Since $V$ is real-valued, we have $P(t, -k) = K P(t,k) K$, where $K$ is the complex conjugation operator. This is different from time-reversal symmetry~\cite{panati2007triviality}, which is of the form $P(-t, -k) = K P(t,k) K$.}, which is given by the integral
\begin{equation} \label{eq:def:Chern}
    \Ch \left( P_n \right) := \dfrac{1}{2 \ri \pi} \iint_{\TT^2}  \Tr_{L^2([0,1])} \left( P_n \rd P_n \wedge \rd P_n  \right) \qquad \in \Z.
\end{equation}
Let us give another characterisation of this Chern number (and prove that it is indeed integer valued) using the notion of frames. This was already used in {\em e.g.}~\cite{cances2017robust, cornean2019localised}. 

\medskip

We say that a family of $n$-vectors $\Psi(t,k) = \left( \psi_1, \psi_2, \cdots \psi_n \right)(t,k) \in \left(L^2([0,1]) \right)^n$ is a (continuously differentiable) frame for $P_n$ on the cut torus $\dot \TT^2 := [0,1] \times \TT^1$ if
\begin{itemize}
    \item For all $(t,k)$ in $\dot \TT^2$, we have $\bra \psi_i(t,k), \psi_j(t,k) \ket = \delta_{i,j}$, which we write $\Psi^*(t,k) \Psi(t,k) = \bbI_n$.
    \item For all $(t,k) \in \dot \TT^2$, we have $P_n(t,k) = \sum_{i=1}^n | \psi_i(t,k) \ket \bra \psi_i(t,k) |$ which we write $P(t,k) = \Psi(t,k) \Psi^*(t,k)$.
    \item The map $(t,k) \mapsto \Psi(t,k)$ is continuously differentiable on $\dot \TT^2$.
\end{itemize}
The first condition ensures that the family $\Psi$ is an orthonormal family of $n$ vectors, and the second condition states that this family span the range of $P_n$. By periodicity of $t \mapsto P_n(t, k)$, we get that for all $k \in \TT^1$, the families $\Psi(t = 1,k)$ and $\Psi(t = 0,k)$ both span the range of $P_n(0, k)$, hence there is a unitary $U(k) \in \U(n)$ such that
\begin{equation} \label{eq:def:obstructionMatrix}
    \forall k \in \TT^1, \quad \Psi(0,k)= \Psi(1, k) U(k), 
    \quad \text{which gives} \quad
    U(k) = \Psi(1,k)^* \Psi(0,k).
\end{equation}
Since both $\Psi(0,k)$ and $\Psi(1,k)$ are periodic in $k$, then so is $U(k)$. In particular, $k \mapsto \det U(k)$ is a map from $\TT^1$ to $\SS^1$.
\begin{lemma}
    The Chern number of $P_n$ defined in~\eqref{eq:def:Chern} equals the winding number of $k \mapsto \det U(k)$. In particular, the latter quantity is independent of the frame.
\end{lemma}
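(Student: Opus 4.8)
The plan is to use the frame $\Psi$ to trivialise the Chern form on the cut torus $\dot\TT^2 = [0,1]\times\TT^1$ and then apply Stokes' theorem, reducing a $2$-dimensional integral over $\TT^2$ to a $1$-dimensional integral over the boundary, which will turn out to compute the winding of $\det U$. The key auxiliary object is the Berry connection $A := \Psi^* \rd \Psi$, an $n\times n$ matrix of $1$-forms on $\dot\TT^2$. Differentiating $\Psi^*\Psi = \bbI_n$ shows that $A$ is anti-Hermitian and, in particular, that $\Tr A$ is a well-defined scalar $1$-form. I would first observe that the integrand $\Tr(P_n \rd P_n \wedge \rd P_n)$ depends only on $P_n$ (it is globally defined and continuous on all of $\TT^2$ by Lemma~\ref{lem:regularityPn}), so the integral in~\eqref{eq:def:Chern} makes sense independently of any frame; the frame enters only as a computational device.

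The heart of the argument is the pointwise identity $\Tr(P_n \rd P_n \wedge \rd P_n) = \rd\, \Tr(\Psi^* \rd \Psi)$, valid on $\dot\TT^2$. To establish it I would expand $\rd P_n = (\rd\Psi)\Psi^* + \Psi(\rd\Psi^*)$, insert this into $P_n = \Psi\Psi^*$, and repeatedly use $\Psi^*\Psi = \bbI_n$ together with $(\rd\Psi^*)\Psi = A^* = -A$. After simplification the two mixed terms cancel and, using that $\Tr(A\wedge A) = 0$ for any matrix of $1$-forms, one is left with
\[
\Tr(P_n \rd P_n \wedge \rd P_n) = \Tr\left( (\rd\Psi^*)\wedge (\rd\Psi)\right).
\]
On the other hand $\rd\, \Tr A = \Tr\, \rd(\Psi^*\rd\Psi) = \Tr\left( (\rd\Psi^*)\wedge \rd\Psi\right)$ since $\rd^2 = 0$, which gives the identity.

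With this in hand, Stokes' theorem on $\dot\TT^2$ yields $\Ch(P_n) = (2\ri\pi)^{-1}\oint_{\partial\dot\TT^2}\Tr A$. Since $k$ lives on a circle, the boundary consists only of the two slices $\{t=1\}\times\TT^1$ and $\{t=0\}\times\TT^1$, with opposite orientations, so the integral reduces to
\[
\Ch(P_n) = \frac{1}{2\ri\pi}\int_0^1 \left[ \Tr\left( \Psi(1,k)^*\partial_k\Psi(1,k)\right) - \Tr\left(\Psi(0,k)^*\partial_k\Psi(0,k)\right)\right]\rd k.
\]
I then substitute the obstruction relation $\Psi(0,k) = \Psi(1,k)U(k)$ from~\eqref{eq:def:obstructionMatrix}. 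Differentiating in $k$ and using $\Psi(1,k)^*\Psi(1,k) = \bbI_n$ and the cyclicity of the trace, the two ``diagonal'' contributions cancel and what remains is $-\Tr\left( U(k)^*\partial_k U(k)\right) = -\partial_k \log\det U(k)$ by Jacobi's formula. Hence $\Ch(P_n)$ equals, up to sign, $(2\ri\pi)^{-1}\int_0^1 (\det U)'(k)/\det U(k)\,\rd k$, which is precisely $W[k\mapsto\det U(k)]$ by the integral characterisation of the winding number established earlier.

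I expect the main obstacle to be bookkeeping rather than conceptual: carefully fixing the orientation of $\partial\dot\TT^2$ so that the sign of the resulting winding number matches the sign convention in~\eqref{eq:def:Chern} (my tentative computation produces a sign that must be reconciled with the orientation on $\TT^2$), and checking the cancellations both in the pointwise identity and in the substitution step. Once the chain of equalities is complete, the frame-independence asserted in the statement is automatic: the left-hand side $\Ch(P_n)$ manifestly does not depend on $\Psi$, so neither does the winding $W[\det U]$.
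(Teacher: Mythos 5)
Your proposal is correct and follows essentially the same route as the paper: trivialise via the frame, establish the pointwise identity $\Tr(P_n\,\rd P_n\wedge\rd P_n)=\rd\,\Tr(\Psi^*\rd\Psi)$ using $\Psi^*\Psi=\bbI_n$ and the vanishing of $\Tr(A\wedge A)$, apply Stokes on the cut torus, and convert the boundary mismatch into $\Tr(U^*\partial_k U)$ via the obstruction relation and Jacobi's formula. The sign you flag is purely a matter of the orientation convention implicit in~\eqref{eq:def:Chern}, and once that convention is fixed your chain of equalities coincides with the paper's.
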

\begin{proof}
    This result was proved in details in~\cite{cornean2019localised}. We recall the main key steps for completeness. Since $P_n = \Psi \Psi^*$, and $\Psi^* \Psi = \bbI_n$, we have using the cyclicity of the trace,
    \begin{align*}
        & \Tr \left( P_n \rd P_n \wedge \rd P_n \right)  =
        \Tr(\Psi \Psi^* \left[(\rd \Psi) \Psi^* + \Psi (\rd \Psi^*) \right] \wedge  \left[(\rd \Psi) \Psi^* + \Psi (\rd \Psi^*) \right] ) \\
             & \quad = \Tr \left( \Psi^* (\rd \Psi) \Psi^* \wedge (\rd \Psi)   \right) + 
                \Tr \left( \Psi \Psi^* (\rd \Psi) \wedge (\rd \Psi^*)   \right) +
                \Tr \left( \rd \Psi^* \wedge \rd \Psi \right) +
                \Tr \left( \Psi (\rd \Psi^*) \wedge \Psi(\rd \Psi^*) \right).
    \end{align*}
    Using that $\Tr( f \rd g \wedge f \rd g) = 0$ by anti-symmetry of the $2$-form $f \rd g \wedge f \rd g$, and the fact that $\Psi^* \Psi = \bbI_n$, so that $\rd \Psi^* \Psi + \Psi^* \rd \Psi = 0$, we obtain (see also~\cite{simon1983holonomy})
    \[
        \Tr \left( P_n \rd P_n \wedge \rd P_n \right)  = \rd \Tr \left( \Psi^* \rd \Psi \right).
    \]
    We now apply Stokes' theorem in~\eqref{eq:def:Chern} on the cut torus $\dot \TT^2$. This gives
    \[
        \Ch (P_n) = \frac{1}{2 \ri \pi} \iint_{\dot \TT^2} \Tr \left( P_n \rd P_n \wedge \rd P_n \right) 
        = \dfrac{1}{2 \ri \pi} \int_0^{1} \left[ \Tr \left( \Psi^* \partial_k \Psi \right)(0, k)  -  \Tr \left( \Psi^* \partial_k \Psi \right)(1, k)  \right] \rd k.
    \]
    On the other hand, differentiating~\eqref{eq:def:obstructionMatrix} and using that $(\partial_k \Psi^*) \Psi + \Psi^* (\partial_k \Psi) = 0$, we get
    \[
         \Tr \left(U^* \partial_k U \right). =  \Tr \left( \Psi^* \partial_k \Psi \right)(0, k)  -  \Tr \left( \Psi^* \partial_k \Psi \right)(1, k).
    \]
    Altogether, we obtain
    \[
         \Ch (P_n) = \dfrac{1}{2 \ri \pi} \int_0^{1} \Tr \left(U^* \partial_k U \right) \rd k
         =  \dfrac{1}{2 \ri \pi} \int_0^{1} \dfrac{ \partial_k \left( \det U(k) \right) }{\det U(k) } \rd k,
    \]
    which is the winding number of $\det U$, as wanted.
\end{proof}

\begin{proposition} \label{prop:Chern}
    If the $n$-th gap is open, then $\Ch \left( P_n \right) = n$.
\end{proposition}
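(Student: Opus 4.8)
The plan is to compute $\Ch(P_n)$ directly through the frame characterisation just established, namely $\Ch(P_n) = W[k \mapsto \det U(k)]$, by exploiting the translation covariance of the Bloch fibres. First I would observe that the translation $\tau_t$ preserves each quasi-periodic space $H^2_k$: if $u(x+1) = \re^{2 \ri \pi k} u(x)$ then $(\tau_t u)(x+1) = \re^{2 \ri \pi k} (\tau_t u)(x)$, and $\tau_t$ is unitary on $L^2([0,1])$ since $|u|$ is $1$-periodic. A direct computation then gives $H(t,k) = \tau_t H(0,k) \tau_t^*$ on $H^2_k$, hence $P_n(t,k) = \tau_t P_n(0,k) \tau_t^*$. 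The essential point is that, although $\tau_t$ is not periodic in $t$, on $H^2_k$ the translation by one period acts as the scalar $\tau_1 = \re^{-2 \ri \pi k} \bbI$, which commutes with $P_n(0,k)$ and therefore respects the periodicity in $t$.

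Next I would reduce the construction of a frame on the cut torus $\dot \TT^2$ to the single circle $\{t = 0\}$. By Lemma~\ref{lem:regularityPn}, the map $k \mapsto P_n(0,k)$ is a continuously differentiable family of rank-$n$ projectors over $\TT^1$, and its range defines a rank-$n$ complex vector bundle over $\TT^1$. Since every such bundle over $\SS^1$ is trivial, there is a continuously differentiable $1$-periodic orthonormal frame $\Phi(k) = (\phi_1(k), \dots, \phi_n(k))$ with $P_n(0,k) = \Phi(k) \Phi(k)^*$. Setting $\Psi(t,k) := \tau_t \Phi(k)$ then produces a frame for $P_n$ on $\dot \TT^2$: it is orthonormal (as $\tau_t$ is unitary), it spans $\Ran P_n(t,k)$ (by the covariance above), it is continuously differentiable in $(t,k)$ (each $\phi_i(k)$ lies in $H^2 \subset C^1$ and depends smoothly on $k$), and it is $1$-periodic in $k$ because $\Phi$ is.

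Finally I would compute the obstruction matrix. Using $\tau_1 = \re^{-2 \ri \pi k} \bbI$ on $H^2_k$ gives $\Psi(1,k) = \re^{-2 \ri \pi k} \Phi(k)$, while $\Psi(0,k) = \Phi(k)$, so
\[
    U(k) = \Psi(1,k)^* \Psi(0,k) = \re^{2 \ri \pi k} \, \Phi(k)^* \Phi(k) = \re^{2 \ri \pi k} \bbI_n.
\]
Hence $\det U(k) = \re^{2 \ri \pi n k}$, whose winding number is $n$, and the frame lemma yields $\Ch(P_n) = W[\det U] = n$.

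The only genuinely nontrivial step is the existence of a smooth $1$-periodic frame $\Phi(k)$ over the $k$-circle, that is the triviality of the rank-$n$ Bloch bundle in one dimension; everything else follows mechanically from the translation covariance $P_n(t,k) = \tau_t P_n(0,k) \tau_t^*$ and the scalar action of $\tau_1$. This is precisely where the one-dimensional setting is used: in higher dimension the analogous bundle may be nontrivial, the obstruction would no longer be concentrated in the explicit factor $\re^{-2 \ri \pi k}$, and the Chern number would not reduce to the elementary winding of $\re^{2 \ri \pi n k}$.
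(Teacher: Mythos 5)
Your proof is correct and follows essentially the same route as the paper: translation covariance $P_n(t,k)=\tau_t P_n(0,k)\tau_t^*$ lets you propagate a periodic frame from $\{t=0\}$ across the cut torus, and the quasi-periodicity condition in $H^2_k$ forces the obstruction matrix $U(k)=\re^{2\ri\pi k}\bbI_n$, whose determinant winds $n$ times. The only difference is cosmetic — you make explicit (via triviality of rank-$n$ bundles over $\SS^1$) the existence of the periodic frame at $t=0$, which the paper simply asserts.
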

\begin{proof}
For $ t= 0$, we consider a continuous and periodic frame $\Psi(0, k)$ for $P$ on the $1$-torus $\{ 0 \} \times \TT^1 \subset \dot \TT^2$. Since $\Psi \subset \Ran(P_n) \subset H^2_k$, we have the quasi-periodic condition
\begin{equation} \label{eq:psij_quasiperiodic}
\Psi(t=0, k, x+1) = \re^{2 \ri \pi k} \Psi(t=0,k, x).
\end{equation}
Using that $P(t,k) = \tau_t P(0,k) \tau_t^*$, we can extend continuously this frame on the whole cut torus $\dot \TT^2$ with
\[
    \forall t, k \in \dot \TT^2, \quad \Psi(t,k,x) := \Psi(0, k,x-t).
\]
To find the mismatch between the frame at $t = 0$ and $t = 1$, we use~\eqref{eq:psij_quasiperiodic}, and find that
\[
    \Psi(1, k,x) = \Psi(0, k,x - 1) = \re^{-2 \ri \pi k} \Psi(0, k,x).
\]
By identification with~\eqref{eq:def:obstructionMatrix}, we obtain $U(k) = \re^{2 \ri \pi k} \bbI_n$, whose determinant has winding $n$.
\end{proof}

\subsection{Edge modes}
\label{ssec:edgeIndex}

We now focus on the self-adjoint edge Schrödinger operator
\[
    H_\chi^\sharp(t) := H_0 \chi + H(t) (1 - \chi) =
    - \partial_{xx}^2 + V \chi + V_t (1 - \chi), \quad \text{acting on $L^2(\R)$, with domain $H^2(\R)$}.
\]
Here, $\chi$ is a switch function, that is an $L^\infty(\R)$ function satisfying $\chi(x) = 1$ for $x < -L$, and $\chi(x) = 0$ for $x > L$, where $L$ is any fixed number. We set $V_t^\chi(x) := V \chi + V_t (1 - \chi)$ in the sequel. The following result is straightforward.

\begin{lemma}
    For all $V \in W^{1,1}_\per(\R)$ and $\chi \in L^\infty(\R)$, the map $t \mapsto V_t^\chi$ is differentiable in $L^1_\loc(\R)$.
\end{lemma}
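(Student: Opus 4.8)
The plan is to reduce everything to Lemma~\ref{lem:VinW1per} together with the observation that the extra factor coming from $\chi$ is a fixed bounded weight. Write $V_t^\chi = V\chi + V_t(1-\chi)$. The first summand $V\chi$ does not depend on $t$, so it contributes nothing to a $t$-derivative; thus only the family $t \mapsto V_t(1-\chi)$ requires analysis, and the only difference with the already-treated family $t \mapsto V_t$ is multiplication by the fixed function $(1-\chi) \in L^\infty(\R)$ (note $\|1-\chi\|_{L^\infty} \le 1 + \|\chi\|_{L^\infty} < \infty$).

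First I would produce the candidate derivative. Since $V \in W^{1,1}_\per(\R)$, its absolutely continuous representative satisfies $V(b) - V(a) = \int_a^b V'$, and formally $\partial_t\big(V(x-t)(1-\chi(x))\big) = -V'(x-t)(1-\chi(x))$. I claim $g_t := -V'(\cdot - t)(1-\chi)$ is the $L^1_\loc$-derivative. It lies in $L^1_\loc(\R)$ because, on any compact $K$, $\int_K |V'(x-t)(1-\chi(x))|\,\rd x \le \|1-\chi\|_{L^\infty}\int_K |V'(x-t)|\,\rd x < \infty$, using $V' \in L^1_\per(\R) \subset L^1_\loc(\R)$. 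To verify the definition I would fix a compact $K$, set $F(t) := \int_K V(x-t)(1-\chi(x))\,\rd x$, and estimate the difference quotient $h^{-1}(F(t+h)-F(t)) - \int_K g_t$. Writing $V(x-t-h)-V(x-t) = \int_0^h V'(x-t-\sigma)\,\rd\sigma$ and bounding by $\|1-\chi\|_{L^\infty}$, the error is controlled by $\tfrac1h\int_0^h \big(\int_{K-t}|V'(y-\sigma)-V'(y)|\,\rd y\big)\,\rd\sigma$, i.e. an average of the $L^1$-modulus of continuity of $V'$ over the compact $K-t$; this tends to $0$ as $h\to 0$ by continuity of translations in $L^1$. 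Hence $F$ is differentiable with $F'(t) = \int_K g_t$, and adding the constant term $\int_K V\chi$ gives exactly the required identity $\partial_t \int_K V_t^\chi = \int_K g_t$.

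The conceptually cleaner way to organise the last step is to observe that the computation above really shows the difference quotients $h^{-1}(V_{t+h}-V_t)$ converge to $-V'(\cdot-t)$ in $L^1(K)$-norm, and that multiplication by the fixed weight $(1-\chi)$ is a bounded operator on $L^1(K)$, hence preserves this norm convergence; integrating the limit over $K$ then yields the claim at once. The main obstacle, which is mild, is precisely this point: the definition of $L^1_\loc$-differentiability as stated only tests against the indicator $\1_K$, whereas multiplying by an $L^\infty$ weight requires genuine $L^1(K)$-norm convergence of the difference quotients. This is why I would not quote Lemma~\ref{lem:VinW1per} as a black box but instead include the short translation-continuity argument, which delivers the stronger norm convergence that survives multiplication by $(1-\chi)$.
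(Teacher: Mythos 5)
Your proof is correct. Note that the paper offers no argument at all for this lemma (it is declared ``straightforward'' and left to the reader), so there is no official proof to compare against; your write-up via the fundamental theorem of calculus for the absolutely continuous representative of $V$ and the continuity of translations in $L^1$ is exactly the natural way to fill the gap, and it is complete. Your closing observation is also well taken: the paper's Definition of differentiability in $L^1_\loc$ only tests the difference quotients against indicators $\1_K$, so Lemma~\ref{lem:VinW1per} cannot literally be quoted as a black box once the bounded weight $(1-\chi)$ is inserted; what one really needs (and what your translation-continuity estimate delivers) is $L^1(K)$-norm convergence of the difference quotients, which survives multiplication by an $L^\infty$ function and is in fact the property used later in the paper (e.g.\ when $\partial_t V_t$ is integrated against continuous functions in the proof of Lemma~\ref{lem:ct_and_st_are_periodic}). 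The only blemish is a sign slip: with your orientation one has $V(x-t-h)-V(x-t) = -\int_0^h V'(x-t-\sigma)\,\rd\sigma$, consistent with the candidate derivative $g_t = -V'(\cdot-t)(1-\chi)$ that you correctly state; this does not affect the estimate, which only involves absolute values.
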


 It is classical that the essential spectrum of $H_\chi^\sharp(t)$ is 
\[
    \sigma_\ess(H_\chi^\sharp(t)) = \sigma_\ess (H_0) \cup \sigma_\ess \left( H(t) \right) = \sigma (H_0) = \bigcup_{n \in \N^*} [E_n^-, E_n^+].
\]
Let $E \in g_n$ be in the $n$-th essential gap, that we suppose open. The ODE equation 
\begin{equation} \label{eq:edgeODE}
    -u'' +  V^\chi_t u  = Eu
\end{equation}
can be studied with the tools developed in Section~\ref{ssec:ODE}. We denote by $\cL^{\sharp,+}_{\chi, t}(E)$ and $\cL^{\sharp, -}_{\chi,t}(E)$ the (edge) vectorial spaces of solutions that are square integrable at $+ \infty$ and $-\infty$ respectively. They differ from the (bulk) ones $\cL^\pm_t(E)$ introduced in Lemma~\ref{lem:LEpm}. In this lemma, we proved that $\cL^+_t(E)$ and $\cL^-_t(E)$ are always disjoint. This is different for the edge ones, and the spaces $\cL^{\sharp,+}_{\chi, t}(E)$ may have a non trivial intersection. If this happens, the elements $u \in \cL^{\sharp, +}_{\chi,t}(E) \cap \cL^{\sharp, -}_{\chi,t}(E)$ are normalisable, hence are eigenvectors of $H_\chi^\sharp(t)$ for the eigenvalue $E$. We call such elements {\em edge modes}, or {\em edge states}.

The maps $t \mapsto \cL^{\sharp, \pm}_{\chi, t}(E)$ are $1$-periodic and continuously differentiable. We prove below that they are of dimension $1$. Hence, the functions $\theta^{\sharp, \pm}_{\chi, t}(E, x) := \theta[ \cL^{\sharp, \pm}_{\chi, t}(E), x]$ are well-defined, and we can introduce the edge quantity
\begin{equation} \label{eq:def:Omega}
    \Omega^\sharp_\chi (t, E, x) := \theta^{\sharp, +}_{\chi, t} \left(E , x  \right) \left(  \theta^{\sharp, -}_{\chi, t} \left(E , x  \right) \right)^{-1}.
\end{equation}
As a function of $t$, it is a map from $\TT^1$ to $\SS^1$, and $\Omega^\sharp_\chi (t, E, x) = 1$ iff $\theta^{\sharp, +}_{\chi, t} \left(E , x  \right)  = \theta^{\sharp, -}_{\chi, t} \left(E , x  \right)$. The following lemma is straightforward from the previous discussion.
\begin{lemma} \label{lem:spectrum_Hedge}
    The following assertions are equivalent:
    \begin{enumerate}[(i)]
        \item $E$ is an eigenvalue of $H^{\sharp}_\chi(t)$;
        \item $\cL^{\sharp, +}_{\chi, t}(E) = \cL^{\sharp, -}_{\chi, t}(E)$;
        \item there is $x \in \R$ such that $\Omega^\sharp_\chi (t, E, x) =1$;
        \item for all $x \in \R$, we have $\Omega^\sharp_\chi (t, E, x) = 1$.
    \end{enumerate}
    If these assertions are satisfied, then
    $
    {\rm Ker} \left( H^{\sharp}_\chi(t)  - E \right) = \cL^{\sharp, +}_{\chi, t}(E) = \cL^{\sharp, -}_{\chi, t}(E).
    $
    In particular, all eigenvalues of $H^\sharp_\chi(t)$ are simple, and all edge modes are exponentially decaying at $\pm \infty$.
\end{lemma}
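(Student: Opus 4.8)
The plan is to first establish the (implicitly promised) fact that both edge spaces $\cL^{\sharp,+}_{\chi,t}(E)$ and $\cL^{\sharp,-}_{\chi,t}(E)$ are exactly one-dimensional, since this reduces the four assertions to elementary statements about a pair of lines inside the two-dimensional solution space of~\eqref{eq:edgeODE} (whose dimension is given by Lemma~\ref{lem:BasicFact_ODE}). The key observation is that $E$ lies in the $n$-th gap of \emph{both} bulk operators: for $x > L$ the edge potential satisfies $V^\chi_t = V_t$, while for $x < -L$ it satisfies $V^\chi_t = V$. Hence any solution of~\eqref{eq:edgeODE}, restricted to $(L, +\infty)$, solves the bulk equation $-u'' + V_t u = Eu$; since $E \in g_n$ is in a gap of $H(t)$, Lemma~\ref{lem:LEpm} tells us that the solutions square integrable at $+\infty$ form the one-dimensional space $\cL^+_t(E)$ of exponentially decaying functions. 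Restriction to $(L,+\infty)$ is injective on the solution space (a solution is determined by its Cauchy data at any point), and conversely the decaying bulk solution extends uniquely to a global solution of~\eqref{eq:edgeODE} by solving the ODE across $[-L,L]$. Therefore $\cL^{\sharp,+}_{\chi,t}(E)$ is one-dimensional and its elements decay exponentially at $+\infty$. The symmetric argument on $(-\infty,-L)$, using $\cL^-(E)$ for $H_0$, shows $\cL^{\sharp,-}_{\chi,t}(E)$ is one-dimensional with elements decaying exponentially at $-\infty$.

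Granting this, I would prove (i)~$\Leftrightarrow$~(ii) directly. If $u$ is an eigenvector of $H^\sharp_\chi(t)$ for $E$, then $u \in H^2(\R) \subset L^2(\R)$ solves~\eqref{eq:edgeODE} and is square integrable at both ends, so $u \in \cL^{\sharp,+}_{\chi,t}(E) \cap \cL^{\sharp,-}_{\chi,t}(E)$. Conversely, any nonzero $u$ in this intersection decays exponentially at $\pm\infty$ by the first paragraph, hence lies in $L^2(\R)$; being an $L^2$ solution of the eigenvalue equation for the self-adjoint operator $H^\sharp_\chi(t)$, it belongs to the domain and is a genuine eigenvector. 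Thus $E$ is an eigenvalue iff the intersection is nontrivial; and since $\cL^{\sharp,+}_{\chi,t}(E)$ and $\cL^{\sharp,-}_{\chi,t}(E)$ are two lines in the two-dimensional space of solutions to~\eqref{eq:edgeODE}, their intersection is nontrivial precisely when they coincide, which is exactly (ii).

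For the equivalence of (ii), (iii) and (iv) I would invoke Remark~\ref{rem:theta_Vectu}: two nonzero solutions are linearly dependent iff their $\theta$ functions agree at every $x$, iff they agree at some $x$. Since both edge spaces are one-dimensional, (ii) says their generators are linearly dependent, which by the remark is equivalent to $\theta^{\sharp,+}_{\chi,t}(E,x) = \theta^{\sharp,-}_{\chi,t}(E,x)$ for all $x$, i.e. $\Omega^\sharp_\chi(t,E,x) = 1$ for all $x$ (assertion (iv)), and equally to the same equality holding at a single $x$ (assertion (iii)). The remaining claims then follow at once: when the assertions hold, the kernel equals the intersection $\cL^{\sharp,+}_{\chi,t}(E) \cap \cL^{\sharp,-}_{\chi,t}(E)$, which by (ii) is each of the two coinciding lines; as this space is at most one-dimensional, every eigenvalue in $g_n$ is simple; and every edge mode, lying in both edge spaces, inherits exponential decay at both $+\infty$ and $-\infty$.

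The main obstacle is the first paragraph: establishing one-dimensionality and exponential decay requires carefully matching an edge solution to a bulk exponentially decaying solution outside $[-L,L]$ and crucially using that $E$ sits in a common gap of $H_0$ and $H(t)$. A secondary technical point is the standard but nontrivial fact that an $L^2$ distributional solution of the eigenvalue equation actually lies in the operator domain $H^2(\R)$, which I would take from the self-adjointness of $H^\sharp_\chi(t)$.
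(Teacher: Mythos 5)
Your proposal is correct and follows essentially the same route as the paper, which declares the lemma ``straightforward from the previous discussion'' and supplies the only nontrivial ingredient — the one-dimensionality and exponential decay of $\cL^{\sharp,\pm}_{\chi,t}(E)$, obtained by matching edge solutions to the bulk spaces $\cL^{\pm}_t(E)$ outside $[-L,L]$ — inside the proof of Proposition~\ref{prop:edge=n}. Your write-up merely makes explicit the steps the paper leaves implicit (the two-lines-in-a-plane argument, Remark~\ref{rem:theta_Vectu} for the $\theta$ equivalences, and the standard fact that an $L^2$ distributional solution lies in the domain of the self-adjoint operator), so there is nothing to flag.
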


This motivates the following definition.
\begin{definition}[Edge index] \label{def:edgeIndex}
    We define the edge index $\cI^{\sharp}_{\chi, n}$ as the winding $W \left[  \Omega^\sharp_\chi (\cdot , E, x) \right]$.
\end{definition}
By continuity, this index is independent of $E \in g_n$ and $x \in \R$.

\begin{proposition} \label{prop:edge=n}
    If the $n$-th gap is open, then $\cI^\sharp_{\chi, n} = n$. In particular, it is independent of~$\chi$.
\end{proposition}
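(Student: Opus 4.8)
The plan is to localise the computation of $\cI^\sharp_{\chi,n} = W\left[ \Omega^\sharp_\chi(\cdot, E, x) \right]$ at the two ends $\pm\infty$, where the edge potential $V^\chi_t$ coincides with a genuinely periodic bulk potential, and to recognise the bulk Maslov index there. Since $W$ is a group homomorphism and the winding of $\Omega^\sharp_\chi$ is independent of $x$, I would first split, using the definition~\eqref{eq:def:Omega} of $\Omega^\sharp_\chi$,
\[
\cI^\sharp_{\chi,n} = W\left[\theta^{\sharp,+}_{\chi,\cdot}(E,x)\right] - W\left[\theta^{\sharp,-}_{\chi,\cdot}(E,x)\right],
\]
and then note that each of the two windings is again independent of $x$ (by joint continuity in $(t,x)$ and integrality of the winding, as in Definition~\ref{def:edgeIndex}), so I am free to evaluate the first at some $x_+ \ge L$ and the second at some $x_- \le -L$.

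At the right end, the support condition $\chi(x) = 0$ for $x > L$ gives $V^\chi_t = V_t$ on $(L, +\infty)$, so~\eqref{eq:edgeODE} reduces there to the bulk equation $-u'' + V_t u = Eu$. I would argue that the restriction to $(L, +\infty)$ of any nonzero $u \in \cL^{\sharp,+}_{\chi,t}(E)$ is a bulk solution square integrable at $+\infty$, hence a multiple of the one-dimensional space $\cL^+_t(E)$ from Lemma~\ref{lem:LEpm}; since $\theta[\cdot, x]$ depends only on $\Span\{u\}$ (Remark~\ref{rem:theta_Vectu}), this yields $\theta^{\sharp,+}_{\chi,t}(E, x_+) = \theta[\cL^+_t(E), x_+]$ for every $x_+ \ge L$. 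The winding of $t \mapsto \theta^{\sharp,+}_{\chi,t}(E, x_+)$ is therefore exactly the bulk Maslov index $\cM^+$, which equals $\cB_n = n$ by Definition~\ref{def:bulkIndex} and Proposition~\ref{prop:bulk=n}.

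Symmetrically, at the left end the condition $\chi(x) = 1$ for $x < -L$ gives $V^\chi_t = V$ on $(-\infty, -L)$, \emph{independently of $t$}. The same reasoning shows that $\theta^{\sharp,-}_{\chi,t}(E, x_-)$ coincides, for $x_- \le -L$, with $\theta[\cL^-(E), x_-]$, where $\cL^-(E)$ is the fixed decaying space of $H_0$; this map is constant in $t$, so its winding vanishes. Combining the two ends gives $\cI^\sharp_{\chi,n} = \cM^+ - 0 = n$, and independence of $\chi$ is automatic since the right-hand side never sees $\chi$.

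The step I expect to be the main obstacle is the rigorous matching of the edge decaying solutions with the bulk ones on the half-lines $\{x > L\}$ and $\{x < -L\}$: I must check that the one-dimensionality of $\cL^\pm_t(E)$ (Lemma~\ref{lem:LEpm}) genuinely forces the restriction of an edge solution to be a scalar multiple of the corresponding bulk solution, and that the windings of $t \mapsto \theta^{\sharp,\pm}_{\chi,t}(E,x)$ are each independent of $x$. Once these two facts are in place, the remainder is purely bookkeeping of winding numbers via the homomorphism property, and the equality $\cI^\sharp_{\chi,n} = n$ follows.
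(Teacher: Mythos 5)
Your proposal is correct and follows essentially the same route as the paper: split $W[\Omega^\sharp_\chi]$ into $\cM^{\sharp,+}_\chi - \cM^{\sharp,-}_\chi$ via the homomorphism property, identify $\theta^{\sharp,+}_{\chi,t}(E,x)$ with the bulk $\theta[\cL^+_t(E),x]$ for $x>L$ (so that $\cM^{\sharp,+}_\chi=\cM^+=\cB_n=n$ by Proposition~\ref{prop:bulk=n}), and observe that for $x<-L$ the left edge space is $t$-independent, giving zero winding. The matching step you flag as the main obstacle is handled exactly as you describe, by uniqueness of the square-integrable-at-$+\infty$ solution from Lemma~\ref{lem:LEpm} and the span-invariance of $\theta$ from Remark~\ref{rem:theta_Vectu}.
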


\begin{proof}
    Since the winding number is a group homomorphism, we have
    \[
        \cI^\sharp_{\chi, n} = W \left[  \Omega^\sharp_\chi (\cdot , E, x) \right] = \cM^{\sharp, +}_{\chi} - \cM^{\sharp, -}_{\chi},
        \quad \text{where} \quad 
         \cM^{\sharp, \pm}_{\chi} := W \left[  t \mapsto \theta^{\sharp, \pm}_{\chi, t} \left(E , x  \right)  \right].
    \]
    Let us compute $\cM^{\sharp, +}_{\chi}$. For $x > L$, the solutions $u \in \cL^{\sharp, +}_{\chi, t}(E)$ decay at $+ \infty$, and satisfies the bulk-like equation
    \[
        \forall x > L, \quad -u'' + V_t u  = E u.
    \]
    Together with Lemma~\ref{lem:LEpm}, we deduce that $u$ is actually exponentially decaying at $+ \infty$, and that, for $x > L$, we have $\theta[ \cL^{\sharp, +}_{\chi, t}(E), x ] = \theta[ \cL^+_{t}(E), x ]$. In particular, $\cL^{\sharp, +}_{\chi, t}(E)$ is indeed of dimension $1$. Also, with the Definition~\ref{def:MaslovIndex} of the Maslov index applied with $x > L$, we deduce that 
    \[
        \cM^{\sharp, +}_{\chi} = W \left[t \mapsto \theta[ \cL^{\sharp, +}_{\chi, t}(E), x ] \right] 
        = W \left[t \mapsto \theta[ \cL^{+}_{t}(E), x ] \right]
        = \cM^+ = \cB_n.
    \]
   Similarly, for $x < -L$, the solutions $u \in \cL^{\sharp, -}_{\chi, t}(E)$ decay at $-\infty$ and satisfy the equation
    \begin{equation*}
        \forall x < -L, \quad -u'' + V u  = Eu,
    \end{equation*}
    which is independent of $t$ and $\chi$. In particular $\theta[\cL^{\sharp, -}_{\chi, t}(E), x] = \theta[\cL^-_{t = 0}(E),x]$ is also independent of $t$, hence has null winding number. This proves that $\cM^{\sharp, -}_{\chi} = 0$, and that $\cL^{\sharp, -}_{\chi, t}$ is of dimension $1$.
    Finally, together with Proposition~\ref{prop:bulk=n}, we obtain as wanted
    \[
        \cI^\sharp_{\chi, n} = \cM^{\sharp, +}_{\chi} - \cM^{\sharp, -}_{\chi} = \cB_n  - 0 = n.
    \]
\end{proof}
Proposition~\ref{prop:edge=n} and Lemma \ref{lem:spectrum_Hedge} already imply the existence of edge modes.
\begin{lemma}[Existence of edge modes] \label{lem:existenceEdgeModes}
    Let $E$ be in the $n$-th essential gap, that we suppose open. Then there are at least $n$ values $0 < t_1 < t_2 < \cdots < t_n < 1$ so that $E$ is an eigenvalue of $H^\sharp_\chi(t_k)$.
\end{lemma}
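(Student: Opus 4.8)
The plan is to combine the index computation in Proposition~\ref{prop:edge=n} with the spectral characterisation of edge modes in Lemma~\ref{lem:spectrum_Hedge}, and then invoke the weak crossing lemma (Lemma~\ref{lem:winding_as_crossing_weak}) to convert the nonzero winding into the existence of preimages. Concretely, I would fix $E$ in the (open) $n$-th gap and consider the map $t \mapsto \Omega^\sharp_\chi(t, E, x)$ from $\TT^1$ to $\SS^1$, for some fixed $x \in \R$. By Proposition~\ref{prop:edge=n}, its winding number is $\cI^\sharp_{\chi, n} = n$.

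First I would record that, by Lemma~\ref{lem:spectrum_Hedge}, the value $E$ is an eigenvalue of $H^\sharp_\chi(t)$ precisely when $\Omega^\sharp_\chi(t, E, x) = 1$, and this condition is independent of the choice of $x$. So the set of parameters $t$ producing an edge mode at energy $E$ is exactly the preimage $\{ t \in \TT^1 : \Omega^\sharp_\chi(t, E, x) = 1 \}$, i.e.\ the preimage of the point $z = 1 \in \SS^1$ under a continuous map of winding number $n$. Next I would apply Lemma~\ref{lem:winding_as_crossing_weak} with $z = 1$ and $N = |W[\Omega^\sharp_\chi(\cdot, E, x)]| = n$. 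That lemma guarantees at least $N = n$ distinct points $0 \le t_1 < t_2 < \cdots < t_n < 1$ with $\Omega^\sharp_\chi(t_k, E, x) = 1$, which by the equivalence in Lemma~\ref{lem:spectrum_Hedge} means $E$ is an eigenvalue of $H^\sharp_\chi(t_k)$ for each $k$. This yields the claimed $n$ values of $t$.

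A minor technical point I would address is that Lemma~\ref{lem:winding_as_crossing_weak} as stated produces points $0 \le t_0 < \cdots < t_N < 1$ (that is, $N+1$ points when the winding is $N$, after one uses $u(0) = z$), so I would be slightly careful to extract exactly $n$ distinct values in the open interval $(0,1)$ as the statement requires, reindexing as $t_1, \dots, t_n$. The winding being exactly $n > 0$ (when the gap is open, so $n \ge 1$) is what makes the preimage nonempty; the homotopy invariance of the winding and the fact that it is independent of $E \in g_n$ and of $x$ guarantee the argument does not depend on these auxiliary choices.

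The main (and essentially only) obstacle is that this lemma gives a \emph{lower} bound on the number of edge modes rather than an exact count, and it says nothing about their multiplicity or whether they are genuinely distinct eigenvalue crossings versus tangential touches. For the present statement this is not actually an obstruction, since only existence of at least $n$ values of $t$ is claimed; the sharper statement that the flow is exactly $n$ and that eigenvalues are simple is deferred to the later spectral-flow sections (and Lemma~\ref{lem:spectrum_Hedge} already gives simplicity of each eigenvalue). Thus I expect the proof to be short: it is a direct translation of the topological content of $\cI^\sharp_{\chi,n} = n$ into a counting statement via Lemma~\ref{lem:winding_as_crossing_weak}, with no substantial new analysis required.
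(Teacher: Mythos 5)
Your proposal follows the paper's proof essentially verbatim: compute the winding $W[\Omega^\sharp_\chi(\cdot,E,x)] = \cM^{\sharp,+}_\chi - \cM^{\sharp,-}_\chi = n$, identify preimages of $1 \in \SS^1$ with eigenvalues via Lemma~\ref{lem:spectrum_Hedge}, and invoke Lemma~\ref{lem:winding_as_crossing_weak}. The one point you flag but leave open --- getting all $n$ values strictly inside $(0,1)$ --- is closed in the paper by the observation that $H^\sharp_\chi(t=0) = H_0$ has purely essential spectrum, so $E$ is not an eigenvalue at $t=0$ and hence $\Omega^\sharp_\chi(0,E,x) \neq 1$; note that this is the opposite of the normalisation $u(0)=z$ you invoke in your parenthetical, which does not hold here.
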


\begin{proof}
    From~\eqref{eq:def:Omega}, the winding number of $t \mapsto \Omega^\sharp_\chi (\cdot,  E, x)$ is
    \[
        W \left[ \Omega^\sharp_\chi (\cdot,  E, x) \right] = \cM^{\sharp, +}_{\chi,n} - \cM^{\sharp, -}_{\chi,n}  = n.
    \]
    Together with Lemma~\ref{lem:winding_as_crossing_weak} with $z = 1$, we deduce that there are at least $n$ points $0 \le t_0 < \cdots < t_n < 1$ so that $\Omega^\sharp_\chi (t_k, E, x) = 1$. At these points, $E$ is an eigenvalue of $H^\sharp_\chi(t_k)$. Finally, since $H^\sharp_\chi(t = 0)= H_0$ has only purely essential spectrum, we must have $t_0 > 0$.
\end{proof}
As we see from the proof, if we attach an orientation to each regular crossing $t_k$, then we could use Lemma~\ref{lem:winding_as_crossing} instead of Lemma~\ref{lem:winding_as_crossing_weak} in the previous result, and have a finer result. This is detailed in the next Section.

\subsection{The domain wall spectral flow}
\label{ssec:spectralFlowDW}

In the previous section, we fixed the energy $E$. We now investigate how the spectrum varies with $t$. We introduce in this section the notion of spectral flow. Our approach is slightly different than the usual one (see {\em e.g.}~\cite{phillips1996self}), but is equivalent, up to a global sign, and easier to manipulate for our purpose. First, we recall the following classical result.
\begin{lemma} \label{lem:continuityOfEigenvalues}
    Let $E^*$ in the $n$-th essential gap be an eigenvalue of $H_\chi^\sharp(t^*)$. There is $0 < t^- < t^* < t^+ < 1$ and a continuously differentiable map $t \in (t^-, t^+) \mapsto E(t^*)$ with $E(t_0) = E^*$ such that
    \begin{itemize}
        \item For all $t \in (t^-, t^+)$, $E(t)$ is an eigenvalue of $H_\chi^\sharp(t)$;
        \item $\lim_{t \to t^-} E(t)$ and $\lim_{t \to t^+} E(t)$ belong to the band edges $\{ E_n^+, E_{n+1}^-\}$.
    \end{itemize}
In addition, the triplet $(t^-, t^+, E(\cdot))$ is unique.
\end{lemma}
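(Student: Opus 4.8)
The plan is to turn the eigenvalue condition into the vanishing of a single scalar $C^1$ function of $(t,E)$, to prove that this function is strictly monotone in $E$ at each of its zeros, and then to combine the implicit function theorem with a continuation argument. \textbf{Reduction to a scalar equation.} Fix $(t^*,E^*)$ and choose a base point $x_0\in\R$ which is not a zero of the decaying solutions at $(t^*,E^*)$; this is possible since nonzero solutions of~\eqref{eq:edgeODE} have only isolated zeros (Lemma~\ref{lem:BasicFact_ODE}). For $(t,E)$ near $(t^*,E^*)$, let $u^{\sharp,\pm}_{\chi,t}(\cdot,E)$ span $\cL^{\sharp,\pm}_{\chi,t}(E)$, normalised by $u^{\sharp,\pm}_{\chi,t}(x_0,E)=1$; by Lemmas~\ref{lem:ct_and_st_are_periodic},~\ref{lem:LEpm} and~\ref{lem:continuityOfL} these depend in a $C^1$ way on $(t,E)$. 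Setting
\[
    F(t,E) := \big(u^{\sharp,+}_{\chi,t}\big)'(x_0,E) - \big(u^{\sharp,-}_{\chi,t}\big)'(x_0,E),
\]
the function $F$ is $C^1$, and by Lemma~\ref{lem:spectrum_Hedge} we have $F(t,E)=0$ iff $\cL^{\sharp,+}_{\chi,t}(E)=\cL^{\sharp,-}_{\chi,t}(E)$, i.e. iff $E$ is an eigenvalue of $H^\sharp_\chi(t)$. (Up to the nonzero factor $u^{\sharp,+}(x_0)\,u^{\sharp,-}(x_0)$, $F$ is the Wronskian of the two decaying solutions, equivalently the difference of the two Weyl functions based at $x_0$.)

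\textbf{Nondegeneracy in $E$.} The heart of the proof is that $\partial_E F\neq0$ at every eigenvalue. Differentiating~\eqref{eq:edgeODE} in $E$ gives $-(\partial_E u)'' + (V^\chi_t-E)\,\partial_E u = u$, and a direct computation with $W(f,g):=fg'-f'g$ yields
\[
    \frac{\rd}{\rd x}\, W\big(u^{\sharp,\pm}_{\chi,t},\,\partial_E u^{\sharp,\pm}_{\chi,t}\big) = -\big(u^{\sharp,\pm}_{\chi,t}\big)^2 .
\]
Integrating on $[x_0,+\infty)$ and on $(-\infty,x_0]$ respectively, and using the exponential decay of $u^{\sharp,+}$ and $\partial_E u^{\sharp,+}$ at $+\infty$ (resp. of $u^{\sharp,-}$, $\partial_E u^{\sharp,-}$ at $-\infty$) provided by Lemma~\ref{lem:LEpm} to kill the boundary terms, I obtain
\[
    \partial_E F(t,E) = \int_{x_0}^{+\infty}\big(u^{\sharp,+}_{\chi,t}\big)^2 \rd x + \int_{-\infty}^{x_0}\big(u^{\sharp,-}_{\chi,t}\big)^2 \rd x \;>\;0 .
\]
At an eigenvalue the two decaying solutions are real and proportional to the eigenfunction $\psi$, so $\partial_E F$ is proportional to $\int_\R \psi^2>0$. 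This is exactly the monotonicity of the Weyl functions, and it shows $\partial_E F$ never vanishes on the zero set of $F$.

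\textbf{Local branch, uniqueness and continuation.} By the previous step and the implicit function theorem there is, near $t^*$, a unique $C^1$ map $t\mapsto E(t)$ with $E(t^*)=E^*$ and $F(t,E(t))=0$: this is the local eigenvalue branch, which gives local existence and uniqueness. Since $\partial_E F\neq0$ at \emph{every} zero (using a locally adapted base point), the eigenvalue set $\{F=0\}\subset\TT^1\times g_n$ is a $C^1$ curve that is everywhere a graph over $t$, hence has no vertical tangent; I extend $E(\cdot)$ to its maximal open interval $(t^-,t^+)$ subject to $E(t)\in g_n$, and local uniqueness makes this continuation, hence the triplet $(t^-,t^+,E(\cdot))$, unique.

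\textbf{Identifying the endpoints.} Finally, $H^\sharp_\chi(0)=H_0$ since $V^\chi_0=V$, so by Lemma~\ref{lem:Spectrum_Hill} there is no eigenvalue in $g_n$ at $t\equiv0$; consequently the branch can neither contain $t\in\{0,1\}$ nor close into a loop, since a closed component would project onto all of $\TT^1$ and hit $t=0$. If $E(t)$ remained in a compact subinterval of $g_n$ as $t\to t^+$, a subsequential limit $(t^+,E_\infty)$ would be a zero of $F$ with $\partial_E F\neq0$, and the implicit function theorem would extend the branch beyond $t^+$, contradicting maximality; applying the same argument to any interior level rules out oscillation, so $\lim_{t\to t^+}E(t)$ exists and lies in $\{E_n^+,E_{n+1}^-\}$, and symmetrically at $t^-$. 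Emptiness of the gap at $t\equiv0$ then forces $0<t^-<t^*<t^+<1$. I expect the nondegeneracy estimate to be the key analytic input, and this last global part — excluding loops and oscillation and proving the endpoint limits exist and are band edges — to be the most delicate point, since it couples the local transversality with the global fact that the gap is empty at $t=0$.
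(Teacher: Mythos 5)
Your proof is correct, but it takes a genuinely different route from the paper. The paper's argument is purely operator-theoretic: it encloses the (simple, isolated) eigenvalue $E^*$ in a contour $\sC$, uses the Riesz projector $P(t) = \frac{1}{2\ri\pi}\oint_\sC (z-H^\sharp_\chi(t))^{-1}\rd z$ and the constancy of its integer-valued rank to produce the branch $E(t)=\Tr(H^\sharp_\chi(t)P(t))$, continues it maximally until $E(t)$ reaches the band edges, and uses $H^\sharp_\chi(0)=H_0$ to conclude $0<t^-<t^+<1$; uniqueness comes from simplicity of the eigenvalues (Lemma~\ref{lem:spectrum_Hedge}). You instead reduce the eigenvalue condition to the vanishing of a scalar Weyl--Wronskian function $F(t,E)$ and prove the transversality $\partial_E F = \int_{x_0}^{+\infty}(u^{\sharp,+})^2 + \int_{-\infty}^{x_0}(u^{\sharp,-})^2>0$ at every zero, which buys you something the paper's proof does not state explicitly: the zero set is globally a $C^1$ graph over $t$, eigenvalues are automatically simple, and you get the formula $E'(t)=-\partial_t F/\partial_E F$ for free (this is essentially the computation the paper redoes later in Proposition~\ref{prop:Schi=n}). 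The price is that your argument leans on ODE structure (choice of base point $x_0$, reality of the decaying solutions) where the paper's is soft. One step deserves more care than your citation suggests: killing the boundary terms at $\pm\infty$ requires the exponential decay of $\partial_E u^{\sharp,\pm}$ and its derivative, which does \emph{not} follow from Lemma~\ref{lem:LEpm} as stated (differentiability of $E\mapsto\cL^\pm(E)$ in the sense of Definition~\ref{def:contL} controls the coefficients in the basis $\{c_E,s_E\}$, and the individual terms in $\partial_E u^{\sharp,\pm}$ may grow); it does hold, because for $|x|>L$ the normalised decaying solution is a $C^1$ multiple of the Floquet solution $\lambda(E)^{\lfloor x\rfloor}p_E(x)$ with $|\lambda|<1$, whose $E$-derivative decays like $|x|\,|\lambda|^{|x|}$, but this should be said. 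Your global endgame (no loops since a closed component would be a graph over all of $\TT^1$ and hit $t=0$, where the gap is empty by Lemma~\ref{lem:Spectrum_Hill}; no oscillation at $t^\pm$ by transversality plus maximality) is sound and in fact more explicit than the paper's one-line continuation argument.
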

We postpone the proof until Section~\ref{ssec:proof:continuityOfEigenvalues}, and just highlight the fact that the uniqueness property comes from the simplicity of the eigenvalues of $H_\chi^\sharp(t_0)$ (see Lemma~\ref{lem:spectrum_Hedge}). In the sequel, we say that the spectrum of $t \mapsto H_\chi^\sharp(t)$ is continuous, and we call such function $E(\cdot)$ a branch of eigenvalues.

\begin{lemma} \label{lem:countableEigenvalues}
    There is a countable number of branches of eigenvalue $\left( E_k(\cdot) \right)_{k \in K}$ with $K \subset \Z$ such that, 
    \[
         \forall t \in \TT^1, \quad \sigma \left( H^\sharp_\chi(t) \right) \cap g_n = \bigcup_{k \in K} \left\{ E_k(t) \right\}.
    \]
\end{lemma}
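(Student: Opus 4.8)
The goal is to enumerate the eigenvalues of $H_\chi^\sharp(t)$ inside the open gap $g_n$ as a countable collection of branches, using Lemma~\ref{lem:continuityOfEigenvalues} as the local input. The plan is to define a global set of branches by an exhaustion/covering argument, and then show it is countable.

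\begin{proof}
    First I would define the set
    \[
        \Sigma := \left\{ (t, E) \in \TT^1 \times g_n \ : \ E \in \sigma\left(H_\chi^\sharp(t)\right) \right\}.
    \]
    By Lemma~\ref{lem:continuityOfEigenvalues}, around each point $(t^*, E^*) \in \Sigma$ there is a unique maximal branch $E(\cdot)$ defined on an open interval $(t^-, t^+)$, whose graph lies in $\Sigma$ and whose endpoints limit onto the band edges $\{E_n^+, E_{n+1}^-\}$. The key structural fact is that the simplicity of the eigenvalues (Lemma~\ref{lem:spectrum_Hedge}) forces these branches to be disjoint graphs: two distinct branches cannot cross, because at a crossing point $t$ the operator $H_\chi^\sharp(t)$ would have a non-simple eigenvalue $E$, contradicting simplicity. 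Hence the branches partition $\Sigma$ into graphs of functions, and I would index them by some set $K$, writing $\Sigma = \bigcup_{k \in K} \{(t, E_k(t))\}$, which gives exactly the claimed decomposition $\sigma(H_\chi^\sharp(t)) \cap g_n = \bigcup_{k \in K}\{E_k(t)\}$ for each $t$.

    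It then remains to prove that $K$ is countable and may be taken inside $\Z$. The cleanest route is to bound, for each fixed $t$, the number of eigenvalues in $g_n$: since the eigenvalues of $H_\chi^\sharp(t)$ in the gap are isolated points of the discrete spectrum accumulating at most at the band edges, the set $\sigma(H_\chi^\sharp(t)) \cap [E_n^+ + \varepsilon, E_{n+1}^- - \varepsilon]$ is finite for every $\varepsilon > 0$. Covering $g_n$ by countably many such closed subintervals and covering $\TT^1$ similarly shows that $\Sigma$ is a countable union of graphs, so $K$ is countable. To realise $K \subset \Z$, I would fix a reference energy $E_0 \in g_n$ and enumerate the branches according to the (finite, ordered) list of eigenvalues they produce at energies near $E_0$, assigning consecutive integer labels; since only countably many branches exist, this injects $K$ into $\Z$.

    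The main obstacle is controlling how branches enter and exit the gap at the band edges, i.e.\ ensuring the decomposition is genuinely a disjoint union of well-defined functions $E_k(\cdot)$ on their domains and that nothing is double-counted where a branch touches $\{E_n^+, E_{n+1}^-\}$. This is handled by the uniqueness clause of Lemma~\ref{lem:continuityOfEigenvalues}: the triplet $(t^-, t^+, E(\cdot))$ attached to any $(t^*, E^*) \in \Sigma$ is unique, so two local branches through points of the same connected piece of $\Sigma$ must coincide on their common domain. Gluing these maximal branches therefore produces a well-defined, non-overlapping family, and the countability bound above completes the proof.
\end{proof}
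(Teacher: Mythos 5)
Your proposal is correct and follows essentially the same route as the paper: exhaust the gap by compact subintervals on which the eigenvalue count is finite (uniformly in $t$ by compactness of $\TT^1$ and continuity of the spectrum), and glue the local branches from Lemma~\ref{lem:continuityOfEigenvalues} using their uniqueness and the simplicity of eigenvalues. Your write-up is somewhat more explicit than the paper's about the disjoint-graph structure and the gluing of maximal branches, but the underlying argument is the same.
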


\begin{proof}
    For $m \in \N^*$, we introduce the energy interval $I_m := (E_n^+ + \frac{1}{m}, E_{n+1}^- - \frac{1}{m})$. For $m$ large enough, $I_m$ is non empty, and its closure $\overline{I_m}$ is included in $g_n$. Also, $\cup_{m \in \N} I_m = g_n$. For all $t \in \TT^1$, the spectrum of $H^\sharp_\chi(t)$ in $g_n$ is composed of eigenvalues that can only accumulate at the band edges $\{E_n^+, E_{n+1}^- \}$, so is finite in $I_m$. Since $\TT^1$ is compact, and by continuity of the spectrum,
    \[
    \sup_{t \in \TT^1} \left[ {\rm Card} \left( \sigma \left( H^\sharp_\chi(t) \right) \cap I_m \right) \right] < \infty.
    \]
    This means in particular that we can find a finite number of branches of eigenvalues $\left( E_k^m(\cdot) \right)_{1 \le k \le K_m}$ so that
    \[
    \forall t \in \TT^1, \quad \sigma \left( H^\sharp_\chi(t) \right) \cap I_m = \bigcup_{1 \le k \le K_m} \left\{  E_k^m(t) \right\}.
    \]
    The result follows by considering the union in $m \in \N^*$.
\end{proof}

We say an energy $E \in g_n$ is {\em regular for the spectrum of} $H^\sharp_\chi(\cdot)$ if it is regular point for all branches $\left( E_k(\cdot) \right)_{k \in K}$. Let $X_k \subset g_n$ be the set of non-regular points of $E_k(\cdot)$. By Sard's theorem, $X_k$ is of measure $0$, hence so is the union $X := \bigcup_{k \in K} X_k$. Any point $E \in g_n \setminus X$ is a regular point for the spectrum of $H^\sharp_\chi(\cdot)$.

\begin{definition}[Spectral flow] \label{def:spectralFlow}Let $E \in g_n$ be a regular point for the spectrum of $H^\sharp_\chi(\cdot)$. The spectral flow of $t \mapsto H^\sharp_\chi(t)$ in the $n$-th gap is
    \[
    \cS^\sharp_{\chi, n} := - \sum_{k \in K} \sum_{t \in E_k^{-1} (\{ E \})} {\rm sgn} \left( E_k'(t)  \right).
    \]
\end{definition}
From the proof of Lemma~\ref{lem:countableEigenvalues}, there are only a finite number of branches that can touch the energy $E$, so the sum in $k \in K$ is actually finite.
\begin{remark}
    The convention here differs from the usual one by a global minus sign. This is because in our case, the eigenvalues moves from the upper band to the lower one. 
\end{remark}

Using similar arguments to the ones in the proof of Lemma~\ref{lem:winding_as_crossing}, one can see that the spectral flow counts the net number of eigenvalues going from the upper band to the lower one, as $t$ goes from $0$ to $1$. The fact that the spectral flow is independent of the choice of the regular point $E$ is a consequence of the following result.

\begin{proposition} \label{prop:Schi=n}
        If the $n$-th essential gap is open, then $\cS^\sharp_{\chi, n}  = \cI_{\chi,n}^\sharp = n$. In particular, it is independent of~$\chi$ (and $E$).
\end{proposition}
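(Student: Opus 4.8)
The plan is to prove the single new equality $\cS^\sharp_{\chi,n}=\cI^\sharp_{\chi,n}$, since Proposition~\ref{prop:edge=n} already gives $\cI^\sharp_{\chi,n}=n$. Both indices count crossings of the fixed energy $E$, one in the spectral picture $(t,E)$ and one in the phase picture through $\Omega^\sharp_\chi$, so the heart of the matter is to match the \emph{orientations} of these crossings. By Lemma~\ref{lem:spectrum_Hedge}, for any fixed $x$ the locus $\{(t,E)\ :\ \Omega^\sharp_\chi(t,E,x)=1\}$ is exactly the graph of the branches of eigenvalues of $H^\sharp_\chi(\cdot)$ in $g_n$. I would fix $x$ and regard $\phi(t,E):=\arg\Omega^\sharp_\chi(t,E,x)$ as a locally $C^1$ lifting, so that $E\in\sigma(H^\sharp_\chi(t))$ iff $\phi(t,E)\equiv 0\ (\mathrm{mod}\ 2\pi)$. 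I choose $E$ regular for the spectrum (the full-measure set of Definition~\ref{def:spectralFlow}); then by simplicity (Lemma~\ref{lem:spectrum_Hedge}) and Lemma~\ref{lem:continuityOfEigenvalues}, through each crossing point $t^*$ passes a unique $C^1$ branch $E_k$ with $E_k(t^*)=E$, and these $t^*$ are in bijection with the solutions of $\Omega^\sharp_\chi(\cdot,E,x)=1$.

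The key input is the monotonicity $\partial_E\phi>0$. To establish it I write $\arg\theta[u,\cdot]=-2\arg(u'+\ri u)$ and differentiate the decaying solution in $E$. For $u^+\in\cL^{\sharp,+}_{\chi,t}(E)$, setting $v:=\partial_E u^+$, the equation $-u''+V^\chi_t u=Eu$ yields $(-\partial_{xx}+V^\chi_t-E)v=u^+$, so the Wronskian $W[u^+,v]=u^+v'-(u^+)'v$ satisfies $\tfrac{\rd}{\rd x}W[u^+,v]=-(u^+)^2$. Since $u^+$ and its $E$-derivative decay exponentially at $+\infty$ (Lemma~\ref{lem:LEpm}), the boundary term vanishes and $W[u^+,v](x)=\int_x^{+\infty}(u^+)^2>0$. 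A direct computation then gives $\partial_E\arg\theta^{\sharp,+}_{\chi,t}(E,x)=\frac{2\int_x^{+\infty}(u^+)^2}{(u^+)^2+((u^+)')^2}>0$, and symmetrically $\partial_E\arg\theta^{\sharp,-}_{\chi,t}(E,x)=-\frac{2\int_{-\infty}^x(u^-)^2}{(u^-)^2+((u^-)')^2}<0$; subtracting gives $\partial_E\phi>0$ strictly (the denominators never vanish by Lemma~\ref{lem:BasicFact_ODE}).

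With monotonicity in hand I differentiate the branch identity $\phi(t,E_k(t))\equiv 0$ to get $\partial_t\phi+(\partial_E\phi)\,E_k'=0$, so ${\rm sgn}(E_k'(t^*))=-{\rm sgn}(\partial_t\phi(t^*,E))$ (in particular $\partial_E\phi\neq0$ shows that $z=1$ is a regular point of $\Omega^\sharp_\chi(\cdot,E,x)$ precisely when $E$ is spectrally regular). On the other hand the crossing sign is $\nu_1[\Omega^\sharp_\chi(\cdot,E,x),t^*]={\rm sgn}(\partial_t\phi(t^*,E))$ by~\eqref{eq:def:nuz}, hence $\nu_1=-{\rm sgn}(E_k'(t^*))$ at each crossing. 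Summing with Lemma~\ref{lem:winding_as_crossing} gives
\[
\cI^\sharp_{\chi,n}=W[\Omega^\sharp_\chi(\cdot,E,x)]=\sum_{t^*}\nu_1[\Omega^\sharp_\chi(\cdot,E,x),t^*]=-\sum_{t^*}{\rm sgn}(E_k'(t^*))=\cS^\sharp_{\chi,n},
\]
which equals $n$ by Proposition~\ref{prop:edge=n}. I expect the one genuinely analytic step—rather than bookkeeping—to be the justification that $\partial_E u^\pm$ inherits the exponential decay of $u^\pm$ at $\pm\infty$, so that the Wronskian boundary term truly vanishes; this relies on the local uniformity in $E$ of the decay rate inside the gap, which is exactly the differentiability of $E\mapsto\cL^\pm(E)$ from Lemma~\ref{lem:LEpm}.
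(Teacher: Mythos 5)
Your proof is correct, but it reaches the key sign identity $\nu_{1}\bigl[\Omega^\sharp_\chi(\cdot,E,x),t^*\bigr]=-{\rm sgn}\,E_k'(t^*)$ by a genuinely different mechanism than the paper. The paper works entirely with $t$-derivatives: it computes $E_k'(t^*)$ via Hellmann--Feynman as $\int_\R(\partial_t V^\chi_t)|u_+|^2$, then differentiates the ODE in $t$ and integrates by parts on $(x_0,+\infty)$ (with $x_0<-L$, so that $\theta^{\sharp,-}$ is $t$-independent) to identify this integral with the Wronskian-type numerator of $-\ri\,\partial_t\Omega/\Omega$. You instead never touch $\partial_t V^\chi_t$: you prove the classical Sturm/Prüfer monotonicity $\partial_E\phi>0$ of the relative phase via the Wronskian identity $\tfrac{\rd}{\rd x}W[u^\pm,\partial_E u^\pm]=-(u^\pm)^2$, and then convert between the $t$- and $E$-pictures by implicitly differentiating $\phi(t,E_k(t))\equiv 0$ along the eigenvalue branches. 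What your route buys is a structural explanation of the sign (the eigenvalue locus is a level set of a function strictly monotone in $E$, so transversal crossings in $t$ and in $E$ carry opposite orientations) and an argument that does not depend on the specific form of the $t$-dependence of the potential; what the paper's route buys is that it stays one-dimensional in parameter space and only needs regularity of $\Omega$ in $t$ at fixed $E$, whereas you need joint $C^1$ regularity of $\Omega$ in $(t,E)$. The one point to make explicit, as you correctly flag, is the vanishing of $W[u^+,\partial_E u^+]$ at $+\infty$: Definition~\ref{def:contL} alone does not give decay of $\partial_E u^+$, and one should invoke the differentiability in $E$ of the Floquet solution (equivalently of the transfer-matrix eigendata, cf.\ Appendix~\ref{appendix:TransfertMatrix}), so that $\partial_E u^+$ decays like $x\,\re^{-\kappa(E)x}$; this is the same level of care the paper implicitly uses for $\partial_t u_+$ in its own integration by parts.
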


\begin{proof}
   We fix $x_0 < -L$ (this choice simplifies the computations below). We claim that $1 \in \SS^1$ is a regular point of $\Omega^\sharp_\chi(\cdot, E, x_0)$ iff $E$ is a regular point for the spectrum of $H_\chi^\sharp(\cdot)$. Let $t^*, E^*$ be such that $\Omega^\sharp_\chi(t^*, E^*, x_0) = 1$. By Lemma~\ref{lem:spectrum_Hedge}, this implies that $E^*$ is an eigenvalue of $H^\sharp_\chi(t^*)$, hence there is $k \in K$ so that $E^* = E_k(t^*)$.
    
    Our goal is to prove that 
    \begin{equation} \label{eq:IPPequality}
        \nu_{z = 1}  \left[  \Omega^\sharp_\chi (\cdot, E^*, x_0), t^*  \right] = - {\rm sgn} (E_k'(t^*) ),
    \end{equation}
    where we recall that $\nu_z[\cdot, \cdot]$ was defined in~\eqref{eq:def:nuz}. This would first prove that $1 \in \SS^1$ is indeed a regular point of $\Omega^\sharp_\chi(\cdot, E, x_0)$, and it would also prove the result since $\cI^{\sharp}_{\chi, n}$ is the winding of $\Omega^\sharp_\chi$.
    Let us first compute $E_k'(t^*)$ (which is non null by assumption). Let $u_t(\cdot) \in L^2(\R)$ be a normalised eigenvalue of $H^\sharp_\chi(t)$ for the eigenvalue $E(t)$. From the Hellman-Feynman theorem, we have
    \begin{align} \label{eq:HellmanFeynman}
         E_k' (t) & = \partial_t \left\bra u_t, H^\sharp_\chi(t) u_t \right\ket 
            = \left\bra u_t, \partial_t \left[ H^\sharp_\chi(t) \right] u_t \right\ket
            = \left\bra u_t,  (\partial_t V^\chi_t) u_t \right\ket = \int_{\R} \left( \partial_t V^\chi_t \right)  | u_t |^2 .
    \end{align}
    We now compute $\nu_{z = 1} \left[  \Omega^\sharp_\chi (\cdot, E^*, x_0), t^*  \right]$. From the definition of $\Omega^\sharp_\chi$, and the fact that $\theta \left[ \cL^{\sharp, -}_{\chi, t}(E^*), x_0  \right]$ is independent of $t$ (since $x_0 < -L$), we obtain
    \begin{equation} \label{eq:fraction}
       - \ri \dfrac{ \left( \partial_t \Omega^\sharp_\chi \right) (t^*, E^*, x_0) }{ \Omega^\sharp_\chi (t^*, E^*, x_0)} 
       =  - \ri \dfrac{ \left( \partial_t \theta \left[ \cL_{\chi, t}^{\sharp, +}(E^*), x_0 \right] \right)}{\theta \left[ \cL_{\chi, t}^{\sharp, +}(E^*), x_0 \right]} \Big|_{t = t^*}
       = 2 \dfrac{u_+ ( \partial_t u_+')-  u_+' (\partial_t u_+)}{| u_+ |^2 + | u_+'|^2} \Big|_{t = t^*, x= x_0},
    \end{equation}
    where $u_+(t, x)$ is any continuously differentiables branch of functions in $\cL^{\sharp, +}_{\chi,t}(E^*)$. In particular, for all $t \in \TT^1$, $u_+(t, \cdot)$ satisfies the ODE $( - \partial_{xx}^2 + V^\chi_t - E^*) u_+ = 0$.
    Differentiating with respect to $t$ gives
    \[
        ( - \partial_{xx}^2 + V^\chi_t - E^*) ( \partial_t u_+) + (\partial_t V^\chi_t) u_+ = 0.
    \]
    We multiply by $u_+$ and integrate between $x_0$ and $+\infty$ to get (recall that $u_+$ is exponentially decaying at $+ \infty$)
    \[
         \int_{x_0}^\infty (\partial_t V^\chi_t) | u_+ | ^2 = - \int_{x_0}^{\infty} u_+ ( - \partial_{xx}^2 + V^\chi_t - E^*) ( \partial_t u_+)
         = u_+' ( \partial_t u_+) - u_+ (\partial_t u_+') .
    \]
    For the last equality, we integrated by part and used again that $( - \partial_{xx}^2 + V^\chi_t - E^*) u_+ = 0$. Finally, we evaluate at $t = t^*$, and use that $(\partial_t V^\chi_t)(x) = 0$ for $ x < x_0 < -L$ to get
    \begin{equation*}
           E'_k(t^*) = \int_\R  (\partial_t V^\chi_t) | u_+ | ^2 = \int_{x_0}^\infty (\partial_t V^\chi_t) | u_+ | ^2 = u_+' ( \partial_t u_+) - u_+ (\partial_t u_+') \Big |_{t = t^*},
    \end{equation*}
    which is the numerator in~\eqref{eq:fraction}, up to a sign. This proves~\eqref{eq:IPPequality}.
\end{proof}

\subsection{The Dirichlet spectral flow}
\label{ssec:spectralFlowDirichlet}

We finally consider the self-adjoint Dirichlet operator
\[
    H^\sharp_D(t) := - \partial_{xx}^2 + V(x - t), \quad \text{acting on $L^2(\R^+)$, with domain $H^2_0(\R^+)$}.
\]
As in the previous section, the essential spectrum of $H^\sharp_D(t)$ is independent of $t$, equals $\sigma_{\ess} ( H^\sharp_D ) = \sigma(H_0)$, and eigenvalues may appear in the essential gaps. We assume that the $n$-th gap is open, and we write again $\left( E_k(t) \right)_{k \in K}$ with $K \subset \Z$ these branches of eigenvalues. We denote by $\cI^\sharp_{D,n}$ the spectral flow of $H^\sharp_D(\cdot)$ in the $n$-th gap.

\begin{proposition} \label{prop:SDn}
    Assume the $n$-th essential gap is open. Then $\cS^\sharp_{D, n} = n$.
\end{proposition}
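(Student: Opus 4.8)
The plan is to mirror the proof of Proposition~\ref{prop:Schi=n}, with the single bulk space $\cL^+_t(E)$ and a hard Dirichlet wall at $x=0$ playing the combined role of the two edge spaces $\cL^{\sharp,\pm}_{\chi,t}$. The starting observation is that $E\in g_n$ is an eigenvalue of $H^\sharp_D(t)$ if and only if there is a nonzero solution of $-u''+V_t u=Eu$ on $\R^+$ that decays at $+\infty$ and vanishes at $0$. By Lemma~\ref{lem:LEpm} the decaying solutions form the one-dimensional space $\cL^+_t(E)$, so this is equivalent to $\theta[\cL^+_t(E),0]=1$, since $\theta[u,0]=1$ precisely when $u(0)=0$. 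Writing $\theta^+_t:=\theta[\cL^+_t(E),0]$, the value $E$ is a Dirichlet eigenvalue at $t$ exactly at the crossings $\theta^+_t=1$, and the eigenstate is the restriction to $\R^+$ of the (unique up to scaling) decaying solution.

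By Definition~\ref{def:MaslovIndex} and Proposition~\ref{prop:bulk=n}, the winding number of $t\mapsto\theta[\cL^+_t(E),x]$ equals the Maslov index $\cM^+=\cB_n=n$ for every $x$, in particular for $x=0$; hence $W[t\mapsto\theta^+_\cdot]=n$. It therefore suffices to identify this winding, computed through the signed-crossing formula of Lemma~\ref{lem:winding_as_crossing} at $z=1$, with the spectral flow of Definition~\ref{def:spectralFlow}.

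The heart of the argument is the sign computation at a crossing, the analogue of~\eqref{eq:IPPequality}. Fix $E$ regular for the spectrum of $H^\sharp_D(\cdot)$, let $t^*$ satisfy $\theta^+_{t^*}=1$, and let $E=E_k(t^*)$. Choosing a continuously differentiable branch $u_+(t,\cdot)\in\cL^+_t(E)$, the same differentiation of $\theta$ as in Lemma~\ref{lem:MaslovIndex} gives, at $t=t^*$ where $u_+(t^*,0)=0$,
\[
\nu_1[\theta^+_\cdot,t^*]=\mathrm{sgn}\!\left(-2\,\frac{\partial_t u_+}{u_+'}\Big|_{x=0}\right)=-\,\mathrm{sgn}\!\left(u_+'(t^*,0)\,\partial_t u_+(t^*,0)\right).
\]
On the other hand, differentiating the fixed-energy ODE $(-\partial_{xx}^2+V_t-E)u_+=0$ in $t$, multiplying by $u_+$, and integrating over $\R^+$ produces a boundary term at $0$, the contribution at $+\infty$ vanishing because $u_+$ and $\partial_t u_+$ decay exponentially by Lemma~\ref{lem:LEpm}. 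Using $u_+(t^*,0)=0$ this yields
\[
\int_0^\infty(\partial_t V_t)\,|u_+|^2\;\Big|_{t=t^*}=u_+'(t^*,0)\,\partial_t u_+(t^*,0).
\]
Since $u_+(t^*,\cdot)|_{\R^+}$ is proportional to the normalised eigenstate $\psi_{t^*}$, the Hellmann–Feynman formula $E_k'(t^*)=\int_0^\infty(\partial_t V_t)|\psi_{t^*}|^2$ shows that this integral has the sign of $E_k'(t^*)$. Combining the two displays gives $\nu_1[\theta^+_\cdot,t^*]=-\mathrm{sgn}(E_k'(t^*))$, which in particular confirms that $1$ is a regular point of $t\mapsto\theta^+_\cdot$ whenever $E$ is regular for the spectrum.

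Summing over crossings via Lemma~\ref{lem:winding_as_crossing} then gives $n=W[t\mapsto\theta^+_\cdot]=\sum_{t:\,\theta^+_t=1}\nu_1[\theta^+_\cdot,t]=-\sum_{k\in K}\sum_{t\in E_k^{-1}(\{E\})}\mathrm{sgn}(E_k'(t))=\cS^\sharp_{D,n}$, as desired. I expect the main obstacle to be the boundary-term bookkeeping in the half-line integration by parts: one must verify that the term at $x=0$ survives (it does, because the Dirichlet condition forces $u_+(t^*,0)=0$ but leaves $u_+'(t^*,0)\neq0$) while the term at $+\infty$ vanishes, and that its sign matches the crossing velocity $\nu_1$ with the conventions fixed in~\eqref{eq:def:nuz} and Definition~\ref{def:spectralFlow}. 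Everything else is a direct transcription of the domain-wall argument.
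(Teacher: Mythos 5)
Your proof is correct. The analytic core (differentiate the ODE in $t$, multiply by $u_+$, integrate by parts on $\R^+$, keep the boundary term $u_+'(0)\,\partial_t u_+(0)$, and match it with Hellmann--Feynman) is exactly the computation the paper performs, and your sign identity $\nu_1[\theta^+_\cdot,t^*]=-\mathrm{sgn}\,E_k'(t^*)$ is consistent with the conventions of~\eqref{eq:def:nuz} and Definition~\ref{def:spectralFlow}. Where you diverge is in the topological bookkeeping. You transcribe the domain-wall argument of Proposition~\ref{prop:Schi=n}: the spectral flow is identified with the winding of $t\mapsto\theta[\cL^+_t(E),0]$, i.e.\ with the Maslov index $\cM^+=\cB_n=n$, by summing signed crossings via Lemma~\ref{lem:winding_as_crossing}. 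The paper instead exploits the exact translation structure $u_t=u_0(\cdot-t)$: it first counts the crossings directly as the zeros of $u_0$ in $[-1,0)$ (reusing the count from Proposition~\ref{prop:bulk=n}), and then substitutes $\partial_t u_t=-\partial_x u_t$ into the boundary term to get $E_k'(t)=-|\partial_x u_t|^2(0)<0$. That route buys a strictly stronger conclusion — every Dirichlet branch is monotonically decreasing, as remarked after the proof, which is what makes the resonant-state picture of Figure~\ref{fig:dirichlet_sf} clean — whereas your route buys uniformity: the Dirichlet and domain-wall spectral flows are handled by one and the same mechanism, and it would survive perturbations that break the pure-translation structure. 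Both are complete proofs of $\cS^\sharp_{D,n}=n$.
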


\begin{proof}
Let $E \in g_n$ be any fixed energy in the $n$-th gap, and let $u_0 \in \cL_{t=0}^+(E)$, where $\cL_t^+(E)$ is the (bulk) vectorial space of solutions as introduced in Lemma~\ref{lem:LEpm}. Then $u_t(x) := u_0(x - t)$ is in $\cL_t^+(E)$ for all $t \in [0,1]$, solves the bulk ODE $-u_t'' + V_t u_t = E u_t$ and is exponentially decaying at $+ \infty$. Since $u_t$ is integrable at $+ \infty$,  it is an eigenvalue of $H^\sharp_D$ iff $u_t(0)= 0$, that is $u_0( - t) = 0$. We infer that $E$ is an eigenvalue of $H^\sharp_D(t)$ iff $-t \in \cZ_0^+$. So the number of edge modes at $E$ equals the number of $0$ of $u_0$ in the interval $[-1, 0]$, and we already proved in the proof of Proposition~\ref{prop:bulk=n} that this number equals $n$.

\medskip

It remains to prove that if $t \in E_k^{-1}(\{ E \})$, then $E_k'(t) \neq 0$. Let $t \in E_k^{-1}(\{ E \})$. We repeat the steps of the proof of Proposition~\ref{prop:Schi=n}. The function $u_t$ satisfies the ODE $( - \partial_{xx}^2 + V_t - E) u_t = 0$. Differentiating with respect to $t$ gives
\[
     ( - \partial_{xx}^2 + V_t - E) (\partial u_t) + (\partial_t V_t) u_t = 0.
\]
We multiply by $u_t$ and integrate over $\R^+$ to get
\[
   \int_{\R^+} u_t ( - \partial_{xx}^2 + V_t - E) (\partial_t u_t)  = - \int_{\R^+} (\partial_t V_t) | u_t |^2 = -E_k'(t),
\]
where we used again the Hellman-Feynman inequality for the last part (see Eqt.~\eqref{eq:HellmanFeynman}). Integrating by part the left-hand side gives
\[
    E_k'(t) = (\partial_x u_t) (\partial_t u_t)(x=0) = - | \partial_x u_t |^2(x = 0),
\]
where we used that $u_t(x) = u_0( x- t)$, so that $\partial_t u = - \partial_x u$. Finally, since $u_t(0)= 0$, we have $u_t'(0) \neq 0$, so $E_k'(t) < 0$.
\end{proof}

We actually proved that the branches of eigenvalue of $H^\sharp_D(\cdot)$ are decreasing functions of $t$.

\begin{remark}[Resonant states]
    One can perform the same analysis for the vectorial spaces $\cL^-_t(E)$. A solution $u_t \in \cL^-_t(E)$ satisfying the Dirichlet boundary condition $u_t(0) = 0$ is called a {\em resonant state}. We find that the spectral flow of resonant modes is $-n$ is the $n$-th gap, and that the corresponding curves are increasing. Actually, one can prove (see {\em e.g.}~\cite{korotyaev2000lattice}) that the combination of the two spectral flows gives a smooth curve (see Figure~\ref{fig:dirichlet_sf}).
\end{remark}

\begin{remark}[Spectral pollution]
    An interesting corollary of $\cS^\sharp_{D, n} = n$ is that if one numerically studies the periodic Hamiltonian $H(0)$ on a large box with Dirichlet boundary conditions, then spurious eigenvalues will appear. More specifically, on a box $[t, L+t]$ with $L$ large enough, there will be flows of {\em spurious} eigenvalues in all essential gaps, as $t$ goes from $0$ to $1$, corresponding to the localised edge modes near the boundaries $t$ and $L + t$. 
\end{remark}

\subsection{Numerical illustrations}

We end this section on the Schrödinger case with some numerical illustrations. We plot in Figure~\ref{fig:hamiltonian} the spectra of $t \mapsto H^\sharp_\chi(t)$  and $H^\sharp_D(t)$ in the special case 
\[
    V(x) := 50 \cos(2 \pi x) + 10 \cos(4 \pi x).
\]
We chose the potential $V$ so that the corresponding Hamiltonians have their first three essential gaps open and rather large. For the domain wall Hamiltonian, we took the simple continuous piece-wise linear cut-off function $\chi(x) := \1 \left( x \le -\frac12 \right) + \left( \frac12 - x \right) \1 \left(- \frac12 < x \le \frac12 \right)$. The bands are shown in grey. 

From Figure~\ref{fig:domainwallsf}, we plot the spectrum of the domain wall Hamiltonian $H^\sharp_{\chi}$. We see that the spectral flows is $n$ is the $n$-th gap. Some extra eigenvalues may appear in the gaps, but they do not contribute to the spectral flow. The shapes of the branches of eigenvalues depend on the cut-off $\chi$.

In Figure~\ref{fig:dirichlet_sf}, we plot the spectrum of the Dirichlet Hamiltonian $H^\sharp_D$. The solid blue lines show the decreasing spectral flow of the eigenvalues, while the dotted black lines show the increasing spectral flow of the resonant modes. The combination of the two gives a smooth curve.

\begin{figure}[h]
    \centering
    \begin{subfigure}{0.49\textwidth}
        \includegraphics[width=\textwidth]{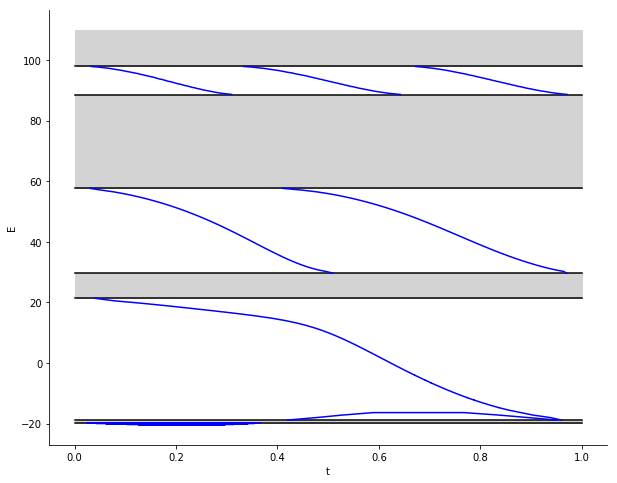}
        \caption{Domain wall Hamiltonian $t \mapsto H^\sharp_\chi(t)$.}
        \label{fig:domainwallsf}
    \end{subfigure}
    \begin{subfigure}{0.49\textwidth}
         \includegraphics[width = \textwidth]{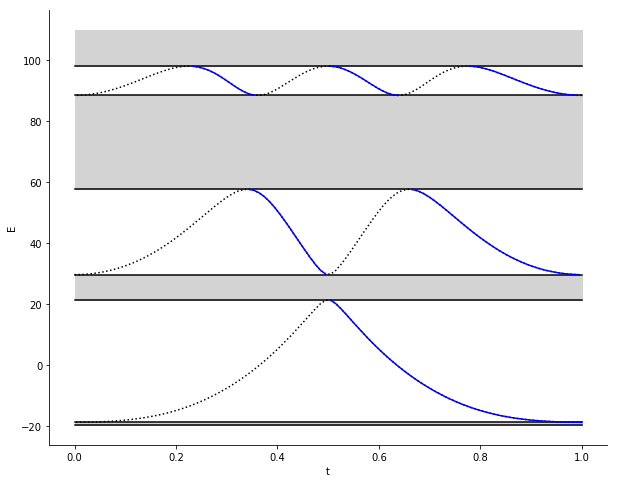}
        \caption{Dirichlet Hamiltonian $t \mapsto H^\sharp_D(t)$.}
        \label{fig:dirichlet_sf}
    \end{subfigure}
    \caption{Spectra of the domain wall (left) and Dirichlet (right) Hamiltonian as a function of $t$.  The essential spectrum is in grey, eigenvalues are in solid blue and resonant modes are in dotted black.}
    \label{fig:hamiltonian}
\end{figure}

\section{Dislocations in the Dirac case}
\label{sec:DiracCase}

We now focus on the Dirac case. We introduced the usual Pauli matrices $\bsigma_1, \bsigma_2$ and $\bsigma_3$, and the identity $\bone$, defined respectively by
\[
\bsigma_1 := \begin{pmatrix}
0 & 1 \\ 1 & 0
\end{pmatrix}, \quad 
\bsigma_2 := \begin{pmatrix}
0 & - \ri \\ \ri & 0
\end{pmatrix}, \quad 
\bsigma_3 := \begin{pmatrix}
1 & 0 \\ 0 & -1
\end{pmatrix}, 
\quad \text{and} \quad
\bone := \begin{pmatrix}
1 & 0 \\ 0 & 1
\end{pmatrix}.
\]
These matrices satisfies the relations $\bsigma_1^2 = \bsigma_2^2 = \bsigma^3 = \bone$, and
\[
\bsigma_1 \bsigma_2= - \bsigma_2 \bsigma_1 = - \ri \bsigma_3, \quad
\bsigma_2 \bsigma_3= - \bsigma_3 \bsigma_2 = - \ri \bsigma_1, \quad \text{and} \quad
\bsigma_3 \bsigma_1= - \bsigma_1 \bsigma_3 = - \ri \bsigma_2.
\] 
We consider $V \in L^1_\per(\R)$, and define the operator
\[
    \cD_0 := ( - \ri \partial_x) \bsigma_3 + V(x) \bsigma_1 \quad \text{acting on $L^2(\R, \C^2)$, with domain $H^1(\R, \C^2)$}.
\]
The proof of the following result is similar to the one of Lemmas~\ref{lem:Spectrum_Hill} and~\ref{lem:LEpm}. We provide a proof in Appendix~\ref{appendix:TransfertMatrix_Dirac} for completeness.
\begin{lemma} \label{lem:D0_selfadjoint}
    The operator $\cD_0$ is self-adjoint. Its spectrum is purely essential and symmetric with respect to the origin. For all $E \notin \sigma(\cD_0)$, the $\C$-vectorial space of solutions $\cL(E)$ is of dimension $2$, and has a splitting $\cL(E) = \cL^+(E) \oplus \cL^-(E)$, where $\cL^\pm(E)$ is the vectorial space of solutions that are square integrable at $\pm \infty$. In addition,
    \begin{itemize}
        \item The spaces $\cL^\pm(E)$ are of dimension $1$, and $\cL^+(E) \cap \cL^-(E) = \emptyset$;
        \item The solutions in $\cL^\pm(E)$ are exponentially decaying at $\pm \infty$;
        \item The maps $E \mapsto \cL^{\pm} (E)$ are differentiable in the sense of Definition~\ref{def:contL}.
    \end{itemize}
\end{lemma}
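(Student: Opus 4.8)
The plan is to mirror the proofs of Lemmas~\ref{lem:Spectrum_Hill} and~\ref{lem:LEpm}, replacing Hill's discriminant by the monodromy matrix of the associated first-order Dirac system. First I would settle self-adjointness: the free operator $(-\ri\partial_x)\bsigma_3$ is self-adjoint on $H^1(\R,\C^2)$ (via the Fourier transform it is multiplication by $\xi\bsigma_3$), and the perturbation $V\bsigma_1$ is symmetric since $\bsigma_1$ is Hermitian and $V$ is real-valued; using the one-dimensional embedding $H^1\hookrightarrow L^\infty$ one checks it is relatively bounded with relative bound $0$, so Kato--Rellich yields self-adjointness on $H^1(\R,\C^2)$. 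For the spectral symmetry I would exhibit the chiral unitary $\bsigma_2$, which anticommutes with both $\bsigma_1$ and $\bsigma_3$, so that $\bsigma_2\,\cD_0\,\bsigma_2 = -\cD_0$; since $\bsigma_2$ is unitary, $\cD_0$ and $-\cD_0$ are unitarily equivalent and $\sigma(\cD_0)=-\sigma(\cD_0)$. That the spectrum is purely essential follows, exactly as in Lemma~\ref{lem:Spectrum_Hill}, from the fact that $\cD_0$ commutes with integer translations, so by Floquet--Bloch decomposition its spectrum is the union of the non-constant band functions $k\mapsto E_n(k)$ and carries no isolated eigenvalue.

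Next I would reduce the eigenequation $\cD_0 u = Eu$ to the first-order linear system $u' = \ri\bsigma_3(V\bsigma_1 - E)u$. By Carathéodory existence and uniqueness for $L^1_\loc$ coefficients, its $\C^2$-valued solutions form a $2$-dimensional complex vector space $\cL(E)$, parametrised by the initial datum $u(0)\in\C^2$. The coefficient matrix $A_E(x):=\ri\bsigma_3(V(x)\bsigma_1 - E)$ is traceless (both $\bsigma_3\bsigma_1$ and $\bsigma_3$ have vanishing trace), so Liouville's formula shows that the transfer matrix $T_E(x)$ from $0$ to $x$ satisfies $\det T_E(x)=1$. In particular the monodromy matrix $M(E):=T_E(1)$ lies in $\mathrm{SL}(2,\C)$, so its eigenvalues are a pair $\rho(E)$ and $\rho(E)^{-1}$.

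For $E\notin\sigma(\cD_0)$ there is no bounded (Bloch) solution, which translates into $|\rho(E)|\neq 1$; combined with $\det M(E)=1$, exactly one eigenvalue has modulus $<1$ and the other modulus $>1$. The Floquet solution attached to $|\rho|<1$ satisfies $u(x+1)=\rho\,u(x)$, hence decays exponentially at $+\infty$ and spans a one-dimensional space $\cL^+(E)$; symmetrically the eigenvalue of modulus $>1$ spans $\cL^-(E)$, exponentially decaying at $-\infty$. Since $|\rho|\neq 1$ forces $\rho\neq\rho^{-1}$, the two eigenvalues are distinct, $M(E)$ is diagonalisable, and the eigenlines are transverse, so that $\cL(E)=\cL^+(E)\oplus\cL^-(E)$ with $\cL^+(E)\cap\cL^-(E)=\{0\}$.

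Finally, for differentiability in $E$, I would note that $A_E$ depends affinely on $E$, so $M(E)$ is entire in $E$; inside a gap its two eigenvalues remain simple and separated, whence the spectral projections and the associated eigenvectors depend analytically on $E$ by holomorphic functional calculus, giving smooth generating sections for $\cL^\pm(E)$ in the sense of Definition~\ref{def:contL} (adapted to the Dirac fundamental system). The hard part, as in the Schrödinger case, is the clean identification of the spectrum with the set where the monodromy is elliptic, i.e.\ proving the equivalence ``$E\notin\sigma(\cD_0)$ iff $|\rho(E)|\neq 1$'' (a Dirac analogue of the condition $|\Delta(E)|\le 2$ on Hill's discriminant), together with the relative-bound estimate making Kato--Rellich applicable for a merely $L^1_\per$ potential; this is precisely the content I would defer to Appendix~\ref{appendix:TransfertMatrix_Dirac}.
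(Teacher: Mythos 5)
Your proof is correct and follows essentially the same route as the paper's Appendix~\ref{appendix:TransfertMatrix_Dirac}: reduce to a first-order system, show the monodromy matrix lies in ${\rm SL}(2,\C)$, and read off the hyperbolic splitting $\cL^+(E)\oplus\cL^-(E)$ off the spectrum, with the symmetry $\bsigma_2\cD_0\bsigma_2=-\cD_0$ and Floquet--Bloch handling the remaining claims. The one ingredient the paper uses that you omit is the reality of the discriminant $\Delta(E)=\Tr T_E$, which follows from the antiunitary symmetry $\bu\mapsto\bsigma_1\overline{\bu}$ of the equation (so that $\bsigma_1\overline{\bc_E}=\bs_E$ and $\Delta(E)=2\,\Re\, c_{1,E}$); this is exactly what lets the paper identify $\sigma(\cD_0)=\Delta^{-1}([-2,2])$ and thereby settle the equivalence you defer, namely that $E\notin\sigma(\cD_0)$ forces $\lvert\rho(E)\rvert\neq1$, so you should import that observation rather than leave the step open. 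Note also that your Kato--Rellich argument via $H^1\hookrightarrow L^\infty$ really requires $V\in L^2_{\rm loc,unif}$ rather than merely $L^1_\per$ (a point the paper itself glosses over).
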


The symmetry of the spectrum comes from the fact that $\sigma_2 \cD_0 \sigma_2 = - \cD_0$, so $\cD_0$ is unitary equivalent to $-\cD_0$.

\begin{remark}~\label{rem:generalPotentials}
    Our results are valid with potentials $V$ which are not necessarily periodic. Actually, according to the following proofs, we only require the existence of a splitting $\cL(E) = \cL^+(E) \oplus \cL^-(E)$ for all $E \notin \sigma(\cD_0)$. 
\end{remark}

\subsection{Bulk index for the Dirac operator}
\label{ssec:Dirac_bulkIndex}
For $t \in \TT^1$, we define
\[
\cD(t) := ( - \ri \partial_x) \bsigma_3 + \re^{ - \ri t \pi \bsigma_3} \left( V(x) \bsigma_1 \right) \re^{ \ri t \pi \bsigma_3} \quad \text{acting on $L^2(\R, \C^2)$, with domain $H^1(\R, \C^2)$},
\]
where $\re^{ \ri t \pi \bsigma_3}$ is the unitary $2 \times 2$ matrix
\begin{equation} \label{eq:def:Rt}
     \re^{ \ri t \pi \bsigma_3}  = \begin{pmatrix}
        \re^{ \ri \pi t} & 0 \\ 0 & \re^{ -\ri \pi t}
    \end{pmatrix} = 
    \cos(\pi t) \bone + \ri \sin(\pi t) \bsigma_3.
\end{equation}
Since $\re^{ \ri t \pi \bsigma_3}$ commutes with $\bsigma_3$, the operator $\cD(t)$ can also be written as
\begin{equation} \label{eq:exact_Dt}
    \cD(t) = \re^{ - \ri t \pi \bsigma_3} \cD_0 \re^{  \ri t \pi \bsigma_3} = ( - \ri \partial_x) \bsigma_3 + \cos(2 \pi t) V \bsigma_1 - \sin(2 \pi t) V \bsigma_2. 
\end{equation}
In particular, $\cD(t)$ is $1$-periodic in $t$, and is a unitary transform of $\cD_0 := \cD(t = 0)$, hence they share the same purely essential spectrum.

\medskip

In what follows, we assume that $\sigma(\cD_0) \neq \R$, we fix $g \subset \R$ an essential gap of $\cD(\cdot)$, and we let $E \in g$. We see the equation $\cD(t) \bu = E \bu$ as a first order ODE on $\C^2$. In the sequel, we write $\bu = (u^\up, u^\down) \in \C^2$. If $\bu$ is non-null solution of $(\cD(t)  -  E) \bu = 0$, then $u^\up$ and $u^\down$ cannot vanish at the same time. We would like to introduce as before the function $\theta$ in~\eqref{eq:def:theta}, but $u^\up \pm \ri u^\down$ may vanish this time, as $u^{\up/\down}$ are now complex-valued functions. 

\medskip

For $t \in \TT^1$, we denote by $\cL_t^\pm(E)$ the $1$-dimensional vectorial space of solutions that are square integrable at $\pm \infty$. It holds $ \cL_t^\pm(E) = \re^{- \ri \pi t \bsigma_3} \cL_0^\pm(E)$.

\begin{lemma} \label{lem:up=down}
    If $\bu \in \cL^\pm_t(E)$ is a non-null solution, then $| u^\down | = | u^\up |$. In particular, the two functions $u^\down$ and $u^\up$ never vanish.
\end{lemma}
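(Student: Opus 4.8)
Lemma~\ref{lem:up=down} is what I want to prove. The plan is to exploit a conserved quantity for the first-order Dirac ODE, and the key observation is that $|u^\down|^2 - |u^\up|^2$ is constant in $x$ along any solution, after which square-integrability at $\pm\infty$ forces the constant to vanish.

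First I would write out the ODE $(\cD(t)-E)\bu = 0$ explicitly. Using~\eqref{eq:exact_Dt}, the eigenvalue equation reads
\[
    -\ri (u^\up)' = (E - \cos(2\pi t) V) u^\up + \ri \sin(2\pi t) V\, u^\down,
    \qquad
    \ri (u^\down)' = \overline{(\cdots)}\ \text{-type terms},
\]
but rather than manipulate the components directly, the clean route is to compute the derivative of the quantity $q(x) := |u^\down(x)|^2 - |u^\up(x)|^2 = \bra \bu(x), \bsigma_3 \bu(x)\ket_{\C^2}$ (up to an overall sign, matching $-\bsigma_3$). I would differentiate $q$ and show $q' \equiv 0$. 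Concretely, $q'(x) = 2\Re\bra \bu', \bsigma_3 \bu\ket$, and substituting $\bu' = \ri\bsigma_3(\cD(t)-V\text{-terms})$... the efficient formulation is: from $(-\ri\partial_x)\bsigma_3 \bu = (E - W(x))\bu$ where $W(x) := \re^{-\ri\pi t\bsigma_3}(V\bsigma_1)\re^{\ri\pi t\bsigma_3}$ is Hermitian and anticommutes with $\bsigma_3$ (since $\bsigma_1\bsigma_3 = -\bsigma_3\bsigma_1$ and the conjugating unitary commutes with $\bsigma_3$), one gets $\bsigma_3 \bu' = \ri(E - W)\bu$, hence $\bu' = \ri\bsigma_3(E-W)\bu$. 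Then
\[
    q'(x) = 2\Re\bra \bu, \bsigma_3\, \ri\bsigma_3(E-W)\bu\ket
          = 2\Re\left( \ri\bra \bu, (E-W)\bu\ket\right) = 0,
\]
where the last step uses that $\bra\bu,(E-W)\bu\ket$ is real because $E-W$ is Hermitian, so $\ri$ times a real number is purely imaginary and its real part vanishes.

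Once $q$ is constant, I invoke Lemma~\ref{lem:D0_selfadjoint}: a solution $\bu \in \cL_t^\pm(E)$ decays exponentially at $\pm\infty$, so both $|u^\up|$ and $|u^\down|$ tend to $0$, forcing $\lim q = 0$ and hence $q \equiv 0$. This gives $|u^\down(x)| = |u^\up(x)|$ for all $x$. For the last assertion, if $u^\up$ vanished at some point $x_0$, then $|u^\down(x_0)| = |u^\up(x_0)| = 0$ too, so $\bu(x_0) = 0$; by uniqueness for the first-order linear ODE this would force $\bu \equiv 0$, contradicting non-nullity. The same argument applies to $u^\down$, so neither component ever vanishes.

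The main obstacle, though it is really a bookkeeping point rather than a deep one, is getting the anticommutation/Hermiticity structure exactly right: I need $W(x) = \re^{-\ri\pi t\bsigma_3}(V\bsigma_1)\re^{\ri\pi t\bsigma_3}$ to anticommute with $\bsigma_3$, which follows because conjugation by $\re^{\ri\pi t\bsigma_3}$ preserves the $\{\bsigma_1,\bsigma_2\}$-plane on which $\bsigma_3$ acts by anticommutation — equivalently, from~\eqref{eq:exact_Dt}, $W = \cos(2\pi t)V\bsigma_1 - \sin(2\pi t)V\bsigma_2$, and both $\bsigma_1$ and $\bsigma_2$ anticommute with $\bsigma_3$. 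With $W$ Hermitian and anticommuting with $\bsigma_3$, the computation of $q'$ closes cleanly and no sign errors survive. An alternative, equivalent phrasing avoids this entirely by noting that $q$ is (up to sign) the probability current $\bra\bu,\bsigma_3\bu\ket$, whose constancy is exactly the statement that the Dirac current is conserved in the absence of a $\bsigma_3$-diagonal mass term; I would present whichever formulation compiles most transparently.
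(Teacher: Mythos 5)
Your proof is correct, but it takes a genuinely different route from the paper's. The paper exploits the antiunitary symmetry $\bsigma_1 K$ (with $K$ the complex conjugation): since $\cD(t)$ commutes with $\bsigma_1 K$ and this operation preserves the decay condition at $\pm\infty$, it maps the one-dimensional space $\cL^\pm_t(E)$ into itself, so $\bsigma_1 \overline{\bu} = \lambda \bu$ for some $\lambda \in \C^*$; the two resulting scalar relations $\overline{u^\up} = \lambda u^\down$ and $\overline{u^\down} = \lambda u^\up$ force $|\lambda| = 1$ and hence $|u^\up| = |u^\down|$. You instead show that the current $q = |u^\down|^2 - |u^\up|^2$ is conserved along the flow and kill the constant using decay at one infinity. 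Both arguments are sound. The symmetry argument is shorter and yields the stronger pointwise relation $\overline{u^\up} = \lambda u^\down$ with a fixed unimodular $\lambda$, but it leans on the specific form of the potential (a real multiple of $\bsigma_1$ conjugated by a matrix commuting with $\bsigma_3$) and on $\dim \cL^\pm_t(E) = 1$. Your current-conservation argument needs neither: it works for an arbitrary Hermitian, locally integrable matrix potential $W(x)$ and for any nontrivial solution that is merely square-integrable near one infinity (there $|q| \le |\bu|^2$ is integrable while $q$ is constant, so $q \equiv 0$ without even invoking exponential decay), which fits the spirit of Remark~\ref{rem:generalPotentials}. Two small points: the anticommutation of $W$ with $\bsigma_3$ that you worry about is not actually used --- once you write $\bu' = \ri \bsigma_3 (E - W)\bu$, the two factors of $\bsigma_3$ cancel and only the Hermiticity of $E - W$ enters the computation of $q'$; and since $V$ is only in $L^1_\loc$, solutions are locally absolutely continuous rather than $C^1$, so $q' = 0$ holds almost everywhere, which still yields constancy of $q$.
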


\begin{proof}
    From~\eqref{eq:exact_Dt}, $\cD(t)$ commutes with $\bsigma_1 K$, where $K \bu := \overline{\bu}$ is the complex conjugation operator. So if $\bu$ is a solution in $\cL^\pm_t(E)$, then so is $\bsigma_1 \overline{\bu}$, and there is $\lambda \in \C \setminus \{ 0 \}$, so that $\bsigma_1 \overline{\bu} = \lambda \bu$, that is 
    \[
        \overline{u^\up} = \lambda u^\down \quad \text{and} \quad
        \overline{u^\down} = \lambda u^\up.
    \]
    This implies $| \lambda | = 1$, and the result follows.
\end{proof}

We can therefore introduce the quantity
\begin{equation} \label{eq:def:theta_Dirac}
    \theta^+_t(E, x) := \theta[\cL^+_t(E), x] := \theta[\bu_t, x] := \dfrac{u_t^\down}{u_t^\up} (x) \quad \in \SS^1.
\end{equation}
where $\bu_t$ is any non-null function in $\cL^+_t(E)$, and similarly for $\theta^-_t(E, x)$. By construction, $\theta^\pm_t(E, x)$ are well-defined functions with value in $\SS^1$, and are continuous in $x \in \R$ and $E \in g$. As in Section~\ref{ssec:MaslovIndex}, the maps $t \mapsto \cL^\pm_t(E)$ are $1$-periodic, and we can define the Maslov indices as the winding number of $t \mapsto \theta_t^\pm$:
\[
    \cM^\pm := W \left[ t \mapsto \theta^\pm_t(E, x) \right], 
\]
which is independent of $x \in \R$ and $E \in g$, by continuity of $\theta^\pm_t$. As in Lemma~\ref{lem:M+=M-}, we have $\cM^+ = \cM^-$, and we define by $\cB := \cM^+ = \cM^-$ the bulk index.
\begin{proposition} 
    In all open gaps of $\cD_0$, we have $\cB = 1$.
\end{proposition}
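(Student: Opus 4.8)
The plan is to exploit the explicit unitary covariance $\cD(t) = \re^{-\ri\pi t\bsigma_3}\cD_0\,\re^{\ri\pi t\bsigma_3}$ from~\eqref{eq:exact_Dt} in order to compute the branch $\theta^+_t$ completely explicitly in terms of $\theta^+_0$, and then simply read off the winding number. The heart of the matter is that the whole $t$-dependence of $\cL^+_t(E)$ is generated by the diagonal rotation $\re^{-\ri\pi t\bsigma_3}$, which acts on the two components $u^\up,u^\down$ by opposite phases.

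First I would fix any non-null $\bu_0 = (u^\up_0, u^\down_0) \in \cL^+_0(E)$ and set $\bu_t := \re^{-\ri\pi t\bsigma_3}\bu_0$. Because the conjugation in~\eqref{eq:exact_Dt} intertwines $\cD_0$ and $\cD(t)$, the function $\bu_t$ solves $\cD(t)\bu_t = E\bu_t$, and since $\re^{-\ri\pi t\bsigma_3}$ is a fixed unitary $2\times 2$ matrix it preserves square-integrability at $+\infty$; hence $\bu_t \in \cL^+_t(E)$. This recovers the identity $\cL^+_t(E) = \re^{-\ri\pi t\bsigma_3}\cL^+_0(E)$ and, more importantly, provides an explicit continuously differentiable branch of solutions. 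Using the diagonal form of $\re^{\ri\pi t\bsigma_3}$ recorded in~\eqref{eq:def:Rt}, the components are
\[
    u^\up_t = \re^{-\ri\pi t}\,u^\up_0, \qquad u^\down_t = \re^{\ri\pi t}\,u^\down_0,
\]
so that
\[
    \theta^+_t(E, x) = \dfrac{u^\down_t}{u^\up_t}(x) = \re^{2\ri\pi t}\,\dfrac{u^\down_0}{u^\up_0}(x) = \re^{2\ri\pi t}\,\theta^+_0(E, x).
\]

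It then only remains to compute the winding of $t \mapsto \theta^+_t(E, x)$. Since $\theta^+_0(E, x)$ is a fixed point of $\SS^1$ independent of $t$, a continuous lifting of $\theta^+_t(E, x) = \re^{2\ri\pi t}\theta^+_0(E, x)$ is $\alpha(t) = 2\pi t + \arg\theta^+_0(E, x)$, whence
\[
    \cM^+ = W\!\left[ t \mapsto \theta^+_t(E, x) \right] = \dfrac{1}{2\pi}\big( \alpha(1) - \alpha(0) \big) = 1,
\]
and therefore $\cB = \cM^+ = 1$.

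I do not expect a genuine obstacle here: the argument is essentially a one-line computation once the covariance is used. The only points that need a word of care are that $\theta^+_0(E, x)$ is well-defined and nonzero, which holds because it lies in $\SS^1$ by Lemma~\ref{lem:up=down}, and that the answer is independent of the auxiliary choices of $x \in \R$, $E \in g$, and of the representative $\bu_0$, all of which was already established in the discussion preceding the statement. (One could alternatively verify $\cB=1$ through $\cM^-$, or invoke $\cM^+=\cM^-$ from the Dirac analogue of Lemma~\ref{lem:M+=M-}, but the direct computation above is the cleanest route.)
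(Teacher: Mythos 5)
Your proof is correct and follows essentially the same route as the paper: both use the covariance $\cL^+_t(E) = \re^{-\ri\pi t\bsigma_3}\cL^+_0(E)$ to write $\theta^+_t = \re^{2\ri\pi t}\theta^+_0$ and read off the winding. The extra remarks on well-definedness and independence of choices are consistent with the preceding discussion in the text.
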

\begin{proof}
    Let $\bu_0 \in \cL_0^+(E)$. Then $\bu_t := \re^{ - \ri \pi t \bsigma_3} \bu_0$ is in $\cL_t^+(E)$. From~\eqref{eq:def:Rt}, we obtain
    \[
        \theta^+_t(x) = \dfrac{u_t^\down}{u_t^\up}(x) = \dfrac{\re^{ \ri \pi t} u_0^\down}{\re^{ -\ri \pi t} u_0^\up} = \re^{ 2 \ri \pi t} \theta^+_0(x),
    \]
    which has winding $1$, as wanted.
\end{proof}

\subsection{The edge index in the Dirac case}
\label{ssec:Dirac_edgeIndex}

We now consider the self-adjoint domain wall Dirac operator
 \[
    \cD^\sharp_\chi(t) = \cD_0 \chi + \cD(t) (1 - \chi)  
    = (- \ri \partial_x ) \bsigma_3 + V \bsigma^\chi_t
    \quad
\text{acting on $L^2(\R)$, with domain $H^2(\R)$},
\]
where we introduced
\[
    \bsigma^\chi_t(x) := \left[ \chi(x) + (1 - \chi(x)) \cos(2 \pi t) \right] \bsigma_1 - (1 - \chi(x)) \sin( 2 \pi t) \bsigma_2.
\]
The map $t \mapsto \cD^\sharp_\chi(t)$ is $1$-periodic in $t \in \TT^1$. It is classical that the essential spectrum of $\cD^\sharp_\chi(t)$ is $\sigma(\cD_0) \cup \sigma(\cD(t)) = \sigma(\cD_0)$, and some eigenvalues may appear in the essential gaps. At $t = 0$ however, we have $\cD^\sharp_\chi(t=0) = \cD_0$, which has only purely essential spectrum.

\medskip

For $E \in g$ in a gap, we denote by $\cL^{\sharp, \pm}_{\chi,t}(E)$  the edge vectorial space of solutions of $\cD^\sharp_\chi(t) \bu = E \bu$ that are square integrable at $\pm \infty$. As in Proposition~\ref{prop:edge=n}, these spaces are both of dimension $1$.  They may have a non-trivial intersection, and if $\bu \in \cL^{\sharp, +}_{\chi, t}(E) \cap \cL^{\sharp, -}_{\chi,t}(E)$ is non null, then $\bu$ is a square integrable non null solution to $\cD^\sharp_\chi(t) \bu = E \bu$, {\em i.e.} an edge mode.

\medskip

In this edge situation, Lemma~\ref{lem:up=down} still holds, and we can introduce the edge quantity
\[
    \Omega^\sharp_{\chi}(t, E, x) := \theta^{\sharp, +}_{\chi,t}(E, x) \left(  \theta^{\sharp, -}_{\chi,t}(E, x) \right)^{-1}, 
    \quad \text{where} \quad
    \theta^{\sharp, +}_{\chi,t}(E, x) := \theta \left[ \cL^{\sharp, +}_{\chi,t}(E), x  \right],
\]
with $\theta$ as defined in~\eqref{eq:def:theta_Dirac}. The maps $t \mapsto \cL^{\sharp, \pm}_\chi(E)$ are $1$-periodic, so $\Omega^\sharp_\chi(\cdot, E, x)$ is a continuous map from $\TT^1$ to $\SS^1$, and we can define the edge index $\cI^\sharp_\chi$ as its winding number (see Definition~\ref{def:edgeIndex}). This index is independent of $E \in g$ and $x \in \R$ by continuity.

\begin{proposition} In all open gaps of $\cD_0$, we have $\cI^{\sharp}_\chi = 1$. In particular, it is independent of $\chi$.
\end{proposition}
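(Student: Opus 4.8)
The plan is to mirror exactly the structure of the Schrödinger proof (Proposition~\ref{prop:edge=n}), exploiting the fact that the winding number is a group homomorphism to split the edge index into two bulk-type contributions. Writing
\[
    \cI^\sharp_\chi = W \left[ \Omega^\sharp_\chi(\cdot, E, x) \right] = \cM^{\sharp, +}_\chi - \cM^{\sharp, -}_\chi,
    \quad \text{where} \quad
    \cM^{\sharp, \pm}_\chi := W \left[ t \mapsto \theta^{\sharp, \pm}_{\chi, t}(E, x) \right],
\]
I would then evaluate each of $\cM^{\sharp, \pm}_\chi$ at a spatial point lying outside the transition region $[-L, L]$ where the potential has stabilised to one of its two bulk forms.

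First I would compute $\cM^{\sharp, +}_\chi$ by choosing $x > L$. There $\chi(x) = 0$, so $\bsigma^\chi_t(x) = \cos(2\pi t) \bsigma_1 - \sin(2\pi t) \bsigma_2$ and any solution $\bu \in \cL^{\sharp, +}_{\chi, t}(E)$ satisfies the \emph{bulk} Dirac equation $\cD(t) \bu = E \bu$ for $x > L$, being exponentially decaying at $+\infty$. Hence for $x > L$ the function $\theta^{\sharp, +}_{\chi, t}(E, x)$ coincides with the bulk quantity $\theta^+_t(E, x)$ from~\eqref{eq:def:theta_Dirac}. By the definition of the Maslov index and the preceding proposition giving $\cB = 1$, this yields $\cM^{\sharp, +}_\chi = \cM^+ = \cB = 1$.

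Next I would compute $\cM^{\sharp, -}_\chi$ by choosing $x < -L$. There $\chi(x) = 1$, so $\bsigma^\chi_t(x) = \bsigma_1$ and the edge equation reduces to $\cD_0 \bu = E \bu$, which is independent of $t$; thus $\theta^{\sharp, -}_{\chi, t}(E, x) = \theta[\cL^-_{t=0}(E), x]$ does not depend on $t$ and has null winding, giving $\cM^{\sharp, -}_\chi = 0$. Combining the two contributions gives $\cI^\sharp_\chi = 1 - 0 = 1$, independent of $\chi$ and $E$ by continuity as already noted.

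The main obstacle, as in the Schrödinger case, is the careful justification that $\theta^{\sharp, +}_{\chi, t}$ genuinely agrees with the bulk $\theta^+_t$ for $x > L$ (and likewise at $-\infty$): this requires confirming that a solution of the edge equation that is square integrable at $+\infty$ is, on the half-line $x > L$, a solution of the bulk equation lying in $\cL^+_t(E)$ and hence exponentially decaying — so that the edge space $\cL^{\sharp, +}_{\chi, t}(E)$ is one-dimensional and its $\theta$ matches the bulk one. This relies on the splitting and decay properties established in Lemma~\ref{lem:D0_selfadjoint} together with the invariance $\cL^\pm_t(E) = \re^{-\ri \pi t \bsigma_3} \cL^\pm_0(E)$, and on Lemma~\ref{lem:up=down} ensuring $\theta$ is a well-defined $\SS^1$-valued map in the Dirac setting. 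The rest is a direct transcription of the homomorphism argument.
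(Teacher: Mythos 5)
Your proposal is correct and follows exactly the paper's route: the paper's proof simply invokes the argument of Proposition~\ref{prop:edge=n}, splitting $\cI^\sharp_\chi = \cM^{\sharp,+}_\chi - \cM^{\sharp,-}_\chi$ and identifying $\theta^{\sharp,+}_{\chi,t}$ with the bulk $\theta^+_t$ for $x>L$ (giving $\cM^{\sharp,+}_\chi=\cB=1$) and noting the $t$-independence of the equation for $x<-L$ (giving $\cM^{\sharp,-}_\chi=0$). Your additional care about the one-dimensionality and exponential decay of $\cL^{\sharp,+}_{\chi,t}(E)$ is exactly the justification the paper delegates to the Schrödinger case.
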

\begin{proof}
    As in the proof of Proposition~\ref{prop:edge=n}, we have $\cM^{\sharp, +}_\chi = \cM^+ = 1$, while $\cM^{\sharp, -}_\chi = 0$.
\end{proof}
Results similar to Lemma~\ref{lem:spectrum_Hedge} and Lemma~\ref{lem:existenceEdgeModes} hold.  In particular, all eigenvalues of $\cD^\sharp_\chi(t)$ are simple.

\subsection{The Dirac spectral flow}
\label{ssec:Dirac_spectralFlow}

It remains to link the winding of $\Omega^\sharp_\chi[\cdot, E, x]$ to the spectral flow of $\left( \cD^\sharp_\chi(t) \right)_{t \in \TT^1}$, which we denote by $\cS^\sharp_\chi$. 

\begin{proposition}
    In all open gaps of $\cD_0$, We have $\cS^\sharp_\chi = \cI^\sharp_\chi = 1 $. In particular, it is independent of $\chi$.
\end{proposition}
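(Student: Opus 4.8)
The plan is to follow the strategy of Proposition~\ref{prop:Schi=n} line by line, replacing the second-order Wronskian by its first-order Dirac analogue. Since $\cI^\sharp_\chi = W[\Omega^\sharp_\chi(\cdot,E,x)]$ is already known to equal $1$, it suffices to prove $\cS^\sharp_\chi = \cI^\sharp_\chi$. Fixing a regular energy $E$ and a base point $x_0 < -L$, I would reduce the claim, via Lemma~\ref{lem:winding_as_crossing} applied to $t \mapsto \Omega^\sharp_\chi(t,E,x_0)$ at $z = 1$, to the pointwise identity
\[
\nu_{z=1}\!\left[\Omega^\sharp_\chi(\cdot,E,x_0),t^*\right] = -\,{\rm sgn}\!\left(E_k'(t^*)\right)
\]
at every crossing $t^*$, exactly as in~\eqref{eq:IPPequality}. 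The choice $x_0 < -L$ is what makes this tractable: there $\chi \equiv 1$, so $\bsigma^\chi_t = \bsigma_1$ is independent of $t$, whence $\theta^{\sharp,-}_{\chi,t}(E,x_0)$ is constant in $t$ and $\partial_t\Omega^\sharp_\chi/\Omega^\sharp_\chi = \partial_t\theta^{\sharp,+}_{\chi,t}/\theta^{\sharp,+}_{\chi,t}$ at $x_0$.

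The key new ingredient is the Dirac Wronskian. Writing $W_t := V\bsigma^\chi_t$ and letting $\bu_+(t,\cdot)$ be a differentiable branch in $\cL^{\sharp,+}_{\chi,t}(E)$, I would differentiate $(\cD^\sharp_\chi(t)-E)\bu_+ = 0$ in $t$ to get $(\cD^\sharp_\chi(t)-E)(\partial_t\bu_+) = -(\partial_t W_t)\bu_+$, and then verify the algebraic identity
\[
\partial_x\!\left(\bu_+^*\bsigma_3\,\partial_t\bu_+\right) = -\,\ri\,\bu_+^*(\partial_t W_t)\bu_+ ,
\]
which follows from $\bsigma_3^2 = \bone$ and the self-adjointness of $W_t$. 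Integrating from $x_0$ to $+\infty$, the boundary term at $+\infty$ vanishes because $\bu_+$ and $\partial_t\bu_+$ decay exponentially, giving $\left[\bu_+^*\bsigma_3\partial_t\bu_+\right](x_0) = \ri\int_{x_0}^\infty \bu_+^*(\partial_t W_t)\bu_+$. At a crossing $t^*$ the function $\bu_+$ is an eigenmode, $\partial_t W_t$ vanishes for $x < x_0$, and the Hellman--Feynman identity~\eqref{eq:HellmanFeynman} (now with $\partial_t\cD^\sharp_\chi(t) = \partial_t W_t$) turns the right-hand side into $\ri\,E_k'(t^*)\|\bu_+\|^2$.

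It then remains to identify the left-hand side with the logarithmic derivative of $\theta^{\sharp,+}$. Using the structure from Lemma~\ref{lem:up=down} --- namely $\bsigma_1\overline{\bu_+} = \lambda\bu_+$ with $|\lambda| = 1$, hence $|u^\up| = |u^\down|$ and $\bu_+^*\bsigma_3\bu_+ = |u^\up|^2 - |u^\down|^2 = 0$ --- one sees that $\bu_+^*\bsigma_3\partial_t\bu_+$ is purely imaginary and, with $\theta^{\sharp,+} = u^\down/u^\up$ from~\eqref{eq:def:theta_Dirac}, that $\bu_+^*\bsigma_3\partial_t\bu_+ = -\,|u^\up|^2\,\partial_t\theta^{\sharp,+}/\theta^{\sharp,+}$. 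Combining the last two paragraphs yields $-\ri\,\partial_t\theta^{\sharp,+}/\theta^{\sharp,+} = -E_k'(t^*)\|\bu_+\|^2/|u^\up(x_0)|^2$ at $t^*$, whose sign is $-{\rm sgn}(E_k'(t^*))$; by the definition~\eqref{eq:def:nuz} of $\nu_z$ this is precisely the desired identity, and summing over crossings gives $\cS^\sharp_\chi = \cI^\sharp_\chi = 1$.

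The main obstacle I anticipate is the first-order bookkeeping: getting the Wronskian identity and all the factors of $\ri$ and signs to line up, and in particular verifying that $\partial_t\bu_+$ inherits the exponential decay of $\bu_+$ so that the boundary term at $+\infty$ truly drops --- this regularity is the analogue of the differentiability statements deferred to Section~\ref{sec:proofs} in the Schrödinger case, and I would handle it the same way, through the explicit decaying solution furnished by Lemma~\ref{lem:D0_selfadjoint}.
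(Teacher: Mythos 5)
Your proposal is correct and follows essentially the same route as the paper: fix $x_0<-L$ so that $\theta^{\sharp,-}$ is $t$-independent, reduce to the crossing identity $\nu_{z=1}[\Omega^\sharp_\chi(\cdot,E,x_0),t^*]=-{\rm sgn}(E_k'(t^*))$, and establish it by combining Hellmann--Feynman with the $t$-differentiated eigenvalue equation integrated against $\bu_+^*$ over $(x_0,\infty)$. Your packaging of the integration by parts as the pointwise Wronskian identity $\partial_x(\bu_+^*\bsigma_3\,\partial_t\bu_+)=-\ri\,\bu_+^*(\partial_t W_t)\bu_+$ is exactly the computation the paper performs, and your sign bookkeeping via $|u^\up|=|u^\down|$ matches the paper's $\ri\,\bu_t^*\bsigma_3(\partial_t\bu_t)/|u_t^\up|^2$ expression.
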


\begin{proof}
    We denote again by $\left( E_k(t) \right)_{k \in K}$ the branches of eigenvalues of $\cD^\sharp_\chi(t)$, where $K$ is a subset of $\Z$. Since all eigenvalues are simple, the branches $E_k(\cdot)$ cannot cross. Let $E^* \in g_0$ be a regular point for the spectrum of $\cD^\sharp_\chi$, and let $t^*$ be such that $\Omega^{\sharp}_\chi[t^*, E^*, x_0 ] = 1$ for some fixed $x_0 < -L$, so that there is $k \in K$ with $E_k(t^*) = E^*$. As in the proof of Proposition~\ref{prop:Schi=n}, it is enough to prove that
    \[
        \nu_{z = 1} \left[\Omega^{\sharp}_\chi(\cdot, E^*, x_0 ), t^*  \right] = - {\rm sgn} \, E_k'(t^*).
    \]
    We start with $E_k'(t^*)$, which is non null. Let $\bu_t$ be an eigenvalue of $\cD^\sharp_\chi(t)$ for $E_k(t)$. With the Hellmann-Feynman theorem, we have
    \begin{equation} \label{eq:HellmannFeynmann_Dirac}
         E_k' (t)  = \partial_t \left\bra \bu_t, \cD^\sharp_\chi(t) \bu_t \right\ket 
        = \left\bra \bu_t, \partial_t \left[ \cD^\sharp_\chi(t) \right] \bu_t \right\ket
        = \int_{\R} V  \left[ \bu_t^*  (\partial_t \bsigma^\chi_t) \bu_t \right].
    \end{equation}
    On the other hand, from the definition of $\Omega^\sharp_\chi$ and the fact that $x_0 < -L$, we have
    \[
        - \ri \left( \partial_t \Omega^\sharp_\chi \right) (t^*, E^*, x_0) = 
         - \ri \dfrac{ \left( \partial_t \theta^{\sharp, +}_{\chi,t}\right) (E^*,x_0) }{\theta^{\sharp, +}_{\chi,t} (E^*,x_0) } \Big|_{t = t^*}
         = - \ri \left( \dfrac{\partial_t u_t^\down}{u^\down_t} -  \dfrac{\partial_t u_t^\up}{u^\up_t}  \right) (x_0).
    \]
    Using Lemma~\ref{lem:up=down}, this is also
    \[
        - \ri \left( \partial_t \Omega^\sharp_\chi \right) (t^*, E^*, x_0) 
        = - \ri \dfrac{\overline{u_t^\down} (\partial_t u_t^\down) - \overline{u_t^\up} (\partial_t u_t^\up)}{| u_t^\up |^2}(x_0) = \ri \dfrac{\bu_t^* \bsigma_3 (\partial_t \bu_t)}{| u_t^\up |^2}(x_0).
    \]
    Finally, to link these two quantities, we differentiate the ODE $\cD^\sharp_\chi(t) \bu_t = E^* \bu_t$ with respect to $t$, and get
    \[
        \left( (- \ri \partial_x) \bsigma_3 + V \bsigma^\chi_t - E^* \right) (\partial_t \bu_t ) + V (\partial_t \bsigma^\chi_t) \bu_t = 0.
    \]
    Multiplying by $\bu_t^*$ (which is exponentially decreasing at $+ \infty$) and integrating between $x_0$ and $+ \infty$ gives
    \[
        \int_{\R} V \left[ \bu_t^* (\partial_t \bsigma^\chi_t) \bu_t \right] = \int_{x_0}^\infty V \left[ \bu_t^* (\partial_t \bsigma^\chi_t) \bu_t \right]
        = - \int_{x_0}^\infty \bu_t^*  \left( (- \ri \partial_x) \bsigma_3 + V \bsigma^\chi_t - E^* \right) (\partial_t \bu_t ),
    \]
    where the first equality comes from the fact that $ (\partial_t \bsigma^\chi_t) (x) = 0$ for $x < x_0 < -L$. Finally, we integrate by part the last integral and use again that $(\cD^\sharp_\chi(t) - E^*) \bu_t = 0$ to get, as wanted
    \[
         E_k'(t) = \int_{\R} V \left[ \bu_t^* (\partial_t \bsigma^\chi_t) \bu_t \right] = - \ri \bu_t^* \bsigma_3(\partial_t \bu_t),
    \]
    and the proof follows.
\end{proof}

\subsection{The spectrum of $\cD^\sharp_\chi(t)$ at $t = \frac12$.} 
\label{ssec:Dirac_t=12}
We now focus on the spectrum of $\cD^\sharp_\chi(t = \tfrac12)$. First, we have the symmetry $\bsigma_2 \re^{ \ri t \pi \bsigma_3} \bsigma_2 = \re^{ - \ri t \pi \bsigma_3}$, so
\begin{equation*}
\bsigma_2 \cD_0 \bsigma_2 = -\cD_0,
\quad
\bsigma_2 \cD(t) \bsigma_2 = -\cD(-t),
\quad \text{and} \quad
\bsigma_2 \cD^\sharp_\chi(t) \bsigma_2 = -\cD^\sharp_\chi(-t).
\end{equation*}
The first relation shows that the spectrum of $\cD(0)$ (hence of $\cD(t)$ as well) is symmetric with respect to the origin. So if $g$ is a gap of $\cD_0$, then so is $-g$. The last relation shows that
\[
\sigma \left( \cD^\sharp_\chi(t)  \right) = - \sigma \left(\cD^\sharp_\chi(-t) \right).
\]
This means that the graph $\left(t, \sigma \left( \cD^\sharp_\chi(t)  \right) \right)_{t \in \TT^1}$ is symmetric with respect to the point $(0, 0)$ and to the point $(\tfrac12, 0)$ (see Figure~\ref{fig:diracdwsf} below). 

\medskip

Of particular importance is the middle gap. Let us assume that $0 \notin \cD_0$ is not in the spectrum of $\cD_0$, and let $g_0 \subset \R$ be the gap of $\cD_0$ containing $0$. This holds in particular for the constant case $V = \kappa \neq 0$, since $\cD_0^2 = ( - \partial_{xx}^2 + \kappa^2) \bone$, whose spectrum is $[\kappa^2, \infty)$. In this case, the spectrum of $\cD_0$ is $\sigma(\cD_0) = (-\infty, -\kappa] \cup [\kappa, \infty)$, and $g_0 = (-\kappa, \kappa)$.

\begin{proposition}
    If $0$ is not in the spectrum of $\cD_0$, then $0$ is a simple eigenvalue of $\cD^\sharp_\chi( t = \tfrac12)$.
\end{proposition}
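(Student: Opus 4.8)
The plan is to exploit the chiral symmetry at the self-symmetric point $t = \tfrac12$ together with the fact that the edge index is $1$ (an \emph{odd} integer) in the gap $g_0$ containing $0$. Fixing $x_0 < -L$ and setting $f(t) := \Omega^\sharp_\chi(t, 0, x_0)$, the goal is to show $f(\tfrac12) = 1$: by the Dirac analogue of Lemma~\ref{lem:spectrum_Hedge} this is exactly equivalent to $0$ being an eigenvalue of $\cD^\sharp_\chi(\tfrac12)$, and simplicity will then follow from the already-established simplicity of all eigenvalues of $\cD^\sharp_\chi$. Note that $0 \in g_0$ by hypothesis, so the index $\cI^\sharp_\chi = W[f] = 1$ may indeed be read off at the energy $E = 0$.

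First I would establish a symmetry for the decaying solution spaces at energy $E = 0$. From $\bsigma_2 \cD^\sharp_\chi(t) \bsigma_2 = -\cD^\sharp_\chi(1-t)$ (using $1$-periodicity to write $-t \equiv 1-t$) one gets $\cD^\sharp_\chi(1-t) \bu = 0 \iff \cD^\sharp_\chi(t)(\bsigma_2 \bu) = 0$. Since $\bsigma_2$ is a fixed unitary matrix it preserves square-integrability at $\pm\infty$, so $\cL^{\sharp,\pm}_{\chi,1-t}(0) = \bsigma_2\, \cL^{\sharp,\pm}_{\chi,t}(0)$. A direct computation with $\bsigma_2 (u^\up, u^\down) = (-\ri u^\down, \ri u^\up)$ gives $\theta[\bsigma_2 \bu, x] = -\theta[\bu, x]^{-1}$, hence at $E = 0$
\[
\theta^{\sharp,\pm}_{\chi, 1-t}(0, x) = - \left( \theta^{\sharp,\pm}_{\chi, t}(0, x) \right)^{-1}, \quad \text{so that} \quad \Omega^\sharp_\chi(1-t, 0, x) = \Omega^\sharp_\chi(t, 0, x)^{-1}.
\]

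With $f(1-t) = f(t)^{-1} = \overline{f(t)}$ in hand, I would run a short winding/parity argument. Evaluating at $t = 0$ and using periodicity $f(1) = f(0)$ gives $f(0) = f(0)^{-1}$, and at $t = \tfrac12$ gives $f(\tfrac12) = f(\tfrac12)^{-1}$, so both values lie in $\{ \pm 1 \}$. Since $\cD^\sharp_\chi(0) = \cD_0$ and $0 \notin \sigma(\cD_0)$, the value $f(0) \neq 1$, hence $f(0) = -1$. Writing a continuous lift $f = \re^{\ri \alpha}$, the symmetry forces $\alpha(1-t) + \alpha(t)$ to be constant; evaluating this at $t = 0$ and $t = \tfrac12$ and using $W[f] = \cI^\sharp_\chi = 1$ yields $\alpha(\tfrac12) - \alpha(0) = \pi\, W[f] = \pi$. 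Therefore $f(\tfrac12) = -f(0) = 1$, i.e. $\Omega^\sharp_\chi(\tfrac12, 0, x_0) = 1$, and $0$ is a (simple) eigenvalue of $\cD^\sharp_\chi(\tfrac12)$.

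The main obstacle is the bookkeeping of the chiral symmetry at the level of the $\theta$-angles: one must verify carefully that $\bsigma_2$ intertwines the decaying spaces at $1-t$ and $t$ and track the resulting inversion $\theta \mapsto -\theta^{-1}$, since a sign slip here would flip the decisive conclusion $f(\tfrac12) = -f(0)$. The conceptual crux is that the \emph{oddness} of the index ($\cI^\sharp_\chi = 1$) is precisely what selects $f(\tfrac12) = 1$ (a zero mode) over the symmetric but mode-free alternative $f(\tfrac12) = -1$; were the index even, no zero mode would be forced.
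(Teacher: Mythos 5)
Your proof is correct, but it takes a genuinely different route from the paper's. The paper's own argument is a two-line observation: the spectral flow $\cS^\sharp_\chi$ equals $1$ in the gap $g_0$ containing $0$, and the graph $(t,\sigma(\cD^\sharp_\chi(t)))$ is symmetric about the point $(\tfrac12,0)$; crossings of the level $E=0$ away from $t=\tfrac12$ therefore pair up ($t_0 \leftrightarrow 1-t_0$) with equal signs, so an odd net flow forces a crossing at $t=\tfrac12$ itself. You instead work at the level of the function $f(t)=\Omega^\sharp_\chi(t,0,x_0)$: you push the symmetry $\bsigma_2 \cD^\sharp_\chi(t)\bsigma_2 = -\cD^\sharp_\chi(-t)$ down to the decaying subspaces at $E=0$, obtain $\theta[\bsigma_2\bu,x]=-\theta[\bu,x]^{-1}$ and hence $f(1-t)=f(t)^{-1}$, and then use a lift together with $W[f]=\cI^\sharp_\chi=1$ to get $f(\tfrac12)=-f(0)=1$. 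Your computations check out (in particular $\bsigma_2$ is a constant unitary so it preserves decay at $\pm\infty$, $\alpha(\tfrac12)-\alpha(0)=\tfrac12(\alpha(1)-\alpha(0))=\pi W[f]$, and $f(0)\neq 1$ because $\cD_0$ has purely essential spectrum), and the needed Dirac analogue of Lemma~\ref{lem:spectrum_Hedge} is asserted in the paper. What your approach buys is rigor at no extra cost: the spectral-flow version implicitly requires $E=0$ to be a regular value of the eigenvalue branches (or a limiting argument), whereas your argument works directly at $E=0$ with no genericity assumption, and it isolates the decisive mechanism cleanly — the oddness of the index is exactly what rules out the symmetric, mode-free alternative $f(\tfrac12)=-1$. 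The paper's version is shorter and more visual, but yours is the more self-contained proof.
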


\begin{proof}
    This is a direct consequence of the fact that the spectral flow of $\cD^\sharp_\chi$ is $1$ in $g_0$, combined with the fact that the graph of the spectrum $(t, \sigma(\cD^\sharp_\chi(t)))$ is symmetric with respect to $(0, \tfrac12)$. 
\end{proof}

\subsection{Numerical illustrations in the Dirac case}

We end this section with a numerical illustration in the Dirac case. In Figure~\ref{fig:diracdwsf}, we plot the spectrum of $\cD^\sharp_\chi(t)$ as a function of $t$. We took the potential
\[
    V(x) := 1 + \cos( 2 \pi x),
\]
for which $0$ is not in the essential spectrum. Actually, the Dirac operator $\cD_0$ has three gaps with this choice. For the cut-off function $\chi$, we took the continuous piece-wise linear function $\chi(x) := \1 \left( x \le -\frac12 \right) + \left( \frac12 - x \right) \1 \left(- \frac12 < x \le \frac12 \right)$. We see from this figure that the spectral flow is $1$ in each gap, and that $0$ is an eigenvalue for $t = \tfrac12$, as proved.

\begin{figure}[h]
    \centering
    \includegraphics[width=0.8\textwidth]{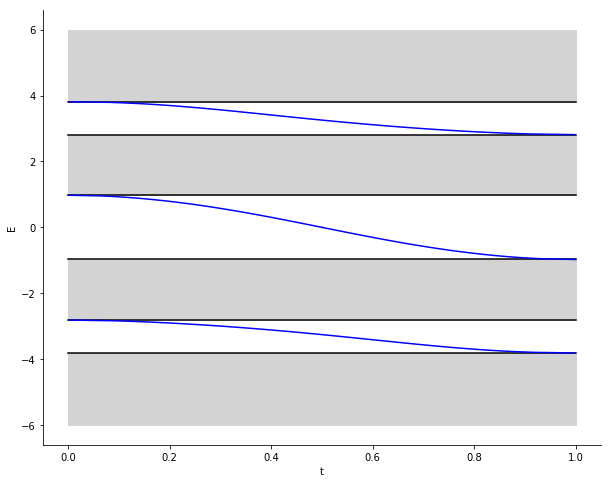}
    \caption{Spectrum of $t \mapsto \cD^\sharp_\chi(t)$.}
    \label{fig:diracdwsf}
\end{figure}

\section{Proofs concerning regularity}
\label{sec:proofs}

We gather in this section all the proofs concerning regularities.

\subsection{Proof of Lemma~\ref{lem:BasicFact_ODE}}
\label{ssec:proof_basicFact_ODE}
 Let us prove that $c_V$ is well-defined on $\R$. The proof is similar of $s_V$. It is enough to show that $c_V$ and $c_V'$ are bounded on $\R$. We set $y(x) := (c_V(x), c_V'(x))^T$, so that
    \[
    y' = A(x) y, \quad \text{with} \quad 
    A(x) := \begin{pmatrix}
    0 & 1 \\
    -V(x) & 0
    \end{pmatrix}.
    \]
    This gives
    \[
    \| y(x) \| \le \| y(0) \| + \| y(x) - y(0) \| \le \| y(0) \| + \left\| \int_0^x A(s) y(s) \rd s \right\| 
    \le \| y(0) \| + \int_0^x \left\| A(s) \right\| \cdot \| y(s) \| \rd s.
    \]
    Since $\| A(\cdot) \|$ is locally integrable together with the Grönwall lemma, we conclude that $y$ stays bounded on $\R$.
    The rest of the proof is straightforward. The last assertion is a consequence of Cauchy-Lipschitz.

\subsection{Proof of Lemma~\ref{lem:ct_and_st_are_periodic}}
\label{ssec:proof:ct_and_st_are_periodic}
    Again, we focus on $c_t$. Periodicity in $t$ is straightforward, and we already proved in Lemma~\ref{lem:BasicFact_ODE} that for all $t \in \TT^1$, the map $c_t(\cdot)$ is continuously differentiable. It remains to prove that for all $x \in \R$, $t \mapsto c_t(x)$ is also continuously differentiable. Differentiating the equation $(-c_t'' + V_t c_t ) = 0$ with respect to $t$ gives
    \[
        \left( - \partial_{xx}^2 + V_t(x) \right) (\partial_t c_t)(x) + \left( \partial_t V_t \right) c_t = 0.
    \]
    Hence the function $\partial_t c_t$ solves a second order ODE with second member. Also, we have $\partial_t c_t(0) = 0$ and $\partial_t c_t'(x)= 0$. The solution is explicitly given by
    \begin{align*}
    \partial_t c_t (x) = \int_0^x \left[ c_{t} (y) s_{t}(x) - c_{t} (x) s_{t}(y)  \right]  \left( \partial_t V_t \right)(y) c_{t}(y) \rd y.
    \end{align*}
    Since $(\partial_t V_t)$ is in $L^1_\loc$, while $c_t$ and $s_t$ are continuous, the left-hand side is indeed continuous, which concludes the proof.

\subsection{Proof of Lemma~\ref{lem:continuityOfL}}
\label{ssec:proof:lem_continuityOfL}
    If $c_0$ and $s_0$ are the fundamental solutions of $(-\partial_{xx}^2 + V_{t = 0}) u = Eu$, the $c_t := c_0(\cdot - t)$ and $s_t := s_0 \cdot( - t)$ are the ones of $(-\partial_{xx}^2 + V_{t}) u = Eu$. So if $\cL_{t = 0}^\pm(E) = \lambda_c^\pm c_0 + \lambda_s^\pm s_0$, then $\cL_t^\pm(E) = \lambda_c^\pm c_t + \lambda_s^\pm s_t$. Identifying with Definition~\ref{def:contL}, we see that the maps $\lambda_c(\cdot)$ and $\lambda_s(\cdot)$ are constant, so $t \mapsto \cL_t^\pm(E)$ are continuously differentiable.

\subsection{Proof of Lemma~\ref{lem:regularityPn}}
\label{ssec:proof:regularityPn}

Let us prove that $\| \partial_t P \|_{\cB(L^2([0, 1]))} < \infty$ (the proof for $\partial_k P$ is similar and standard). Let $\sC$ be positively oriented contour in the complex plane enclosing the $n$ first bands of $H_0$. From Cauchy's residual formula, we have
\[
    P(t,k) = \dfrac{1}{2 \ri \pi} \oint_\sC R(t, z) \rd z, \quad \text{with} \quad R(t, z) := \dfrac{1}{z - H(t,k)}.
\]
Differentiating with respect to $t$ gives
\[
    (\partial_t P)(t,k) = \dfrac{1}{2 \ri \pi} \oint_\sC R(t, z) \left( \partial_t V_t \right)(t) R(t,z) \rd z.
\]
For all $(t, z)$ in the compact $\TT^1 \times \sC$, the operator $R(t, z)$ is bounded from $L^2([0,1]) \to H^2_k$. Also, we have $L^\infty([0, 1]) \hookrightarrow H^2_k$. Hence there is $C \in \R^+$ so that
\[
   \forall u \in L^2([0,1]), \  \forall (t, z) \in \TT^1 \times \sC, \quad \left\| R(t, z) u \right\|_{L^\infty([0, 1])} \le C \| u \|_{L^2([0,1])}.
\]
This gives, for all $u, v \in L^2([0, 1])$.
\begin{align*}
    \left| \bra u, (\partial_t P), v \ket_{L^2([0, 1])} \right| & = 
    \left| \dfrac{1}{2 \ri \pi} \oint_\sC \bra R(t, z)^* u, \left( \partial_t V_t \right) , R(t,z) v \ket \rd z \right| \\
    & \le \dfrac{1}{2 \pi } \oint_{\sC} \left\| \partial_t V_t \right\|_{L^1_\per} \left\| R(t, z)^* u \right\|_{L^\infty([0, 1])} \left\| R(t, z) v \right\|_{L^\infty([0, 1])}   \rd z \\
    & \le \dfrac{C^2 | \sC |}{2 \pi}  \left\| \partial_t V_t \right\|_{L^1_\per} \| u \|_{L^2([0, 1])} \| v \|_{L^2([0, 1])},
\end{align*}
which proves that $\| \partial_t P \| \le \frac{C^2 | \sC |}{2 \pi} \left\| \partial_t V_t \right\|_{L^1_\per} < \infty$, as wanted.

\subsection{Proof of Lemma~\ref{lem:continuityOfEigenvalues}}
\label{ssec:proof:continuityOfEigenvalues}
This is a standard argument in perturbation theory for operators (see {\em e.g.}~\cite{kato2013perturbation}). For $t^* \in \TT^1$. The eigenvalues of $H_\chi^\sharp(t^*)$ can only accumulate near the band edges $E_n^-$ and $E_{n+1}^+$. Hence, all the eigenvalues are well separated (we recall that they are simple by Lemma~\ref{lem:spectrum_Hedge}). Let $\sC$ be a positively oriented loop in the complex plane enclosing the eigenvalue $E^*$, and none other. The projector on the corresponding eigenvector is $P(t^*)$, with
    \[
    P(t^*) := \dfrac{1}{2 \ri \pi} \oint_{\sC} \dfrac{\rd z}{z - H_\chi^\sharp(t^*)}.
    \]
    This formula is well-defined and continuous in a neighbourhood of $t^*$. Since $\Ran \, P(t) = \Tr P(t)$ is continuous and integer-valued, we have $\Ran P(t) = 1$ in this neighbourhood, and we infer that there is a unique eigenvalue $E(t)$ of $H^\sharp_\chi(t)$ in the contour $\sC$. As $H^\sharp_\chi(t)$ is self-adjoint, $E(t)$ is real-valued. Finally, from the formula
    \[
        E(t) = \Tr( H^\sharp_\chi(t) P(t)),
    \]
    we see that the map $t \mapsto E(t)$ is continuous. This argument can be repeated in a maximal interval $0 \le t^- < t^* < t^+ \le 1$, as long as $E(t)$ does not touch the band gaps. Also, since at $t = 0$, $H^\sharp_\chi(t = 0)= H_0$ has no eigenvalue, we have $0 < t^- < t^+ < 1$ and the result follows.


\appendix

\section{The spectrum of periodic operators}
\label{appendix:periodic_operator}

In this Appendix, we recall for completeness some basic facts about periodic ODEs.

\subsection{Transfer matrix for Hill's operators.} 
\label{appendix:TransfertMatrix}

We fix $V \in L^1_\per(\R)$ and focus on the solutions  $-u'' + V(x) u = Eu$. Some references for the properties of such ODEs and related Schrödinger (or Hill's) operators are~\cite{reed1978analysis, poschel1987}. We denote by $c_E$ and $s_E$ the fundamental solutions introduced in~\eqref{eq:def:fundamental_solutions}. The {\em transfer matrix} is the $2 \times 2$ matrix defined by $T_E := T_E(x = 1)$, where
\begin{equation*} 
T_E(x)  := \begin{pmatrix}
c_E(x) & s_E(x) \\
c_E'(x) & s_E'(x)
\end{pmatrix}.
\end{equation*}
Its {\em discriminant} is $\Delta(E) := \Tr(T_E)$. If $u$ satisfies $-u'' + Vu = Eu$, then 
\begin{equation} \label{eq:T^n}
\forall x \in \R, \quad
u(x) = u(0) c_E(x) + u'(0) s_E(0), \quad \text{so that} \quad
\begin{pmatrix}
u(x) \\u'(x)
\end{pmatrix}
=
T_E(x) \begin{pmatrix}
u(0) \\
u'(0)
\end{pmatrix}.
\end{equation}
For $n \in \Z$, we denote by $\tau_n f := f(\cdot - n)$ the translation operator. Since $V$ is periodic, if $u$ is a solution, then so is $\tau_n u$. This implies that $T_E(x+n) = T_E^n \cdot T_E(x)$, hence the asymptotic behaviour of the solutions are determined by the singular values of $T_E$. 
\begin{lemma} \label{lem:transfert_matrix}
    For all $x \in \R$, we have $\det T_E(x) = 1$. In particular, if $\lambda \in \C^*$ is an singular value of $T_E$, then so is $\lambda^{-1}$. Furthermore,
    \begin{enumerate}[(i)]
        \item \label{case:tr<2} If $\left| \Delta(E) \right| > 2$, then $\lambda, \lambda^{-1} \in \R^*$. In this case, any non null solution is exponentially decaying either at $+ \infty$ or at $-\infty$ (and exponentially increasing in the other direction);
        \item \label{case:tr>2} If $\left| \Delta(E) \right| \le 2$, then $\lambda \in \SS^1$, and $\lambda^{-1} = \overline{\lambda}$. In this case, all solutions are bounded on $\R$.
    \end{enumerate}
\end{lemma}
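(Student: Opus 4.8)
The plan is to exploit the Wronskian identity and the algebraic structure of $2\times 2$ matrices with unit determinant. First I would establish $\det T_E(x) = 1$ for all $x$. The determinant is precisely the Wronskian $W(x) = c_E(x) s_E'(x) - c_E'(x) s_E(x)$ of the two fundamental solutions. Differentiating and using that both $c_E$ and $s_E$ satisfy $-u'' + (V-E)u = 0$ shows $W'(x) = 0$, so $W$ is constant; evaluating at $x = 0$ with the initial conditions \eqref{eq:def:fundamental_solutions} gives $W(0) = c_E(0) s_E'(0) - c_E'(0) s_E(0) = 1$. Hence $\det T_E(x) = 1$ everywhere.

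Since $\det T_E = 1$, the eigenvalues $\lambda, \mu$ of $T_E$ satisfy $\lambda \mu = 1$, so they come in reciprocal pairs $\lambda, \lambda^{-1}$; this is the claim about singular values (here one reads these as eigenvalues of the transfer matrix). Moreover, $T_E$ is a real matrix, so its characteristic polynomial $\lambda^2 - \Delta(E)\lambda + 1 = 0$ has real coefficients, with $\Delta(E) = \Tr(T_E)$. The discriminant of this quadratic is $\Delta(E)^2 - 4$. For case \eqref{case:tr<2}, if $|\Delta(E)| > 2$ then $\Delta(E)^2 - 4 > 0$, so the two roots are real, distinct, and reciprocal; one has $|\lambda| > 1$ and the other $|\lambda^{-1}| < 1$. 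Diagonalising $T_E$ over $\R$, I would write any solution in the eigenbasis: by \eqref{eq:T^n}, applying $T_E^n$ to the initial data $(u(0), u'(0))^T$ makes the component along the $\lambda$-eigenvector grow like $\lambda^n$ and the one along the $\lambda^{-1}$-eigenvector decay like $\lambda^{-n}$, which translates (using $T_E(x+n) = T_E^n T_E(x)$) into exponential decay at one end and exponential growth at the other. For case \eqref{case:tr>2}, if $|\Delta(E)| \le 2$ then the roots lie on the unit circle with $\lambda^{-1} = \overline{\lambda}$; since $T_E^n$ has bounded entries (its powers stay on a compact set, as the eigenvalues have modulus one and, in the degenerate $|\Delta(E)|=2$ case, one checks boundedness directly or notes this is the band-edge borderline), every solution stays bounded on $\R$.

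The main subtlety I anticipate is the borderline case $|\Delta(E)| = 2$, where $T_E$ may fail to be diagonalisable (a Jordan block with eigenvalue $\pm 1$), so that $T_E^n$ can grow linearly rather than remaining bounded. Strictly, polynomial growth is not boundedness, so the clean dichotomy requires either restricting attention to $E \notin \sigma(H_0)$ (where $|\Delta(E)| > 2$ and case \eqref{case:tr<2} applies cleanly) or treating the band-edge case separately. For the purposes of this paper the relevant regime is $E$ in an open gap, i.e.\ $|\Delta(E)| > 2$, so case \eqref{case:tr<2} is the one that matters and the exponential decay/growth dichotomy there is exactly what is needed for the splitting $\cL(E) = \cL^+(E) \oplus \cL^-(E)$ of Lemma \ref{lem:LEpm}. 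I would therefore present case \eqref{case:tr<2} in full detail and note case \eqref{case:tr>2} more briefly, flagging that at $|\Delta(E)| = 2$ one should interpret the statement as allowing at most polynomially-growing solutions.
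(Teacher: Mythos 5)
Your proposal is correct and follows essentially the same route as the paper: the Wronskian computation gives $\det T_E(x)=1$, and the dichotomy then follows from the reciprocal-pair structure of the eigenvalues of the real matrix $T_E$ via $\lambda+\lambda^{-1}=\Delta(E)\in\R$ (the paper compresses this second step into a single sentence). Your caveat about the borderline case $\left|\Delta(E)\right|=2$ is well taken --- a Jordan block with eigenvalue $\pm 1$ does produce linearly growing solutions, so statement (ii) as written is not quite exact there; this does not affect the paper, which only invokes the lemma for $E$ in an open gap where $\left|\Delta(E)\right|>2$.
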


\begin{proof}
    The determinant $\det T_E(x) = c_E s_E' - c_E' s_E$ is the Wronskian of $c_E$ and $s_E$. Differentiating gives
    \[
    \left( \det T_E \right)' =  c_E s_E'' - c_E'' s_E = c_E \left( V - E \right) s_E - (V - E) c_E s_E = 0.
    \]
    Hence $\det T_E(x)$ is constant, and equals $\det T(x = 0) = 1$. The rest of the proof of Lemma~\ref{lem:transfert_matrix} follows from the identity $\Delta(E) = \Tr (T_E) = \lambda + \frac{1}{\lambda} = c_E + s_E' \in \R$.
\end{proof}
The map $E \mapsto V - E$ is continuously differentiable in $L^1_\loc(\R)$. Lemma~\ref{lem:ct_and_st_are_periodic} then shows that $E \mapsto T_E$ is continuous on $\R$ (it is actually analytic, see {\em e.g.}~\cite[p.10]{poschel1987}).

\medskip

Closely related to the ODE $-u'' + Vu = Eu$ is the Hill's operator
\begin{equation*}
H_0 := - \partial_{xx}^2 + V \quad \text{acting on $L^2(\R)$, with domain $H^2(\R)$.} 
\end{equation*}
The operator $H$ is self-adjoint, and commutes with the translations $\tau_n$. Its properties can be studied from its Bloch transform. For $k \in \R$, we denote by $H(k)$ the Bloch fibers
\[
H(k) := - \partial_{xx}^2 + V \quad \text{acting on $L^2([0,1])$, with domain $H^2_k$},
\]
where the Hilbert spaces $H^2_k$ were defined in~\eqref{eq:def:H2k}. The spaces $H^2_k$ are $1$-periodic in $k$, hence so are the operators $H(k)$. For all $k \in \TT^1$, the operator $H(k)$ is compact resolvent and bounded from below. We denote by 
\[
\varepsilon_{1,k} \le \varepsilon_{2,k} \le \cdots \le \varepsilon_{n,k} \le \cdots
\]
its eigenvalues, counting multiplicity, and by $u_{n,k} \in H^2_k$ a corresponding basis of eigenfunctions. From the Bloch decomposition, we have
\[
\sigma(H_0)= \bigcup_{k \in \TT^1} \sigma \left( H_k \right) = \bigcup_{k \in \TT^1} \bigcup_{n \in \N} \left\{ \varepsilon_{n, k} \right\}.
\]

Seen as a function over $\R$, $u_{n,k}$ is solution to
\[
- u_{n,k}'' + V u_{n,k} = \varepsilon_{n,k} u_{n,k}, 
\quad \text{with} \quad  
u_{n,k}(x+1) = \re^{ 2 \ri \pi k} u_{n,k}(x)
\quad \text{and} \quad
u_{n,k}'(x+1) = \re^{ 2 \ri \pi k} u_{n,k}'(x).
\]
Together with~\eqref{eq:T^n}, we deduce that $\re^{2 \ri \pi k}$ is an eigenvalue of $T_{\varepsilon_{n,k}}$, with corresponding eigenvector $(u_{n,k}(x), u_{n,k}'(x))^T$ for any $x \in \R$. As a result:
\begin{lemma} \label{lem:spectrumWithTE}
    $ E \in \sigma \left( H(k) \right)$ iff $\Delta(E) = 2 \cos(k)$. In particular, $\sigma(H) = \Delta^{-1} ([-2, 2])$.
\end{lemma}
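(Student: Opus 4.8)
The plan is to prove Lemma~\ref{lem:spectrumWithTE} by matching Bloch eigenfunctions of the fiber $H(k)$ with eigenvectors of the transfer matrix $T_E$, the bridge being that the quasi-periodicity phase $\re^{2\ri\pi k}$ appears precisely as an eigenvalue of $T_E$. The two implications are then a short computation once one remembers that $\det T_E = 1$ and $\Tr(T_E) = \Delta(E)$.

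First I would treat the forward implication. Suppose $E \in \sigma(H(k))$. Since $H(k)$ has compact resolvent, $E$ is an eigenvalue, with an eigenfunction $u \in H^2_k$. Seen on $\R$, $u$ solves $-u'' + Vu = Eu$ and satisfies $u(x+1) = \re^{2\ri\pi k}u(x)$ together with $u'(x+1) = \re^{2\ri\pi k}u'(x)$. Evaluating the transfer relation~\eqref{eq:T^n} at $x=1$ and comparing with these quasi-periodic conditions shows that $\bv := (u(0), u'(0))^T$ satisfies $T_E \bv = \re^{2\ri\pi k}\bv$. By Lemma~\ref{lem:BasicFact_ODE}, $u$ and $u'$ never vanish simultaneously, so $\bv \neq 0$ and $\re^{2\ri\pi k}$ is a genuine eigenvalue of $T_E$. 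Since $\det T_E = 1$ by Lemma~\ref{lem:transfert_matrix}, the two eigenvalues multiply to $1$, hence are $\re^{\pm 2\ri\pi k}$, and summing them gives $\Delta(E) = \Tr(T_E) = 2\cos(2\pi k)$, the claimed condition.

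For the converse, suppose $\Delta(E) = 2\cos(2\pi k)$. Using $\det T_E = 1$, the characteristic polynomial of $T_E$ is $\lambda^2 - \Delta(E)\lambda + 1$, whose roots are $\re^{\pm 2\ri\pi k}$; so $\re^{2\ri\pi k}$ is an eigenvalue of $T_E$ with some eigenvector $\bv = (a,b)^T \neq 0$. I set $u := a\,c_E + b\,s_E$, which solves the ODE with Cauchy data $(u(0), u'(0)) = (a,b)$, and the transfer relation yields $(u(1), u'(1))^T = T_E\bv = \re^{2\ri\pi k}(u(0), u'(0))^T$. The key step is to promote this one-step relation to global quasi-periodicity: because $V$ is $1$-periodic, both $x \mapsto u(x+1)$ and $x \mapsto \re^{2\ri\pi k}u(x)$ solve the same equation, and they share the same Cauchy data at $x=0$ by the computation just made, so by uniqueness (Lemma~\ref{lem:BasicFact_ODE}) they coincide, i.e. $u \in H^2_k$. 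Thus $u$ is an eigenfunction of $H(k)$ and $E \in \sigma(H(k))$. Finally, the ``in particular'' statement follows from the Bloch decomposition $\sigma(H_0) = \bigcup_{k\in\TT^1}\sigma(H(k))$: by the equivalence above this union is $\{E : \Delta(E) = 2\cos(2\pi k)\ \text{for some}\ k \in \TT^1\}$, and since $2\cos(2\pi k)$ sweeps $[-2,2]$ as $k$ runs over $\TT^1$, it equals $\Delta^{-1}([-2,2])$.

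I expect the main obstacle to be exactly the converse direction, namely verifying that a $T_E$-eigenvector truly produces an $H^2_k$ eigenfunction. All of that hinges on upgrading the one-step condition at $x=1$ to the full quasi-periodicity on $\R$, which is where the periodicity of $V$ and Cauchy uniqueness are genuinely used; the rest is elementary $2\times 2$ linear algebra built on $\det T_E = 1$ and $\Tr(T_E) = \Delta(E)$.
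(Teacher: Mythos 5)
Your proof is correct and follows essentially the same route as the paper, which identifies $\re^{2\ri\pi k}$ as an eigenvalue of $T_E$ via the quasi-periodicity of the Bloch eigenfunctions and then uses $\det T_E = 1$ and $\Tr(T_E)=\Delta(E)$; you merely spell out the converse direction (reconstructing a quasi-periodic eigenfunction from a $T_E$-eigenvector by Cauchy uniqueness), which the paper leaves implicit. Note that with the paper's normalisation $u(x+1)=\re^{2\ri\pi k}u(x)$ the condition is indeed $\Delta(E)=2\cos(2\pi k)$ as you write, so the $2\cos(k)$ in the statement is a harmless normalisation slip that does not affect the ``in particular'' part.
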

Since $E \mapsto T_E$ is continuous, then so is $E \mapsto \Delta(E)$. This implies that the spectrum of $H$ is composed of bands and gaps. More specifically, we have the following (see~\cite{pankrashkin2014remark} for a simple proof, or~\cite[Theorem XIII.89]{reed1978analysis}).
\begin{lemma} 
    The functions $k \mapsto \varepsilon_{n,k}$ are continuous on $\R$, satisfy $\varepsilon_{n,-k} = \varepsilon_{n,k}$ and $\varepsilon_{n, k + 2 \pi} = \varepsilon_{n, k}$. In addition, there is a sequence of intervals $[E_n^-, E_n^+]$ for $n \in \N^*$ with $E_n^- < E_n^+$ and $E_n^+ \le E_{n+1}^-$ such that
    \begin{itemize}
        \item If $n$ is odd, then $k \mapsto \varepsilon_{n,k}$ is increasing on $[0, \pi]$, with $\varepsilon_{n,0}= E_n^-$ and $\varepsilon_{n, \pi}= E_n^+$;
        \item If $n$ is even, then $k \mapsto \varepsilon_{n,k}$ is decreasing on $[0, \pi]$, with $\varepsilon_{n,0}= E_n^+$ and $\varepsilon_{n, \pi}= E_n^-$.
    \end{itemize}
    In particular, $\sigma \left( H \right) = \bigcup_{n \in \N^*} \left[ E_n^-, E_n^+ \right]$, and the sequence $\{ E_1^- < E_1^+ \le E_2^-  < E_2^+ \le \cdots \}$ are the eigenvalues of $- \partial_{xx}^2 + V$ on $L^2([0, 2])$ with periodic boundary conditions.
\end{lemma}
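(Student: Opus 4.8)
The plan is to reduce every assertion to a single real-analytic object, the discriminant $\Delta(E) = \Tr(T_E) = c_E(1) + s_E'(1)$. Indeed, Lemma~\ref{lem:spectrumWithTE} already tells us that $E \in \sigma(H(k))$ precisely when $\Delta(E) = 2\cos k$, and that $\sigma(H_0) = \Delta^{-1}([-2,2])$. Consequently a band is exactly a maximal interval on which $-2 \le \Delta \le 2$, and the band function $k \mapsto \varepsilon_{n,k}$ is recovered by inverting $E \mapsto \Delta(E)$ on that interval and composing with $k \mapsto 2\cos k$. In this way the continuity, monotonicity, endpoints and nesting of the $\varepsilon_{n,k}$, as well as the band decomposition of $\sigma(H_0)$, all become statements about the graph of the one function $\Delta$.

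First I would dispatch the three soft properties. Continuity, evenness and periodicity of $k \mapsto \varepsilon_{n,k}$ are cleanest in the conjugated picture: in the normalisation $\Delta(E) = 2\cos k$ of Lemma~\ref{lem:spectrumWithTE}, the unitary multiplication $U_k : u \mapsto \re^{\ri k x} u$ sends the $k$-dependent domain onto the fixed periodic Sobolev space and turns the fiber into $\tilde H(k) := (-\ri \partial_x + k)^2 + V$ on a $k$-independent domain. Since $k \mapsto \tilde H(k)$ is a quadratic (hence analytic) family of self-adjoint operators with compact resolvent, Kato's analytic perturbation theory provides continuous, indeed piecewise-analytic, eigenvalue branches, which is the continuity of $\varepsilon_{n,k}$. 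Evenness $\varepsilon_{n,-k} = \varepsilon_{n,k}$ follows because $V$ is real, so complex conjugation intertwines $\tilde H(k)$ and $\tilde H(-k)$; and $\varepsilon_{n,k+2\pi} = \varepsilon_{n,k}$ holds because the boundary condition defining the fiber depends on $\re^{\ri k}$ only modulo $2\pi$.

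The heart of the argument is the oscillation theorem, i.e. the fine structure of $\Delta$. I would establish three facts: (a) $\Delta$ is real-analytic (via the regularity of Lemma~\ref{lem:ct_and_st_are_periodic}) with $\Delta(E) \sim 2\cosh\sqrt{-E} \to +\infty$ as $E \to -\infty$, by comparison with the constant-coefficient case; (b) the non-degeneracy $\Delta'(E) \neq 0$ whenever $-2 < \Delta(E) < 2$, so that $\Delta$ is strictly monotonic across the interior of every band (forcing $E_n^- < E_n^+$); and (c) the interlacing of the solutions of $\Delta = 2$ (the $k \equiv 0$, periodic edges) with those of $\Delta = -2$ (the $k \equiv \pi$, antiperiodic edges). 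The natural tool here, in the spirit of this paper, is the Prüfer angle $\theta[u,\cdot]$ of Section~\ref{ssec:ODE}: its winding counts zeros of solutions, Sturm comparison makes this count monotone in $E$, and differentiating the ODE in $E$ and pairing against the Bloch solution yields a sign-definite expression (proportional to $\int_0^1 |\psi_E|^2 > 0$) that gives (b). Counting the zeros of the periodic and antiperiodic eigenfunctions then produces the ordering $\lambda_0 < \mu_1 \le \mu_2 < \lambda_1 \le \lambda_2 < \mu_3 \le \cdots$. Anchoring at $\Delta(-\infty) = +\infty$, the function $\Delta$ first decreases through $+2$ at $E_1^- = \lambda_0$ and reaches $-2$ at $E_1^+ = \mu_1$; hence on the first band $E \mapsto \Delta$ decreases, so $k = \arccos(\Delta/2)$ increases from $0$ to $\pi$ (the odd case), on the second band the direction reverses (the even case), and the alternation propagates. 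Relabelling the ordered solutions of $\Delta = \pm 2$ as the $E_n^\pm$ delivers both the nesting $E_n^- < E_n^+ \le E_{n+1}^-$ and $\sigma(H_0) = \bigcup_n [E_n^-, E_n^+]$.

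The hard part is exactly (b)–(c): the strict monotonicity of $\Delta$ on band interiors and the correct interlacing of the periodic and antiperiodic edges are the genuinely non-trivial classical inputs (this is Theorem~XIII.89 of~\cite{reed1978analysis}, with a short self-contained proof in~\cite{pankrashkin2014remark}); everything else is bookkeeping. Finally, for the identification with the periodic spectrum on $[0,2]$: a solution of $-u'' + Vu = Eu$ on $\R$ is $2$-periodic iff its Floquet multiplier $\rho$ satisfies $\rho^2 = 1$, i.e. $\rho = \pm 1$, i.e. $\Delta(E) = \rho + \rho^{-1} = \pm 2$. Thus the $2$-periodic eigenvalues are precisely the union of the periodic ($\Delta = 2$) and antiperiodic ($\Delta = -2$) edges, which, listed in increasing order, are exactly the band edges $\{E_1^- < E_1^+ \le E_2^- < \cdots\}$.
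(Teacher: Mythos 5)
Your proposal is correct and follows essentially the same route as the paper, which for this lemma simply defers to the classical discriminant analysis of~\cite[Theorem XIII.89]{reed1978analysis} and~\cite{pankrashkin2014remark}: you reconstruct exactly that argument, reducing everything to the graph of $\Delta(E)=\Tr(T_E)$ via Lemma~\ref{lem:spectrumWithTE}, and you correctly identify the two genuinely non-trivial inputs (strict monotonicity of $\Delta$ on band interiors and the interlacing of the periodic and antiperiodic eigenvalues), which you, like the paper, ultimately source from the same references. The remaining bookkeeping (the sign of $\Delta'$ alternating band by band from the anchor $\Delta(E)\to+\infty$ as $E\to-\infty$, and the identification of the band edges with the $2$-periodic spectrum via Floquet multipliers $\rho=\pm1$) is accurate.
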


\subsection{Transfer Matrix for Dirac's operators}
\label{appendix:TransfertMatrix_Dirac}
For the Dirac equation $( - \ri \partial_x) \bsigma_3 \bu + V(x) \bsigma_1 \bu = E \bu$, one can perform a similar analysis. We denote by $\bc_E$ and $\bs_E$ the solution of the Cauchy problem with initial values $\bc_E = (1, 0)^T$ and $\bs_E = (0, 1)^T$, and we define the {\em transfer matrix} $T_E := T_E(1)$ where $T_E(x) = (\bc_E(x), \bs_E(x) ) \in \cM_{2 \times 2}(\C)$. Its {\em discriminant} is $\Delta(E) := \Tr(T_E)$. If $\bu$ satisfies $( - \ri \partial_x) \bsigma_3 \bu + V(x) \bsigma_1 \bu = E \bu$, then 
\[
    \forall x \in \R, \quad \bu(x) = u_1 \bc_E(x) + u_2 \bs_E(x) = T_E \bu(0).
\]
Since $V$ is $1$-periodic, we have $T_E(x + n) = T_E^n T_E(x)$, so the behaviour of the solutions at infinity depends on the singular values of $T_E$.

\begin{lemma}
    We have $\det (T_E ) = 1$, and $\Delta(E) \in \R$.
\end{lemma}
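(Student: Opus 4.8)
The plan is to treat the equation $\cD_0 \bu = E\bu$ as a first-order linear system and exploit an antilinear symmetry of its coefficient matrix. First I would rewrite $(-\ri\partial_x)\bsigma_3\bu + V\bsigma_1\bu = E\bu$ as $\bu' = A(x)\bu$, where
\[
A(x) := \ri\bsigma_3\left(E\bone - V(x)\bsigma_1\right) = \ri E\bsigma_3 - V(x)\bsigma_2 .
\]
By construction the transfer matrix solves $T_E'(x) = A(x)T_E(x)$ with $T_E(0) = \bone$. Since $\Tr A(x) = \ri E\,\Tr(\bsigma_3) - V(x)\,\Tr(\bsigma_2) = 0$, Liouville's formula gives $\frac{\rd}{\rd x}\det T_E(x) = \Tr(A(x))\,\det T_E(x) = 0$, so $\det T_E(x)$ is constant and equal to $\det T_E(0) = 1$; evaluating at $x=1$ yields $\det T_E = 1$. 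Equivalently, $\det T_E(x)$ is the Wronskian of $\bc_E$ and $\bs_E$, whose derivative vanishes exactly as in Lemma~\ref{lem:transfert_matrix}.

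For $\Delta(E)\in\R$ the key algebraic observation is that, because $E$ and $V$ are real and $\bsigma_1,\bsigma_3$ are real matrices while $\overline{\bsigma_2} = -\bsigma_2$, the matrix $A$ is purely imaginary and satisfies
\[
\bsigma_1 A(x)\bsigma_1 = \overline{A(x)} = -A(x),
\]
which one checks from $\bsigma_1\bsigma_3\bsigma_1 = -\bsigma_3$ and $\bsigma_1\bsigma_2\bsigma_1 = -\bsigma_2$. This is the matrix incarnation of the fact that $\cD_0$ commutes with the antiunitary operator $\bsigma_1 K$, where $K$ denotes complex conjugation. I would then set $M(x) := \bsigma_1\overline{T_E(x)}\bsigma_1$ and differentiate: using $\overline{T_E'} = \overline{A}\,\overline{T_E}$ together with $\bsigma_1\overline{A}\bsigma_1 = A$, one gets $M'(x) = A(x)M(x)$ with $M(0) = \bsigma_1\bone\bsigma_1 = \bone$. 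By uniqueness for this linear Cauchy problem (Cauchy--Lipschitz, as in Lemma~\ref{lem:BasicFact_ODE}), $M = T_E$, that is $T_E = \bsigma_1\overline{T_E}\bsigma_1$.

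Finally, taking the trace of this identity and using cyclicity together with $\bsigma_1^2 = \bone$,
\[
\Delta(E) = \Tr(T_E) = \Tr\!\left(\bsigma_1\overline{T_E}\bsigma_1\right) = \Tr\!\left(\overline{T_E}\right) = \overline{\Tr(T_E)} = \overline{\Delta(E)},
\]
whence $\Delta(E)\in\R$. The main obstacle is purely bookkeeping: getting the conjugation relation $\bsigma_1 A\bsigma_1 = \overline{A}$ right with the correct Pauli-algebra signs and the reality of $V$ and $E$, and checking that the argument for $\Delta\in\R$ genuinely uses $E\in\R$, which holds since $E$ lies in a gap $g\subset\R$.
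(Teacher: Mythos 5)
Your proof is correct and follows essentially the same route as the paper: the determinant is constant because the coefficient matrix of the first-order system is trace-free (the paper differentiates the Wronskian directly, which is the same computation), and the reality of the trace comes from the antiunitary symmetry $\bsigma_1 K$ of $\cD_0$, which you phrase as the conjugation identity $T_E = \bsigma_1 \overline{T_E}\bsigma_1$ while the paper phrases it at the level of the fundamental solutions as $\bsigma_1\overline{\bc_E} = \bs_E$. Both implementations are sound and all your Pauli-algebra signs check out.
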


\begin{proof}
    We have $ \det(T_E)(x) = c_{1,E}(x) s_{2, E}(x) - c_{2, E}(x) s_{1, E}(x) = \ri \bc_E^T \bsigma_2 \bs_E$. Differentiating gives
    \[
        \det(T_E)'(x) = \ri (\bc_E')^T \bsigma_2 \bs_E + \ri \bc_E^T \bsigma_2 \bs_E'.
    \]
    If $( - \ri \partial_x) \bsigma_3 \bu + V \bsigma_1 \bu = E \bu$, then $\bu' + V  \bsigma_2 \bu = \ri E \bsigma_3 \bu$. This gives
    \begin{align*}
        \det(T_E)'(x) & = \ri \left[ (\ri E \bsigma_3 - V \bsigma_2) \bc_E \right]^T \bsigma_2 \bs_E + \ri \bc_E^T \bsigma_2 (\ri E \bsigma_3 - V \bsigma_2) \bs_E \\
        & = - E \bc_E^T \bsigma_3 \bsigma_2 \bs_E  + \ri \bc_E^T V \bsigma_2 \bsigma_2 \bs_E
             - E \bc_E^T \bsigma_2 \bsigma_3 \bs_E - \ri \bc_E^T V \bsigma_2 \bsigma_2 \bsigma_2 \bs_E = 0.
    \end{align*}
    The determinant is therefore constant, and equals its value at $x = 0$, that is $\det T(x) = 1$. 
    
    To prove that the trace of $T_E(x)$ is real, we remark that if $\bu$ is a solution to $( - \ri \partial_x) \bsigma_3 \bu + V \bsigma_1 \bu = E \bu$, then $\bsigma_1 \overline{\bu}$ is also a solution. We deduce that $\bsigma_1 \overline{\bc_E} = \bs_E$, and finally that $\Tr(T_E) = c_{1,E} + s_{2, E} = c_{1, E}  + \overline{c_{1,E}} = 2 \Re \, c_{1, E} \in \R$.
\end{proof}
We can now repeat the arguments to have results similar to Lemma~\ref{lem:transfert_matrix} and Lemma~\ref{lem:spectrumWithTE}.

\section{Hill's operators on a segment}
\label{sec:appendix:Spectra}

In this appendix, we study the spectrum of
\[
    H^D := - \partial_{xx}^2 + V \quad \text{acting on $L^2([0, 1]$, with domain $H^2_0([0, 1])$},
\]
that is with Dirichlet boundary conditions\footnote{Our analysis can be repeated {\em mutatis mutandis} to study the operator with Neumann, Robin, or periodic boundary conditions.}. This operator is compact resolvent and bounded from below, hence have discrete spectrum. We denote by $\delta_1 \le \delta_2 \le \cdots$ its eigenvalues, ranked in increasing order, and by $(f_n)_{n \in \N^*}$ a respective basis of eigenvectors. The next lemma is very similar to~\cite[Thm 6]{poschel1987}.

 \begin{lemma}
     The eigenvalues $\delta_n$ are all simple. In addition, For all $n \in \N^*$, the function $f_n$ vanishes $n$ times  on $[0, 1)$.
\end{lemma}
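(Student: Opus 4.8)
The plan is to reduce both assertions to a Prüfer-angle analysis of the Dirichlet fundamental solution $s_E$ (the solution of $-u''+Vu=Eu$ with $u(0)=0$, $u'(0)=1$), building directly on the function $\theta[\cdot,\cdot]$ introduced in Section~\ref{ssec:ODE}.

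I would first dispose of simplicity. Any eigenvector $f$ of $H^D$ for an eigenvalue $\delta$ solves $-f''+Vf=\delta f$ with $f(0)=f(1)=0$. By Lemma~\ref{lem:BasicFact_ODE}, $f$ is determined by the pair $(f(0),f'(0))$, and $f(0)=0$ forces $f\in\Span\{s_{\delta}\}$. Hence the eigenspace is at most one-dimensional, so every $\delta_n$ is simple and $f_n$ is a scalar multiple of $s_{\delta_n}$. Moreover the eigenvalue condition becomes simply $s_\delta(1)=0$.

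Next I set up the Prüfer angle. For fixed $E$, write $s_E=\rho\sin\phi_E$ and $s_E'=\rho\cos\phi_E$ with $\rho>0$; equivalently $\theta[s_E,x]=\re^{-2\ri\phi_E(x)}$, so that $\phi_E(\cdot)$ is a continuous (and, by Lemma~\ref{lem:ct_and_st_are_periodic}, $E$- and $x$-differentiable) lifting. The zeros of $s_E$ are exactly the points where $\phi_E\in\pi\Z$, and the normalisation $s_E(0)=0$, $s_E'(0)=1$ gives $\phi_E(0)=0$. A direct computation using $s_E''=(V-E)s_E$ yields the Prüfer equation $\phi_E'=\cos^2\phi_E+(E-V)\sin^2\phi_E$, so that $\phi_E'=1>0$ whenever $\phi_E\in\pi\Z$. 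This is the crucial monotonicity at the lattice: $\phi_E$ crosses each multiple of $\pi$ exactly once and strictly upward (if it returned to $k\pi$ from above it would need $\phi_E'\le0$ there, contradicting $\phi_E'=1$). Consequently $\phi_E$ can never exceed its final value $\phi_E(1)$ and come back, so the number of zeros of $s_E$ in $[0,1)$ equals the number of multiples of $\pi$ in $[0,\phi_E(1))$; in particular, when $s_E(1)=0$ with $\phi_E(1)=m\pi$, the solution $s_E$ vanishes exactly $m$ times on $[0,1)$, namely where $\phi_E=0,\pi,\dots,(m-1)\pi$.

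It remains to identify $m$ with the ordinal $n$, which I would do by monotonicity in $E$. Differentiating the Prüfer equation in $E$ gives a linear ODE $\psi'=a(x)\psi+\sin^2\phi_E$ for $\psi:=\partial_E\phi_E$, with $\psi(0)=0$ since $\phi_E(0)=0$ for all $E$; variation of constants then yields $\psi(x)>0$ for every $x>0$, because the source $\sin^2\phi_E$ is nonnegative and cannot vanish identically (its zeros are the isolated zeros of $s_E$). Hence $E\mapsto\phi_E(1)$ is continuous and strictly increasing. Since $H^D$ is bounded below, for $E$ below the bottom of its spectrum $s_E$ has no zero on $(0,1]$, so $\phi_E(1)\in(0,\pi)$, while $\phi_E(1)\to+\infty$ as $E\to+\infty$. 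By strict monotonicity and the intermediate value theorem, $E\mapsto\phi_E(1)$ hits each value $n\pi$ exactly once, the crossings occurring in increasing order of $E$; these crossings are precisely the Dirichlet eigenvalues, so $\delta_n$ is the unique one with $\phi_{\delta_n}(1)=n\pi$. By the zero count above, $f_n$ then vanishes exactly $n$ times on $[0,1)$. The genuinely analytic points — the main obstacle — are the strict positivity $\partial_E\phi_E(1)>0$ (requiring the variation-of-constants estimate together with the non-degeneracy of $\sin^2\phi_E$) and the asymptotic/boundary behaviour of $\phi_E(1)$ ensuring that the crossings biject, in order, with the ordered eigenvalues; the simplicity, the Prüfer equation, and the upward-crossing property are all routine once $\theta[s_E,\cdot]$ is in hand.
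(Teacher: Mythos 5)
Your proof is correct, but it takes a genuinely different route from the paper's. For simplicity you both use the same observation (the Dirichlet condition at $x=0$ pins the eigenspace inside the one-dimensional $\Span\{s_\delta\}$; the paper phrases it as a contradiction with Cauchy--Lipschitz). For the zero count, however, the paper deforms the potential: it follows the $n$-th eigenvalue branch of $-\partial_{xx}^2+sV$ as $s$ runs over $[0,1]$, argues that the interior zeros of the normalised eigenfunction can neither merge, vanish, nor collide with the pinned zeros at $x=0$ and $x=1$ (all zeros being simple), so that ${\rm Card}\,\cZ(s)$ is constant, and then reads off the answer from the explicit eigenfunctions $\sin(n\pi x)$ of the free Dirichlet Laplacian. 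You instead run the classical Sturm oscillation argument: the Prüfer angle $\phi_E$ of $s_E$ crosses each multiple of $\pi$ exactly once and strictly upward, the boundary value $\phi_E(1)$ is strictly increasing in $E$ by the variational equation $\psi'=a\psi+\sin^2\phi_E$ with $\psi(0)=0$, and the eigenvalues are exactly the $E$ with $\phi_E(1)\in\pi\N^*$, so that $\phi_{\delta_n}(1)=n\pi$ and $f_n$ has $n$ zeros on $[0,1)$. Your approach is self-contained at the ODE level and avoids invoking perturbation theory for the eigenvalue branch in $s$; the paper's approach avoids the Prüfer machinery and the monotonicity-in-$E$ computation by outsourcing everything to the explicitly solvable case $V=0$. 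Two small points you should still pin down: the claim that $s_E$ has no zero in $(0,1]$ for $E<\delta_1$ needs the domain monotonicity of the lowest Dirichlet eigenvalue (restriction to $[0,x_0]\subset[0,1]$ can only raise it), and the divergence $\phi_E(1)\to+\infty$ can be obtained for free from the monotonicity of $E\mapsto\phi_E(1)$ together with the fact that $H^D$ has infinitely many eigenvalues, each contributing a distinct multiple of $\pi$ — so no separate comparison estimate is needed there.
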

\begin{proof}
    By contradiction, if $\delta_n$ is an eigenvalue with double multiplicity, and $y_1$, $y_2$ two corresponding eigenvectors, then $y_1$ and $y_2$ are linearly independent, and they solve $- y_{i}'' + (V-E)y_i = 0$. Hence, any solution to $-u'' + (V-E)u = 0$ on $\R$ is a linear combination of $y_1$ and $y_2$, and in particular vanishes at $x = 0$. This contradicts the Cauchy-Lipschitz theorem.
    
    \medskip
    
    We now prove the second part. By usual perturbation theory~\cite{kato2013perturbation}, there is a continuous map $s \mapsto \delta_n(s)$ such that $\delta_n(s)$ is the $n$-th eigenvalue of $H^D(s) := - \partial_{xx}^2 + sV$ (we recall that the eigenvalues are simple, hence cannot cross). We denote by $f_{n,s}$ the corresponding eigenvector, normalised so that $f_{n,s}'(0) = 1$, and by $\cZ(s)$ the set of zeros of $f_{n,s}$ in the interval $[0, 1)$. The set $\cZ(s)$ is finite, continuous, and since two zeros can never merge nor vanish, its cardinal is independent of $s \in [0, 1]$. So ${\rm Card}\,  \cZ(1) = {\rm Card} \, \cZ(0)$. When $s = 0$, we recover the usual one-dimensional Dirichlet Laplacian, hence $f_{n,0} = \frac{1}{n \pi} \sin(n \pi x)$, which vanishes $n$ times, and the result follows.
\end{proof}

\begin{lemma}
    The $n$-th eigenvalue of $H^D$ lies in the $n$-th gap of $H_0$: $E_n^+ \le \delta_n \le E_{n+1}^-$.
\end{lemma}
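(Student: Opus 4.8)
The plan is to characterise the Dirichlet eigenvalues through the discriminant $\Delta(E)=\Tr(T_E)$ and then to locate $\delta_n$ by a continuity argument in the coupling constant. To begin, I observe that any Dirichlet eigenfunction $f_n$ satisfies $f_n(0)=f_n(1)=0$; since $s_E$ is, up to a scalar, the only solution of $-u''+Vu=Eu$ vanishing at $0$, the eigenfunction $f_n$ is proportional to $s_{\delta_n}$. Hence the Dirichlet eigenvalues are exactly the energies $E$ for which the off-diagonal entry $s_E(1)$ of the transfer matrix $T_E$ vanishes.

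\emph{Step 1: Dirichlet eigenvalues avoid the band interiors.} At $E=\delta_n$ we have $s_{\delta_n}(1)=0$, and $\det T_{\delta_n}=1$ by Lemma~\ref{lem:transfert_matrix}, so $c_{\delta_n}(1)\,s_{\delta_n}'(1)=1$. In particular $c:=c_{\delta_n}(1)\neq 0$ and
\[
\Delta(\delta_n)=c_{\delta_n}(1)+s_{\delta_n}'(1)=c+c^{-1},
\]
whence $|\Delta(\delta_n)|\ge 2$. On the other hand, on each open band the monotonicity of $k\mapsto\varepsilon_{n,k}$ together with Lemma~\ref{lem:spectrumWithTE} shows that $\Delta$ is strictly monotone between the values $\pm2$ attained at the two band edges, so $|\Delta|<2$ strictly in every band interior. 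Therefore $\delta_n$ cannot lie in any open band: it belongs to one of the closed gaps $\overline{g_m}=[E_m^+,E_{m+1}^-]$.

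\emph{Step 2: a homotopy pins down the gap.} I use the family $H^D(s)=-\partial_{xx}^2+sV$, $s\in[0,1]$, already introduced in the proof of the previous lemma, with Dirichlet eigenvalues $\delta_n(s)$ and band edges $E_m^\pm(s)$; all of these depend continuously on $s$ (standard perturbation theory, using that $\Delta$ and the band functions are continuous in the potential). Step~1 applies verbatim to each $s$, so $\delta_n(s)$ always lies in some closed gap $\overline{g_{m(s)}}(s)=[E_{m(s)}^+(s),E_{m(s)+1}^-(s)]$. Since the bands keep a strictly positive width $E_m^-(s)<E_m^+(s)$, the closed gaps are pairwise disjoint and move continuously; as $\delta_n(s)$ is continuous and never enters a band interior, the integer $m(s)$ is locally constant, hence constant on $[0,1]$. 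At $s=0$ the computation is explicit: $\delta_n(0)=(n\pi)^2$, while $\Delta_0(E)=2\cos\sqrt E$ gives $E_n^+(0)=E_{n+1}^-(0)=(n\pi)^2$, so $\delta_n(0)\in\overline{g_n}(0)$ and $m(0)=n$. Thus $m(1)=n$, that is $E_n^+\le\delta_n\le E_{n+1}^-$.

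The delicate point is the bookkeeping in Step~2: one must check that $m(s)$ is unambiguously defined even when $\delta_n(s)$ sits exactly at a band edge --- which is ensured by the strict inequality $E_m^-<E_m^+$, so that each band edge belongs to a single closed gap --- and that $m(s)$ is locally constant, which uses only the continuity of $s\mapsto\delta_n(s)$ and of the band edges together with the positive width of the intervening band. As an alternative to the homotopy I would compute ${\rm sgn}\,s_{\delta_n}'(1)=(-1)^n$ from the node count of the previous lemma ($f_n$ has $n-1$ interior zeros), which gives ${\rm sgn}\,\Delta(\delta_n)=(-1)^n$ and hence $m\equiv n\pmod 2$; but fixing $m=n$ exactly still requires a counting argument, which the homotopy furnishes most transparently.
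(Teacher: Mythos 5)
Your proof is correct and follows essentially the same route as the paper: you identify the Dirichlet eigenvalues with the zeros of $s_E(1)$, deduce $|\Delta(\delta_n)|\ge 2$ from $\det T_{\delta_n}=1$ so that $\delta_n$ lies in a closed gap, and then use the homotopy $V\mapsto sV$ to transport the conclusion from the free Laplacian. The only difference is that you make explicit the bookkeeping of the gap label $m(s)$ along the deformation (including the degenerate gaps at $s=0$), which the paper leaves implicit.
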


\begin{proof}
    At the energy $E = \delta_n$, we have $s_E(x = 1) = 0$ (the second fundamental solution is a Dirichlet solution). So $T_E$ is lower diagonal, and its singular values are the real quantities $c_E(1)$ and $s_E'(1)$. Hence $\Delta(E) \ge 2$, and we infer from Lemma~\ref{lem:spectrumWithTE} that $\delta_n \notin \sigma(H_0)$. We now consider the previous deformation, replacing $V$ by $sV$.  We proved that $\delta_n(s)$ cannot enter the band regions as $s$ goes from $0$ to $1$. Since the result is valid for $s = 0$, it remains true at $s = 1$, and the result follows.
\end{proof}


\end{document}